\renewcommand\footnotetextcopyrightpermission[1]{}
\newtheorem{lemma}{Lemma}[section]
\definecolor{commentgreen}{RGB}{2,112,10}
\definecolor{eminence}{RGB}{108,48,130}
\definecolor{weborange}{RGB}{255,165,0}
\definecolor{frenchplum}{RGB}{129,20,83}
\definecolor{todocolor}{rgb}{0.8,0,0}
\newcommand{\TODO}[1]{}
\newcommand{\TODO}[1]{{\color{todocolor}#1}}
\pgfplotsset{compat=1.14}
                \let\value=\pgfmathresult
                \xdef\temp{%
                    \noexpand\pgfkeysalso{%
                        @cell content={%
                            \noexpand\cellcolor[rgb]{\pgfmathresult}%
                            \noexpand\definecolor{mapped color}{rgb}{\pgfmathresult}%
                            \ifx\textcolorvalue\empty
                            \else
                                \noexpand\color{\textcolorvalue}%
                            \fi
                            \the\toks0 %
                        }%
                    }%
                }%
  \providecommand\BibTeX{{%
    \normalfont B\kern-0.5em{\scshape i\kern-0.25em b}\kern-0.8em\TeX}}}
\begin{document}

\fancyhead{}
\fancyfoot{}

\title{Sparse Tensor Transpositions}

\author{Suzanne Mueller}
\affiliation{\institution{MIT CSAIL}}
\email{suzmue@csail.mit.edu}

\author{Peter Ahrens}
\affiliation{\institution{MIT CSAIL}}
\email{pahrens@csail.mit.edu}

\author{Stephen Chou}
\affiliation{\institution{MIT CSAIL}}
\email{s3chou@csail.mit.edu}

\author{Fredrik Kjolstad}
\affiliation{\institution{Stanford University}}
\email{kjolstad@stanford.edu}

\author{Saman Amarasinghe}
\affiliation{\institution{MIT CSAIL}}
\email{saman@csail.mit.edu}


\begin{abstract}
We present a new algorithm for transposing sparse tensors called \text{\sc Quesadilla}. The algorithm converts the sparse tensor data structure to a list of coordinates and sorts it with a fast multi-pass radix algorithm that exploits knowledge of the requested transposition and the tensors input partial coordinate ordering to provably minimize the number of parallel partial sorting passes. We evaluate both a serial and a parallel implementation of \text{\sc Quesadilla} on a set of 19 tensors from the FROSTT collection, a set of tensors taken from scientific and data analytic applications. We compare \text{\sc Quesadilla} and a generalization, \textsc{Top-2-sadilla} to several state of the art approaches, including the tensor transposition routine used in the SPLATT tensor factorization library. In serial tests, \text{\sc Quesadilla} was the best strategy for 60\% of all tensor and transposition combinations and improved over SPLATT by at least 19\% in half of the combinations. In parallel tests, at least one of \text{\sc Quesadilla} or \text{\sc{Top-2-sadilla}} was the best strategy for 52\% of all tensor and transposition combinations.
\end{abstract}


\begin{CCSXML}
    <ccs2012>
    <concept>
    <concept_id>10002950.10003705.10011686</concept_id>
    <concept_desc>Mathematics of computing~Mathematical software performance</concept_desc>
    <concept_significance>500</concept_significance>
    </concept>
    <concept>
    <concept_id>10003752.10003809.10010031.10010033</concept_id>
    <concept_desc>Theory of computation~Sorting and searching</concept_desc>
    <concept_significance>500</concept_significance>
    </concept>
    <concept>
    <concept_id>10011007.10011006.10011041.10011047</concept_id>
    <concept_desc>Software and its engineering~Source code generation</concept_desc>
    <concept_significance>100</concept_significance>
    </concept>
    </ccs2012>
\end{CCSXML}

\ccsdesc[500]{Mathematics of computing~Mathematical software performance}
\ccsdesc[500]{Theory of computation~Sorting and searching}
\ccsdesc[100]{Software and its engineering~Source code generation}

\keywords{Sparse Tensors, Transposition, Sorting, COO, Radix Sort}

\maketitle

\section{Introduction}\label{sec:introduction}

Tensors generalize vectors and matrices to any number of dimensions. Tensors used in computation are often sparse, which means many of the values are zero. To take advantage of the large number of zeroes in the tensor, we use sparse formats that allow the zeroes to be compressed away. These formats range from a simple list of coordinates to complicated data structures such as Compressed Sparse Row (CSR) \cite{eisenstat_yale_1982}, Doubly Compressed Sparse Row (DCSR) \cite{buluc_representation_2008}, Block Compressed Sparse Row (BCSR) \cite{im_optimizing_2001}, and Compressed Sparse Fiber (CSF) \cite{smith_splatt:_2015}. These formats have a natural ordering of their dimensions that provides a lexicographical ordering of the tensor nonzeros. In a sorted list of coordinates, the order of the sorting keys determines this lexicographic ordering.

Tensor algebra is used to compute with data stored in tensors. These multidimensional computations need to access the nonzero entries in one or more tensors, compute, and store the results. Accessing the nonzero entries requires some traversal of the tensor. However, unlike for dense tensors, traversing the nonzeros of a sparse tensor in different lexicographical orderings may be asymptotically more expensive than the natural lexicographical ordering. Therefore, it is often faster to first transpose input tensors by reordering their dimensions before executing tensor expressions. This way, the tensor can be accessed naturally in the expression itself.

Tensor transposition is ubiquitous in data processing. Anytime multiple tensor expressions are composed and the output of one expression must be used as an input to the next, with a different index ordering and possibly a different sparse format, we need to transpose. For example, element-wise operations between tensors without matching index orderings (thus requiring transposition as a bottleneck) is listed as one of the five benchmark operations in the Parallel Sparse Tensor Algorithm Benchmark Suite (PASTA) \cite{li_pasta_2019}. Sparse tensor transposition may also occur when several different orderings of input data are required for efficient operation, but the space is not available to hold all of them. Such a situation might arise when using an alternating least squares method for canonical polyadic decomposition \cite{smith_splatt:_2015}. 

Prior work has focused extensively on dense tensor transpositions \cite{ruetsch_optimizing_2009, sung_dl_2012, kaushik_efficient_1993, sung_in-place_2014,catanzaro_decomposition_2014, gustavson_parallel_2012, springer_ttc_2017, karlsson_blocked_2009, vladimirov_multithreaded_2013}; we refer readers to \cite{springer_ttc_2017} for a summary. Sparse matrix and tensor transposition have received relatively little attention \cite{wang_parallel_2016}. A fast CSR sparse matrix transposition algorithm is proposed in \cite{gustavson_two_1978}, and improvements are proposed in \cite{wang_parallel_2016,gonzalez-mesa_parallelizing_2013}. Further
variations on sparse matrix transposition are discussed in \cite{weng_parallel_2013, weng_designing_2013, guo_designing_2016, vazquez_new_2011, cameron_two_1993}. None of these techniques, however, readily generalize to sparse tensor transposition with input tensors of arbitrary ranks.

Tensors are often stored as a list of coordinates of nonzeros. If the coordinates are ordered lexicographically, adjacent coordinates may share the same indices in the first several modes. The Compressed Sparse Fiber (CSF) format \cite{smith_splatt:_2015} compresses these duplicate nonzeros using a tree-like storage format. In CSF, nodes represent indices, leaves represent nonzeros and paths from root to leaf represent coordinates. The children of each node are ordered. The matrix case of CSF is called Compressed Sparse Row (CSR).

More complicated sparse tensor formats like CSF often have similar ordering constraints, and require access to the coordinates in some lexicographic order in order to construct the tensor. The current state of the art for transposing sparse tensors involves converting the sparse tensor into a list of coordinates, sorting the list of coordinates, and finally packing the list of coordinates into the desired sparse tensor format \cite{smith_splatt:_2015}.

This approach reduces the problem of transposing a tensor into a problem of sorting a list of coordinates. However, the lists of coordinates have partial orderings we can use to accelerate the sorting algorithms. Consider the example matrix in Figure \ref{fig:matrix-example}. In order to transpose the matrix, the column coordinates must be ordered lexicographically before the row coordinates. This could be accomplished by sorting with the column coordinate as the primary key and the row coordinate as the secondary key. 

We can do better than that. The coordinates are already sorted on the row coordinates. By doing a stable sort on just the column coordinate, we get the same result. In this paper, we will generalize this optimization to arbitrary tensor transpositions.

\begin{figure}[htbp!]
\centering
\includegraphics[width=0.9\linewidth]{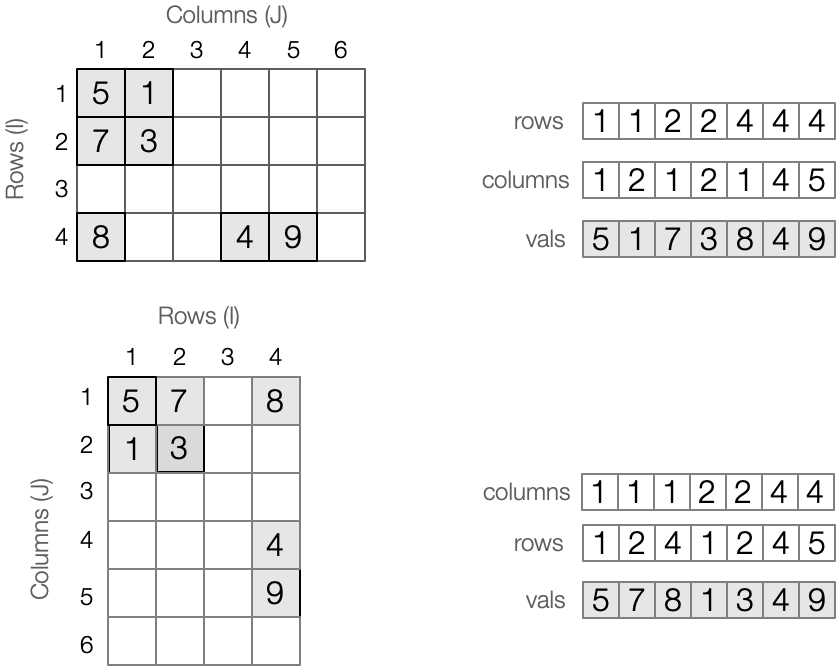}
\caption[Lexicographical ordering of rows and columns in sorted coordinate list]{The matrix $A$ can be represented as a list of coordinates including only the nonzero values. Transposing the tensor in this format switches the lexicographic ordering of the rows and columns, such that the columns appear first. The top list of coordinates represent the matrix in the top left. The bottom list of coordinates represent the transposed version of this matrix.}
\label{fig:matrix-example}
\end{figure}

The main contributions of this work are:
\begin{enumerate}
	\item A decomposition of tensor transposition into parallelizeable near-linear-work partial sorts (one of the two partial sorts is novel) that optionally respect previous partial orderings.
	\item An algorithm that uses partial orderings in the original sparse tensor format to minimize the number of partial sorts required by the transposition algorithm. This relates the parallel span of radix sorting to partial orderings in the input.
	\item A parallel implementation that demonstrates this transposition algorithm is competitive with, and often faster than, state of the art approaches.
\end{enumerate}

\section{Background}\label{sec:background}

A tensor of \textbf{rank} $r$ is a multidimensional array that associates
$r$-tuples (referred to as \textbf{coordinates}) with values, or
\textbf{entries}. We refer to the $k^{th}$ position in a coordinate as mode
$k$. The size of a tensor is specified by an $r$-tuple of \textbf{dimensions}
$n$, where each index $i_k$ is an integer in the range $1 \leq i_k \leq n_k$.

Let $N$ be the number of nonzero entries in our tensor. A tensor is
\textbf{sparse} if most of its entries are zero. This has led to the
development of sparse tensor storage formats that support efficient
computation over only the nonzero entries. These formats range from a simple
sorted list of nonzero coordinates together with their values, the
\textbf{Coordinates} (COO) format \cite{bader_efficient_2007}, to more complicated hierarchical
mode-by-mode compression schemes such as \textbf{Compressed Sparse Row} (CSR)
or \textbf{Compressed Sparse Fiber} (CSF) \cite{eisenstat_yale_1982,
smith_splatt:_2015}. All three of these formats induce a natural lexicographic ordering
of the dimensions; iterating over the tensor in the natural order can be done
very efficiently.

We define lexicographic ordering on $r$-tuples recursively using a tuple
$\sigma$ of modes in order of their priority. We consider the coordinate $i =
(i_1, i_2, ...)$ to be less than the coordinate $i' = (i'_1, i'_2, ...)$
under the ordering $\sigma$ in two cases. The first case is when
$i_{\sigma_1} < i'_{\sigma_1}$. The second case is when both $i_{\sigma_1} =
i'_{\sigma_1}$ and $i < i'$ under the ordering $(\sigma_2, \sigma_3, ...)$.
For completeness, we say that all tuples are considered equal under $\sigma =
()$. We will refer to the $(1, 2, ..., r)$ ordering of $k$-tuples as the
\textbf{simple} ordering. We say an ordering is \textbf{complete} if it
contains $r$ distinct modes.

\subsection{Coordinates (COO)}

COO stores the nonzero coordinates in the tensor as a list of $\sigma$-sorted
coordinates. Transposing a tensor in COO format is equivalent to reordering
the coordinate list to a new complete ordering. This simplicity makes COO a
popular format; it is the only sparse tensor format for the MATLAB Tensor
Toolbox and TensorFlow libraries, and is used as an intermediate format
during transpositions in the SPLATT library \cite{bader_efficient_2007,
abadi_tensorflow_2016, smith_splatt:_2015}. Since most sparse tensor formats
can be converted to and from COO format and the format is readily sorted, we
focus on transposing tensors in COO format.

The COO format can be implemented either with a list of lists (one for each
mode) or as a list of coordinate tuples. We will use the latter for
notational purposes. Thus, we store an array $A$ in COO using two arrays,
$A.crd$ and $A.val$. The $crd$ array is an array of coordinates, and $val$ is
an array of corresponding values. This requires $O(r)$ bits to store each
coordinate, so the total storage requirement for indices is $O(r
* N)$ bits. Figure \ref{fig:coo} shows an example of COO storage.

\begin{figure}[htbp!]
\centering
\includegraphics[width=0.9\linewidth]{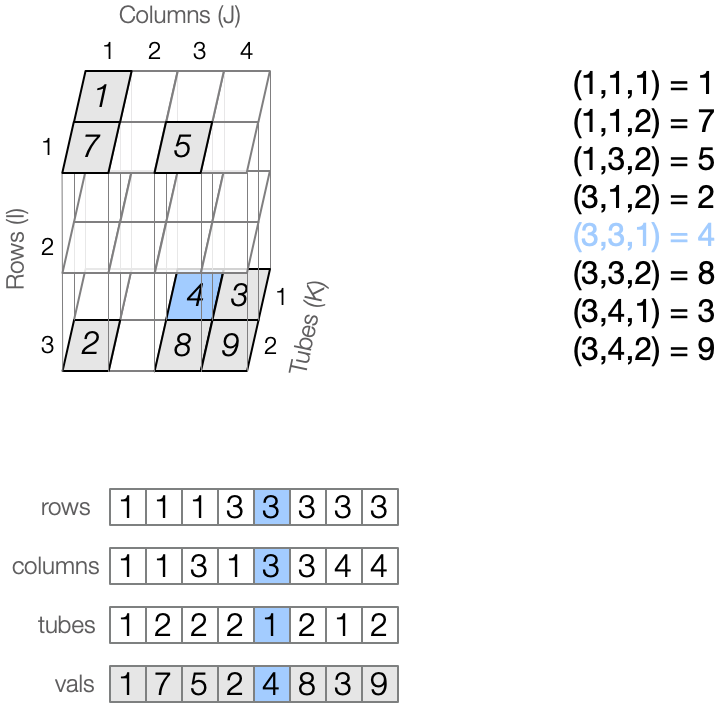}
\caption[COO format example]{COO represents a tensor as a list of coordinates with all of the zero values compressed out. The list of arrays in the bottom left represent the tensor in the top left.}
\label{fig:coo}
\end{figure}


\subsection{Transposition}
\label{transposition}

The COO, CSR, and CSF formats are all sorted by lexicographic orderings on
the coordinates. In COO, we sort the coordinates by some $\sigma$ ordering,
and in CSR and CSF, $\sigma$ gives the order in which modes correspond to
levels in the tree. CSR and CSF formats sort each level by the indices within
that mode. In this work, \textbf{transposition} corresponds to a change in
this storage order.

We will express tensor transposition operations using the final desired
storage order $\sigma$. Without loss of generality, we assume the tensor is
initially ordered by $(1, 2, \ldots)$. For example, transposing a matrix
stored in $(1, 2)$ order is equivalent to changing the storage order to
$\sigma = (2,1)$, then relabeling the modes.

Certain computations will perform better when the dimensions can be iterated
over efficiently in a different order than the initial storage order \cite{smith_splatt:_2015, kjolstad_sparse_2018}. When we
encounter such a computation, it will be beneficial to transpose the tensor.

Tensors can be reordered using any sorting method. Since the coordinates come
from a fixed range of values, we can also use sorts that work on fixed length
keys, like histogram or radix sorts \cite{cormen_introduction_2009}. SPLATT,
a sparse tensor library designed to be highly parallel, uses a specialized
sorting strategy to take advantage of potential parallelism that exists in
the problem \cite{smith_splatt:_2015}. SPLATT chooses to first do a histogram
sort on mode $\sigma_{1}$. It then sorts the coordinates for each index in
mode $i_{\sigma_{1}}$ using $n_{\sigma_1}$ separate calls to quicksort. In
the sequential implementation this strategy benefits from smaller subproblems
for quicksort. In the parallel version, SPLATT is able to sort these buckets
in parallel.

\section{Algorithms}\label{sec:algorithm}
A naive algorithm for sparse tensor transposition is to comparison sort the
coordinates into the desired lexicographic order. However, since coordinates
already have an initial ordering, we can think of sorting coordinates as
simply changing the lexicographic ordering to prioritize different
dimensions. It takes $O(r)$ time to compare two coordinates of an $r$-tensor.
Thus, a comparison based coordinate sort would run in $O(rN\log N)$ time.
However, since the indices are bounded by the dimensions, we can use
parallelizable stable sorts like a histogram sort (called
counting sort in \cite{cormen_introduction_2009, obeya_theoretically-efficient_2019}) to sort the
coordinates on a single mode $k$ in $O(rN + n_k)$ time. If we perform $r$
histogram sorts (a radix sort on $r$-digit numbers), we can sort our
coordinate list in $O(r^2N + n_1 + n_2 + \cdots)$ time, an asymptotic improvement
over comparison sort when the dimensions are small. If we assume that
coordinates are each processed in constant time, our histogram sort takes
$O(N + n_k)$ time and our radix sort takes $O(rN + n_1 + n_2 + \cdots)$ time.
This algorithm can be improved further; for some transpositions, we do not
need to perform all $r$ sorts. For example, \textsc{HALFPERM} uses a
histogram sort to prioritize the second dimension in the new ordering.
Depending on the size of the second dimension, this single histogram sort is
faster than a generic sort of the coordinates, and certainly faster than
redundantly sorting the first dimension before sorting the second.

In this section, we formalize and generalize this idea to produce the Quesadilla tensor transposition algorithm, which provably performs the minimal number of histogram sorts.
We start with a description of our histogram partial sort and a bucketing
modification to produce two sorting primitives. We then use these primitives to build the Quesadilla and Top-$K$-sadilla tensor transposition algorithms.

%
%

%
%

\subsection{Histogram Partial Sorts}

A histogram sort sorts integer keys of bounded size. It first counts the
number of occurrences of each key values. It then performs a prefix sum, also
known as a cumulative sum or prefix scan, over the array of counts to
determine where each group of equivalent keys will lie in the output array.
This reserves enough space for all of the coordinates to appear in the output
order, and the scanned array can be used record how full each output group is
as the algorithm puts each coordinate directly into its output location.
There is extensive research on the topic of parallelizing histogram sort as a
subroutine of radix sort \cite{obeya_theoretically-efficient_2019}. In our
experiments, we use the same implementation as SPLATT
\cite{smith_splatt:_2015}, where each processor uses a private copy of
$count$ which is synchronized before moving coordinates to their output
destinations.

\begin{algorithm}[htbp!]
    \caption{$\text{\sc PartialSort}(A, (), \tau_k)$ (Non-Bucketed)}\label{alg:histogram}
    \SetAlgoLined
    \DontPrintSemicolon
    \KwIn{$A$ is a rank-$r$ tensor of dimension $n$ with $N$ nonzeros stored
    in COO, sorted under the ordering $\tau$. Our goal is to sort $A$ on
    $p = \tau_k$.}
    \KwOut{$A = B$, a tensor in COO format sorted under the ordering
    \begin{equation}\label{eq:histogram_order}
    (\tau_k, \tau_1, \ldots, \tau_{k - 1}, \tau_{k + 1}, \ldots, \tau_r).
    \end{equation}
    }
    $count \gets \text{a length $n_{\tau_k} + 1$ array of integers initialized to 0}$\;
    $count[1] \gets 1$\;

    \tcp{Compute the $count$ array}
    \For{$j \gets 1$ \KwTo $N$}{
      	$i = A.crd[j]$\;
        $count[i_{\tau_k} + 1] \gets count[i_{\tau_k} + 1] + 1$\;
    }
    \tcp{Prefix sum}
    \For{$i_{\tau_k} \gets 2$ \KwTo $n_{\tau_k} + 1$}{
        $count[i_{\tau_k}] \gets count[i_{\tau_k}] + count[i_{\tau_k}-1]$\;
    }
    \tcp{Move coordinates to final output destination}
    \For{$j \gets 1$ \KwTo $N$}{
        $i = A.crd[j]$\;
        $j' = count[i_{\tau_k}]$\;
        $B.crd[j'] = i$\;
        $count[i_{\tau_k}] \gets count[i_{\tau_k}] + 1$\;
    }
\end{algorithm}

The histogram sort iterates over the coordinates twice and the count array once. 
The total runtime is $O(rN + n_{\tau_k})$, where $n_{\tau_k}$ is the dimension of the mode being sorted on.
If we can process coordinates in constant time, the runtime is $O(N + n_{\tau_k})$. 
The histogram sort clearly produces a lexicographic ordering which prioritizes $\tau_k$ first.
Since the sort is stable, the relative ordering of other modes is unaffected.
Thus, it moves the mode $\tau_k$ to be the first mode in the lexicographic order, as described in \eqref{eq:histogram_order}.

%
%
%
%
%

\subsection{Bucketed Histogram Sort}

Although radix sort is most commonly performed from the least significant
digit to the most significant digit, it will be useful for us to be able to
work backwards sometimes, sorting one mode while respecting another partial
ordering. Informally, we wish to sort a mode to a different position than the
first spot in the output ordering. Formally, if our tensor is sorted with
respect to $\tau$, we wish to sort on $\tau_k$ while leaving the ordering
$(\tau_1, \ldots, \tau_l)$ of the first $l < k$ modes unaffected. This means
that we need to sort each group of contiguous coordinates (a \textbf{bucket})
which agree on the values $i_{\tau_1}, \ldots, i_{\tau_l}$. If we use a
comparison-based sort within each bucket, we can perform the sort recursively
but incur a logarithmic overhead. If we use a radix-based algorithm within
each bucket, we need to perform an $O(n_{\tau_k})$ prefix sum in each bucket.
Since the number of buckets is bounded only by $N$, the resulting runtime of
$O(rNn_{\tau_k})$ is unacceptable.

Here, we describe a variation on histogram sort that discovers the buckets
for $(\tau_1, \ldots, \tau_l)$, sorts on $\tau_k$, then reimposes the previous
ordering. Since there are at most $N$ buckets, our algorithm runs in time
$O(N * (l + 1) + n_{\tau_k})$. If we assume operations on coordinates occur in constant time, our algorithm runs in time $O(N + n_{\tau_k})$. Note that the input must be sorted under $(\tau_1, \ldots,
\tau_l)$ to discover the buckets by examining adjacent coordinates.

Algorithms \ref{alg:histogram} and \ref{alg:bucketed_histogram} are both called $\text{\sc PartialSort}$; we use Algorithm \ref{alg:histogram} when $l = 0$ and Algorithm
\ref{alg:bucketed_histogram} otherwise.

\begin{algorithm}[htbp!]
    \caption{$\text{\sc PartialSort}(A, (\tau_1, \ldots, \tau_l), \tau_k)$  (Bucketed)}\label{alg:bucketed_histogram}
    \SetAlgoLined
    \DontPrintSemicolon
    \KwIn{$A$ is a rank-$r$ tensor of dimension $n$ with $N$ nonzeros stored
    in COO, sorted under the ordering $\tau$. Our goal is to sort $A$ on
    $\tau_k$ while maintaining the ordering $(\tau_1, \ldots, \tau_l)$. We require that $l < k$.}
    \KwOut{$A = B$, a tensor in COO format sorted under the ordering
    \begin{equation}\label{eq:bucketed_histogram_order}
    (\tau_1, \ldots, \tau_l, \tau_k, \tau_{l + 1}, \ldots, \tau_{k - 1}, \tau_{k + 1}, \ldots, \tau_r).
    \end{equation}
    }
    $count \gets \text{a length $n_{\tau_k} + 1$ array of integers initialized to 0}$\;
    $count[1] \gets 1$\;
    $bucket \gets \text{an uninitialized integer array of size $N$}$\;
    $bucket[1] \gets 1$\;
    $pos \gets \text{an uninitialized integer array of size $N$}$\;
    $pos[1] \gets 1$\;
    $perm \gets \text{an uninitialized integer array of size $N$}$\;
    $n' \gets 1$\; 
    \tcp{Compute the $count$ array, equivalence classes under $(\tau_1, \ldots, \tau_l)$, and the $pos$ array for each of those classes}
    $i \gets A.crd[1]$\;
    $count[i_{\tau_k} + 1] \gets count[i_{\tau_k} + 1] + 1$\;
    \For{$j \gets 2$ \KwTo $N$}{
        $i \gets A.crd[j]$\;
        $i' \gets A.crd[j - 1]$\;
        \If{$(i_{\tau_1}, \ldots, i_{\tau_l}) \neq (i'_{\tau_1}, \ldots, i'_{\tau_l})$ }{
            $n' \gets n' + 1$\;
            $pos[n'] \gets j$\;
        }
        $bucket[j] \gets n'$\;
        $count[i_{\tau_k} + 1] \gets count[i_{\tau_k} + 1] + 1$\;
    }
    \tcp{Prefix sum}
    \For{$i_{\tau_k} \gets 2$ \KwTo $n_{\tau_k} + 1$}{
        $count[i_{\tau_k}] \gets count[i_{\tau_k}] + count[i_{\tau_k} - 1]$\;
    }
    \tcp{Create permutation of $A$ ordered on $(\tau_k,)$}
    \For{$j \gets 1$ \KwTo $N$}{\label{alg:bucketed_histogram:presort}
        $i \gets A.crd[j]$\;
        $perm[count[i_{\tau_k}]] \gets j$\;
        $count[i_{\tau_k}] \gets count[i_{\tau_k}] + 1$\;
    }
    \tcp{Reintroduce the stored ordering on $(\tau_1, \ldots, \tau_l)$}
    \For{$j\gets 1$ \KwTo $N$}{\label{alg:bucketed_histogram:sort}
        $B.crd[pos[bucket[perm[j]]]] \gets A.crd[perm[j]]$\;
        $pos[bucket[perm[j]]] \gets pos[bucket[perm[j]]] + 1$\;
    }
\end{algorithm}

Although we can save buckets as we fill the $count$ array, Algorithm
\ref{alg:bucketed_histogram} performs an extra bucketing step to create the
perm array, and the perm array introduces more indirection in the final
bucketing step than the similar loop in Algorithm \ref{alg:histogram}. Saving
the buckets takes $O(lN)$ time, the prefix sum takes $O(n_{\tau_k})$ time,
and the last two bucketing steps take $O(rN)$ time. The total runtime of
bucketed histogram sort is $O(rN + n_{\tau_k})$, or $O(N + n_{\tau_k})$ if we
assume constant-time operations on coordinates.

Bucketed histogram sort works by first stably sorting on mode $\tau_k$, then
by sorting on $(\tau_1, \ldots, \tau_l)$ using the bucket array. After the
loop on line \ref{alg:bucketed_histogram:presort}, $perm$ sorts $A$ under
\eqref{eq:bucketed_histogram_order}
\[
    (\tau_k, \tau_1, \ldots, \tau_{k - 1}, \tau_{k + 1}, \ldots, \tau_r).
\]
Since the stored buckets correspond to equivalence classes of $(\tau_1, \ldots,
\tau_l)$ in order, the loop on line \ref{alg:bucketed_histogram:sort} sorts
$A$ stably on the buckets, reprioritizing $(\tau_1, \ldots, \tau_l)$ in the
ordering to produce the final order
\[
    (\tau_1, \ldots, \tau_l, \tau_k, \tau_{l + 1}, \ldots, \tau_{k - 1}, \tau_{k +
    1}, \ldots, \tau_r).
\]

As we describe parallelization strategies, we focus our attention on these
three steps. Algorithm \ref{alg:bucketed_histogram} discovers the buckets,
stably sorts on the desired mode, then stably sorts on the buckets.
Discovering the buckets is a simple linear-time algorithm that we can easily
parallelize, taking care to account for buckets that cross processor
boundaries. Most parallel implementations of histogram sort, including
SPLATT, create private copies of the $count$ array \cite{smith_splatt:_2015}.
On $P$ processors, these implementations run in $O(N/P + n_{\tau_k})$ time.
Since we can usually assume the dimension of the mode to be sorted is small
relative to the number of nonzeros, these parallel implementations of
histogram sort are acceptable for sorting the desired mode. However, we
cannot assume that the number of buckets is small relative to the number of
nonzeros. To effectively parallelize the second sort, we would need to use an
algorithm whose runtime is linear in both the number of nonzeros and the
range of keys to be sorted, such as a sample sort
\cite{blelloch_comparison_1991, zhang_novel_2012}. Notice that the sampling
step can be avoided because the bucket discovery step calculates the exact
distribution of buckets (keys).

We can simplify parallelization of Algorithm \ref{alg:bucketed_histogram} by
decomposing the problem along bucket boundaries. The buckets limit the travel
of coordinates between input and output orderings; coordinates do not escape
their buckets. Therefore, running Algorithm \ref{alg:bucketed_histogram} on a
contiguous region of input buckets will compute the corresponding region of
the output ordering. This gives our chosen parallel algorithm where we assign to
each processor the buckets which begin in their region, and each processor
simply runs Algorithm \ref{alg:bucketed_histogram} locally on their section.
Assuming that the buckets are small enough to permit effective decomposition,
this algorithm also runs in time $O(N/P + n_{\tau_k})$. Notice that because
SPLATT decomposes the local sorts along the index $\sigma_1$, SPLATT operates
under the similar assumption that slices of the tensor are small enough to
permit effective decomposition. 

\subsection{Bucketed Histogram Sort Example}

We give an example of our bucketed histogram sort on a 4-tensor. For
simplicity of presentation, we represent our coordinate list as 4-digit
integers. The integers are initially sorted under the ordering $(1, 2, 3,
4)$.
\[
A.crd = [1218, 1224, 1274, 1421, 1437, 1456, 1472, 3216, 3283, 3286]
\]
Suppose that we would like them to be sorted under the ordering $\sigma = (1, 2, 4,
3)$. We rearrange our digits to show the current ordering.
\[
A.crd_{\sigma} = [1281, 1242, 1247, 1412, 1473, 1465, 1427, 3261, 3238, 3268]
\]
Since our ordering doesn't change the first two digits, we can reorder $A$ to
be sorted under $\sigma$ by bucketing on the first two digits. Our algorithm
starts by discovering the buckets and computing the $bucket$ and $pos$
arrays, which store the numbers and positions of each bucket:
\begin{align*}
A.crd_{\sigma} &= [\underbrace{1281, 1242, 1247}_{\text{``12...''}}, \underbrace{1412, 1473, 1465, 1427}_{\text{``14...''}}, \underbrace{3261, 3238, 3268}_{\text{``32...''}}]\\
bucket &= [1, 1, 1, 2, 2, 2, 2, 3, 3, 3]\\
pos &= [1, 4, 8, undefined, ...]
\end{align*}
Our counting sort sorts $A$ by digit $\sigma_3 = 4$, producing:
\begin{align*}
    perm &= [4, 7, 9, 2, 3, 6, 8, 10, 5, 1]\\
    A.crd[perm] &= \\
    [1421 & , 1472, 3283, 1224, 1274, 1456, 3216, 3286, 1437, 1218]\\
    A.crd_{\sigma}[perm] &= \\
    [1412 & , 1427, 3238, 1242, 1247, 1465, 3261, 3268, 1473, 1281]
\end{align*}
At this point, if we restrict our attention to one bucket at a time, the
coordinates are sorted. We just need to put each element of $A[perm]$ back
into it's corresponding bucket by sorting on $bucket[perm]$. The $pos$ array
functions as the $count$ array does in counting sort.
\begin{align*}
    B.crd &= [1224, 1274, 1218, 1421, 1472, 1456, 1437, 3283, 3216, 3286]\\
    B.crd_{\sigma} &= [\underbrace{1242, 1247, 1281}_{\text{``12...''}}, \underbrace{1412, 1427, 1465, 1473}_{\text{``14...''}}, \underbrace{3238, 3261, 3268}_{\text{``32...''}}]
\end{align*}
Notice that $B.crd_{\sigma}$ is lexicographically ordered, as desired.

\subsection{Minimizing Partial Sorts}

Transposition via a full radix sort would consist of $r$ calls to Algorithm
\ref{alg:histogram}. Not all transpositions, however, are equally difficult.
For example, if we have a simply ordered 4-tensor and are asked to transpose
it to the ordering $(4, 1, 2, 3)$, this can be accomplished with the single
call $\text{\sc PartialSort}(A, (), 4)$, as seen in \eqref{eq:histogram_order}.
On the other hand, if we are asked to transpose to $(4, 3, 2, 1)$, we
show that this requires at least 3 calls $\text{\sc PartialSort}$, since the only
relevant partial ordering we can use is that of the first mode. In this work,
we generalize this insight to produce the $\text{\sc Quesadilla}$ algorithm
which transposes tensors to a given target ordering with the minimal number
of calls to either Algorithm \ref{alg:histogram} or
\ref{alg:bucketed_histogram}. 
 
Although Algorithms \ref{alg:histogram} and \ref{alg:bucketed_histogram}
perform similar tasks, Algorithm \ref{alg:bucketed_histogram} streams through
and randomly accesses more vectors than Algorithm \ref{alg:histogram} does.
If we count the number of unique vectors in each loop body separately
(including initialization), Algorithm \ref{alg:histogram} streams through 4
vectors and randomly accesses 4 vectors, while Algorithm
\ref{alg:bucketed_histogram} streams through 7 vectors, and randomly accesses
7 vectors. While the costs of these algorithms are similar, they are not
identical, and we should prefer to avoid the bucketed histogram variant
whenever possible. For example, we can transpose to $(2, 4, 1, 3)$ by calling
$\text{\sc PartialSort}(A, (), 2)$ and then $\text{\sc PartialSort}(A, (2),
4)$, but we can avoid a bucketed histogram sort by calling $\text{\sc
PartialSort}(A, (), 4)$ and then $\text{\sc PartialSort}(A, (), 2)$. Among
transpositions that use the minimum number of partial sorts, we show that
$\text{\sc Quesadilla}$ uses the minimal number of bucketed partial sorts
(Algorithm \ref{alg:bucketed_histogram}). Thus, our algorithm minimizes a
cost model that weighs each pass equally, but breaks ties towards the
non-bucketed variant.

We start by showing that for a given target order $\sigma$, we must sort on a
certain set of modes and that in order to achieve the minimum number of
sorts, some of these sorts must be bucketed. We then give an algorithm that
only sorts on this necessary set of modes, and only uses bucketed sorts when
required.

\subsubsection{Necessary Sorts}

The number of dimensions $r$ is an upper bound on the number of passes needed
to sort coordinates. This is the number of passes that are needed if we have
a completely unsorted coordinate list and do a standard radix sort. The
histogram sort and bucketed histogram sort can only move dimensions to the
beginning of the lexicographic ordering. We use this fact to show a lower
bound on the number of passes needed to sort the coordinates into the new
lexicographic ordering. In several proofs, we will use a
function $f(\tau, p)$ that we define on complete $r$-orderings $\tau$ as
the set $\{\tau_{k + 1}, \ldots, \tau_r\}$ where $\tau_k = p$. Thus, $f(\tau,
p)$ is the set of modes which follow $p$ in the ordering $\tau$. For example,
$f((1, 3, 2, 4), 3) = \{2, 4\}$.

\begin{lemma}\label{lem:followsort}
    Let $A$ be a list of $r$-coordinates ordered by the complete ordering
    $\tau$. Assume that $A'$ is the $\tau'$ ordered result of calling
    $\text{\sc PartialSort}(A, (\tau_1, \ldots, \tau_l), p)$ where $p =
    \tau_k$ and $k > l$. If $q \neq p$, then $f(\tau', q) \subseteq
    f(\tau, q)$.
\end{lemma}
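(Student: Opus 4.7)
The plan is a direct case analysis based on the position of $q$ in the original ordering $\tau$. The key observation is that both forms of $\text{\sc PartialSort}$ produce a new ordering $\tau'$ that is obtained from $\tau$ by a single ``cut and paste'': the mode $p = \tau_k$ is removed from its position at index $k$ and reinserted immediately after the protected prefix $(\tau_1, \ldots, \tau_l)$, so that modes $\tau_{l+1}, \ldots, \tau_{k-1}$ all slide one slot to the right. Everything before position $l+1$ is untouched, and everything after position $k$ is untouched. This is exactly what equations \eqref{eq:histogram_order} (with $l = 0$) and \eqref{eq:bucketed_histogram_order} state, so I would begin by writing $\tau'$ explicitly in this unified form.

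Next, for any $q \neq p$, I would split on where $q$ lies in $\tau$. If $q = \tau_j$ with $j \leq l$, then $q$ sits in the same position in both $\tau$ and $\tau'$, and the suffix to its right is just a permutation of the same set $\{\tau_{j+1}, \ldots, \tau_r\}$, so $f(\tau', q) = f(\tau, q)$. If $q = \tau_j$ with $j > k$, then $q$ and everything to its right in $\tau$ is untouched, giving $f(\tau', q) = f(\tau, q)$ as well. The only interesting case is $l < j < k$: here $q$ gets pushed back by one slot in $\tau'$, and $p$ now sits to the left of $q$ rather than to its right. Reading $f$ off the explicit form of $\tau'$ gives $f(\tau', q) = \{\tau_{j+1}, \ldots, \tau_{k-1}, \tau_{k+1}, \ldots, \tau_r\} = f(\tau, q) \setminus \{p\}$, which is a strict subset. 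In every case the containment $f(\tau', q) \subseteq f(\tau, q)$ holds, proving the lemma.

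Honestly there is no hard step here; the argument is bookkeeping. The main thing to be careful about is treating the non-bucketed case $l = 0$ and the bucketed case $l \geq 1$ uniformly, so that one does not accidentally write something like ``$\tau_l$'' in the $l = 0$ case. Writing $\tau'$ as the single expression $(\tau_1, \ldots, \tau_l, \tau_k, \tau_{l+1}, \ldots, \tau_{k-1}, \tau_{k+1}, \ldots, \tau_r)$, with the convention that the first block is empty when $l = 0$, eliminates this risk and makes the case analysis above go through verbatim for both algorithms.
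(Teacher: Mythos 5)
Your proof is correct and takes essentially the same approach as the paper's: the same three-case analysis on the position $j$ of $q$ relative to $l$ and $k$, with the two boundary cases giving equality and the middle case giving $f(\tau', q) = f(\tau, q) \setminus \{p\}$. Your explicit ``cut and paste'' description of $\tau'$ and the note about unifying the $l = 0$ case are just a more verbose rendering of the paper's appeal to equations \eqref{eq:histogram_order} and \eqref{eq:bucketed_histogram_order}.
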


\begin{proof}
The result follows from a close examination of \eqref{eq:histogram_order} and
\eqref{eq:bucketed_histogram_order} which describe the output ordering of
Algorithms \ref{alg:histogram} and \ref{alg:bucketed_histogram}. Let $h$ be
such that $\tau_h = q$. Note that $k > l$. If $1 \leq h \leq l$ or $k < h
\leq r$, then $f(\tau', q) = f(\tau, q)$. Otherwise, $l < h < k$ and
$f(\tau', q) = f(\tau, q) \setminus p \subset f(\tau, q)$.
\end{proof}

This idea that the set following some mode never expands when we sort on a
different mode allows us to show that certain modes must be direct arguments
to $\text{\sc PartialSort}$ at some point in our sequence of calls that
transposes the tensor.

\begin{lemma}\label{lem:mustsort}
Let $A$ be a list of $r$-coordinates ordered by the complete ordering $\tau$.
Assume we wish to call $\text{\sc PartialSort}$ some number of times to produce a
$\sigma$ ordering of $A$, and that $f(\sigma, \sigma_i) \not\subseteq f(\tau,
\sigma_i)$. Consider any sequence of statements of the form \[
    A \gets \text{\sc PartialSort}(A, (\psi_1, \ldots, \psi_l), \psi_k),
\] where $\psi$ is a complete intermediate ordering of $A$, $\psi_k \neq
\sigma_i$, and $k > l$. No such sequence will result in a $\sigma$ ordering
of $A$.

Therefore, any sequence of calls to $\text{\sc Sort}$ designed to return a
$\sigma$ ordering of $A$ must include a call for each value of $\sigma_i$ for
which $f(\sigma, \sigma_i) \not \subseteq f(\tau, \sigma_i)$.
\end{lemma}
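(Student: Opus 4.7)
The plan is to derive the result as a straightforward inductive consequence of Lemma~\ref{lem:followsort}. Fix the mode $\sigma_i$ in the hypothesis and consider any sequence of calls $A \gets \text{\sc PartialSort}(A, (\psi_1, \ldots, \psi_l), \psi_k)$ whose pivots $\psi_k$ are all different from $\sigma_i$. Let $\psi^{(0)} = \tau$ and let $\psi^{(t)}$ denote the ordering of $A$ after the $t$-th call. I would prove by induction on $t$ that $f(\psi^{(t)}, \sigma_i) \subseteq f(\tau, \sigma_i)$. The base case $t = 0$ is the trivial identity $f(\tau, \sigma_i) \subseteq f(\tau, \sigma_i)$. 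For the inductive step, the $t$-th call sorts on a mode $p = \psi_k \neq \sigma_i$, so Lemma~\ref{lem:followsort} (applied with $q = \sigma_i$) gives $f(\psi^{(t)}, \sigma_i) \subseteq f(\psi^{(t-1)}, \sigma_i)$, which by the induction hypothesis is contained in $f(\tau, \sigma_i)$.

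Now suppose for contradiction that some such sequence terminates at the ordering $\sigma$. Then the inductive conclusion gives $f(\sigma, \sigma_i) \subseteq f(\tau, \sigma_i)$, directly contradicting the hypothesis that $f(\sigma, \sigma_i) \not\subseteq f(\tau, \sigma_i)$. Therefore no such sequence can produce the $\sigma$ ordering, establishing the first claim.

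The second claim is the contrapositive of the first, aggregated over all relevant modes. If some sequence of \textsc{PartialSort} calls produces the $\sigma$ ordering of $A$, then for each $\sigma_i$ with $f(\sigma, \sigma_i) \not\subseteq f(\tau, \sigma_i)$ the sequence must contain at least one call whose pivot $\psi_k$ equals $\sigma_i$; otherwise the first part would be violated. This is exactly the required statement.

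The argument is almost mechanical once Lemma~\ref{lem:followsort} is in hand, so the only subtle points are bookkeeping: verifying that Lemma~\ref{lem:followsort}'s precondition $k > l$ is already built into the form of the calls we are quantifying over, and that the lemma may be invoked with $q = \sigma_i$ precisely because every pivot in the sequence is assumed distinct from $\sigma_i$. I do not anticipate any deeper obstacle.
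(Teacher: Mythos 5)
Your proof is correct and takes essentially the same route as the paper: both iterate Lemma~\ref{lem:followsort} along the sequence of calls and derive a contradiction with the hypothesis $f(\sigma,\sigma_i) \not\subseteq f(\tau,\sigma_i)$. The only cosmetic difference is the invariant tracked (you maintain $f(\psi^{(t)},\sigma_i)\subseteq f(\tau,\sigma_i)$, whereas the paper maintains $f(\sigma,\sigma_i)\not\subseteq f(\psi^{(t)},\sigma_i)$), which is logically equivalent here.
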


\begin{proof}
At some point in our sequence of calls, assume that $f(\sigma, \sigma_i)
\not\subseteq f(\psi, \sigma_i)$, and let $\psi'$ be the ordering after the
next call to $\text{\sc PartialSort}$. Lemma \ref{lem:followsort} implies
that $f(\psi', \sigma_i) \subseteq f(\psi, \sigma_i)$, so it must still be
the case that $f(\sigma, \sigma_i) \not\subseteq f(\psi', \sigma_i)$. Since
we start with $f(\sigma, \sigma_i) \not\subseteq f(\tau, \sigma_i)$, there is
no ordering in our sequence for which $f(\sigma, \sigma_i) \not\subseteq
f(\psi, \sigma_i)$, and thus $\psi$ can never equal $\sigma$.
\end{proof}

Lemma \ref{lem:mustsort} implies a lower bound on the number of calls to
$\text{\sc PartialSort}$ required to transpose a $\tau$-ordered tensor $A$ to
$\sigma$-order. We refer to this number with the function $b(\tau, \sigma)$.
We define $b$ formally as the number of modes $i$ for which $f(\sigma,
\sigma_i) \not\subseteq f(\tau, \sigma_i)$. For example, $b((1, 3, 2), (3, 1,
2)) = 1$. While $b$ gives us a lower bound on the number of sorts, it does
not show a bound on whether each sort must be bucketed or not. We now show
that no sequence of calls to $\text{\sc PartialSort}$ of length $b(\tau,
\sigma)$ may include a call $\text{\sc PartialSort}(A, (), p)$ if there
exists $i$ such that $f(\sigma, \sigma_i) \subseteq f(\tau, \sigma_i)$ and $p
\in f(\sigma, \sigma_i)$.

\begin{lemma}\label{lem:mustbucket}
Let $A$ be a list of $r$-coordinates ordered by the complete ordering $\tau$.
Consider any length $b(\tau, \sigma)$ sequence of statements of the form \[
    A \gets \text{\sc PartialSort}(A, (\psi_1, \ldots, \psi_l), \psi_k),
\] where $\psi$ is a complete intermediate ordering of $A$ and $k > l$. If
this sequence reaches the ordering $\sigma$, it may not contain any call
where $l = 0$ and there exists $i$ such that $f(\sigma, \sigma_i) \subseteq
f(\tau, \sigma_i)$ and $\psi_k \in f(\sigma, \sigma_i)$.
\end{lemma}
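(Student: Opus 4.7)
The plan is to argue by contradiction: suppose a sequence of length exactly $b(\tau,\sigma)$ reaches $\sigma$ yet contains an offending non-bucketed call $\text{\sc PartialSort}(A, (), p)$ with some $i$ satisfying $f(\sigma, \sigma_i) \subseteq f(\tau, \sigma_i)$ and $p \in f(\sigma, \sigma_i)$. I will show this forces the sequence to have strictly more than $b(\tau,\sigma)$ distinct sort arguments, which is impossible.

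First I would examine what the offending call does to mode $\sigma_i$. By \eqref{eq:histogram_order}, a non-bucketed call on $p$ places $p$ at the very front of the output ordering $\psi'$, so every other mode — including $\sigma_i$, which is distinct from $p$ since $p \in f(\sigma, \sigma_i)$ — follows $p$ in $\psi'$. In particular $p \notin f(\psi', \sigma_i)$. Combined with the hypothesis $p \in f(\sigma, \sigma_i)$, this yields $f(\sigma, \sigma_i) \not\subseteq f(\psi', \sigma_i)$.

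Next I would invoke Lemma \ref{lem:mustsort} twice. Applied with starting ordering $\psi'$, it forces the suffix of the sequence (everything after the offending call) to contain a call whose argument is $\sigma_i$. Applied with starting ordering $\tau$, it forces the full sequence to contain a call on $\sigma_j$ for every $j$ with $f(\sigma, \sigma_j) \not\subseteq f(\tau, \sigma_j)$; there are exactly $b(\tau,\sigma)$ such modes, and they are pairwise distinct. The hypothesis $f(\sigma, \sigma_i) \subseteq f(\tau, \sigma_i)$ is exactly the statement that $\sigma_i$ is not among those $b(\tau,\sigma)$ modes. Since each $\text{\sc PartialSort}$ call takes a single sort mode as its third argument, the sequence must contain at least $b(\tau,\sigma)+1$ calls, contradicting its assumed length.

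I expect the only real obstacle is one of care rather than depth: keeping the two applications of Lemma \ref{lem:mustsort} (one from $\tau$, one from $\psi'$) cleanly separated and making the disjointness argument explicit, so that the $b(\tau,\sigma)$ modes supplied by the first application are visibly distinct from the $\sigma_i$ supplied by the second. That disjointness is precisely what the hypothesis $f(\sigma, \sigma_i) \subseteq f(\tau, \sigma_i)$ buys us, and it is the step where the non-bucketed restriction $l = 0$ does the real work, since it is what places $p$ ahead of \emph{all} other modes in $\psi'$.
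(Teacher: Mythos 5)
Your proposal is correct and follows essentially the same route as the paper: both assume an offending call, use \eqref{eq:histogram_order} to show $f(\sigma,\sigma_i)\not\subseteq f(\psi',\sigma_i)$ after it, and then apply Lemma~\ref{lem:mustsort} twice (once from $\tau$ to pin down the $b(\tau,\sigma)$ necessary sort arguments, once from $\psi'$ to force an additional sort on $\sigma_i$) to contradict the length bound. The only cosmetic difference is that you phrase the final step as an explicit count (at least $b(\tau,\sigma)+1$ distinct sort arguments, hence $\geq b(\tau,\sigma)+1$ calls), while the paper states it as "the sequence must only sort on the necessary modes" before observing that $\sigma_i$ is not one of them.
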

  
\begin{proof}
Lemma \ref{lem:mustsort} implies that we must sort on each mode where 
$f(\sigma, \sigma_k) \not\subseteq f(\tau, \sigma_k)$. Since our sequence
only involves $b(\tau, \sigma)$ calls to $\text{\sc PartialSort}$, this sequence
must only sort on these modes.

Assume for contradiction that our sequence involves a call where $l = 0$ and
there exists $i$ such that $f(\sigma, \sigma_i) \subseteq f(\tau, \sigma_i)$
and $\psi_k \in f(\sigma, \sigma_i)$. Let $\psi'$ be the ordering following
$\psi$ after this call. Using \eqref{eq:histogram_order}, we see that
$f(\sigma, \sigma_i) \not\subseteq f(\psi', \sigma_i)$ since $\psi_k \in f(\sigma,
\sigma_i)$ but $f(\psi', \sigma_i) = f(\psi, \sigma_i) \setminus \psi_k$. Thus,
Lemma \ref{lem:mustsort} implies that we must sort on $\sigma_i$ in order
to achieve $\sigma$ order, a contradiction as we are given that $f(\sigma,
\sigma_i) \subseteq f(\tau, \sigma_i)$.
\end{proof}

Lemma \ref{lem:mustbucket} implies that if $f(\sigma, \sigma_i) \subseteq
f(\tau, \sigma_i)$, a minimal sequence of sorts cannot involve non-bucketed
sorts on modes in $f(\sigma, \sigma_i)$.

\subsection{Quesadilla Sort}

We now present the $\text{\sc Quesadilla}$ algorithm for tensor transposition.

\begin{algorithm}
    \DontPrintSemicolon
    \caption{$\text{\sc QuesadillaSort}(A, \sigma)$}\label{algo:quesadilla}
    \KwIn{$A$ is any simply ordered list of $r$-coordinates, $\sigma$ is an
    $r$-complete ordering.}
    \KwOut{$A$, sorted in $\sigma$ order.}
    $l \gets 0$\;
    \While{$l < r$} {\label{algo:quesadilla:mainloop}
        $k \gets l$\;
        \While{$k + 1 < r$ and $f(\sigma, \sigma_{k + 1}) \not\subseteq f((1, 2, \ldots), \sigma_{k + 1})$} {\label{algo:quesadilla:scanloop}
            $k \gets k + 1$\;
        }
        $l' \gets k + 1$\;
        \While{$k > l$}{\label{algo:quesadilla:sortloop} 
            $A \gets \text{\sc PartialSort}(A, (\sigma_1, \ldots, \sigma_{l}), \sigma_k)$\;
            $k \gets k - 1$\;
        }
        $l \gets l'$\;
    }
    \Return{$A$}\;
\end{algorithm}

\begin{theorem}\label{thm:quesadilla}
Let $A$ be a simply ordered list of $r$-coordinates. The
sequence of sorts described by $\text{\sc Quesadilla}(\sigma)$ will result in
the $\sigma$ ordered list of coordinates in $A$.
\end{theorem}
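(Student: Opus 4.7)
The plan is to establish correctness by induction on iterations of the outer \textbf{while} loop at line \ref{algo:quesadilla:mainloop}, maintaining the following loop invariant: at the start of each iteration, if the current storage ordering of $A$ is $\psi$, then (I1) the length-$l$ prefix satisfies $(\psi_1,\ldots,\psi_l) = (\sigma_1,\ldots,\sigma_l)$, and (I2) the suffix $(\psi_{l+1},\ldots,\psi_r)$ lists the remaining modes in strictly increasing natural order. The base case is immediate since $l = 0$ and $A$ begins in simple order, so (I1) is vacuous and (I2) is the whole ordering.

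For the inductive step I would first read off what the scan loop at line \ref{algo:quesadilla:scanloop} computes: the largest $k$ with $l \le k \le r-1$ such that $f(\sigma,\sigma_j) \not\subseteq f((1,2,\ldots),\sigma_j)$ holds for every $l < j \le k$; the loop exits either because $k = r-1$ or because the condition fails at $j = k+1$. The key arithmetic observation is that $f(\sigma,\sigma_{k+1}) \subseteq f((1,2,\ldots),\sigma_{k+1})$ unfolds to ``every $\sigma_j$ with $j > k+1$ is larger than $\sigma_{k+1}$,'' i.e.\ $\sigma_{k+1}$ is the smallest element of $\{\sigma_{k+1},\ldots,\sigma_r\}$; when $k+1 > r$ this is vacuous.

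Next I would analyze the sort loop at line \ref{algo:quesadilla:sortloop} by a nested induction on the number $i$ of PartialSort calls performed. Using \eqref{eq:bucketed_histogram_order}, each call $\text{\sc PartialSort}(A,(\sigma_1,\ldots,\sigma_l),\sigma_j)$ leaves the length-$l$ prefix fixed, moves $\sigma_j$ to position $l+1$, and preserves the relative order of every other mode. Starting from a $\psi^{(0)}$ satisfying (I1) and (I2), and processing $j = k, k-1, \ldots, l+1$, the nested induction shows that after the $i$-th call positions $l+1,\ldots,l+i$ hold $(\sigma_{k-i+1},\ldots,\sigma_k)$ in that order, while positions $l+i+1,\ldots,r$ hold the still-unplaced modes in natural order (the naturally ordered tail from (I2) survives each sort because deleting one element preserves natural order). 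After all $k - l$ sorts, positions $1,\ldots,k$ equal $(\sigma_1,\ldots,\sigma_k)$ and positions $k+1,\ldots,r$ list $\{\sigma_{k+1},\ldots,\sigma_r\}$ in natural order; the scan-loop observation then guarantees that position $k+1$ already equals $\sigma_{k+1}$, so after the update $l \gets k+1$ both (I1) and (I2) are re-established.

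Termination is easy since $l$ strictly increases each iteration (by at least $1$) and is capped at $r$, whence the loop halts with $l = r$ and (I1) gives $\psi = \sigma$. I expect the main obstacle to be the nested induction inside the sort loop: one must carefully track how each successive bucketed PartialSort interacts with the suffix, and verify that the naturally ordered tail guaranteed by (I2) is exactly the hypothesis that lets the right-to-left sweep over $\sigma_k,\sigma_{k-1},\ldots,\sigma_{l+1}$ deposit those modes in the correct left-to-right relative order at positions $l+1,\ldots,k$, while the untouched remainder stays in natural order to feed the next outer iteration.
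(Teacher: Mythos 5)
Your proposal is correct and follows essentially the same strategy as the paper's proof: both maintain an invariant that the current storage ordering is a $\sigma$-prefix followed by the unplaced modes in ascending natural order, use \eqref{eq:bucketed_histogram_order} to track how each inner PartialSort extends the prefix, and invoke the scan-loop termination condition (that $\sigma_{l'}$ is the minimum of the unplaced modes) to re-establish the invariant across the outer iteration. The paper phrases this as a single loop invariant stated for the inner sort loop with the combined prefix $(\sigma_1,\ldots,\sigma_l,\sigma_{k+1},\ldots,\sigma_{l'-1})$, whereas you factor it into an outer-loop invariant plus a nested induction on inner-loop iterations, but the mathematical content and the key observations are the same. (One small nit: in the scan-loop observation you wrote ``when $k+1 > r$'' where you meant the boundary case $k+1 = r$, since the loop guard keeps $k+1 \le r$; the reasoning is unaffected.)
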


\begin{proof}
We prove the result by showing that before and after each execution of the
body of the loop on line \ref{algo:quesadilla:sortloop}, $A$ is sorted under
a complete ordering $\tau$, where
\begin{equation} \label{eq:quesadillahead}
(\tau_1, \ldots, \tau_{l + l' - (k + 1)}) = (\sigma_1, \ldots, \sigma_l, \sigma_{k + 1}, \ldots, \sigma_{l' - 1}),
\end{equation}
and the remaining modes of $\tau$ are in ascending order.

Before the first execution of our loop body, $A$ is simply ordered, $l = 0$,
and $l' = k + 1$. Thus, our claim is initially satisfied.

Assume our claim holds before some execution of the loop body. Let $A'$ and
$k'$ be the values of $A$ and $k$ after executing the loop body. Let $t$ be
the mode such that $\sigma_k = \tau_t$. Since $k > l$,
\eqref{eq:quesadillahead} implies that $t > l
+ l' - (k + 1)$. Combining this observation with \eqref{eq:histogram_order}
and \eqref{eq:bucketed_histogram_order} leads to the observation that $A'$ is
sorted under the complete ordering $\tau'$, where
\[
\tau' = (\sigma_1, \ldots, \sigma_l, \sigma_k, \ldots, \sigma_{l' - 1}, \tau_{l + l' - (k + 1)}, \ldots, \tau_{t - 1}, \tau_{t + 1}, \ldots, \tau_r).
\]
Therefore, \eqref{eq:quesadillahead} still holds
for $\tau'$ and $k'$. Because $(\tau_{l + l' - k}, \ldots, \tau_r)$ was
ascending, $(\tau'_{l + l' - k'}, \ldots, \tau'_r)$ is also ascending. Thus, the
claim holds after the execution of the loop body on line \ref{algo:quesadilla:sortloop}.

All that remains to be shown is that our claim holds after we move through the
loop on line \ref{algo:quesadilla:mainloop}. After leaving the line
\ref{algo:quesadilla:sortloop} loop, $k = l$ and $A$ is sorted under the
complete ordering
\[
\tau = (\sigma_1, \ldots, \sigma_{l' - 1}, \tau_{l'}, \ldots, \tau_r).
\]

Notice that now $f(\sigma, \sigma_{k + 1}) \subseteq f((1, 2, 
\ldots), \sigma_{k + 1})$, either because that was the condition that stopped
the loop on line
\ref{algo:quesadilla:scanloop} or because that loop stopped when $l' = r$ and
$f(\sigma, \sigma_r) = \emptyset$. Thus, for all $j > l'$,
$\sigma_j > \sigma_{l'}$, and since $(\tau_{l'}, \ldots, \tau_r)$ is
ascending, $\tau_{l'} = \sigma_{l'}$. Thus, we set $l$ to $l'$ and
we have $(\tau_1, \ldots, \tau_l) = (\sigma_1, \ldots, \sigma_l)$ when we reach line
\ref{algo:quesadilla:sortloop}. The other claims will hold because $k + 1$
will be equal to $l'$.
\end{proof}

\begin{theorem}\label{thm:quesadilla_minimal}
    Given a target ordering $\sigma$, $\text{\sc QuesadillaSort}(\sigma)$ uses
    the minimum-length sequence of calls to \textsc{PartialSort} required to sort
    any simply ordered list of $r$-coordinates to $\sigma$ order.
\end{theorem}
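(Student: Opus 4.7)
The plan is to show that QuesadillaSort performs exactly $b((1,2,\ldots), \sigma)$ calls to \textsc{PartialSort}, matching the lower bound furnished by Lemma \ref{lem:mustsort}. Correctness of the resulting ordering is already handled by Theorem \ref{thm:quesadilla}, so only the count remains.

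First I would identify the set of ``necessary'' modes $S = \{\sigma_j : f(\sigma, \sigma_j) \not\subseteq f((1, 2, \ldots), \sigma_j)\}$, whose cardinality is $b((1,2,\ldots), \sigma)$ by definition. Lemma \ref{lem:mustsort} immediately gives the lower bound: any sequence of \textsc{PartialSort} calls producing $\sigma$-order from the simply ordered input must include at least one call for each element of $S$, so it has length at least $|S|$.

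Next I would count the calls issued by \textsc{QuesadillaSort}. I would observe that in iteration $t$ of the outer loop on line \ref{algo:quesadilla:mainloop}, the scan loop on line \ref{algo:quesadilla:scanloop} starts with $k = l_t$ and advances $k$ precisely as long as $\sigma_{k+1}$ passes the membership test for $S$; consequently the sort loop on line \ref{algo:quesadilla:sortloop} issues one \textsc{PartialSort} call for each of $\sigma_{l_t+1}, \ldots, \sigma_{k_t}$, all of which lie in $S$. The update $l \gets l' = k_t + 1$ then skips over $\sigma_{l'}$ without sorting. I would then argue that every $\sigma_j \in S$ is covered by exactly one such iteration: the sequence $l_1 = 0 < l_2 < \cdots$ is strictly increasing and terminates at $r$, so each $j \in \{1, \ldots, r\}$ satisfies $l_t < j \leq l_{t+1}$ for a unique $t$. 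If $j \leq k_t$, then $\sigma_j$ is sorted in iteration $t$. If instead $j = k_t + 1$, then the scan halted at $k_t$ either because $k_t + 1 = r$ (forcing $f(\sigma, \sigma_j) = \emptyset$) or because $f(\sigma, \sigma_{k_t+1}) \subseteq f((1,2,\ldots), \sigma_{k_t+1})$; in either case $\sigma_j \notin S$, so this case cannot occur for $\sigma_j \in S$. Hence the algorithm issues exactly $|S| = b((1,2,\ldots), \sigma)$ calls, matching the lower bound.

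The main obstacle will be tightening the bookkeeping of the scan-versus-sort interaction, in particular verifying that the mode $\sigma_{l'}$ at which the scan halts is not itself a mode requiring a sort, and that the sort loop never sorts on a mode outside $S$. Both facts fall out cleanly from the scan termination condition on line \ref{algo:quesadilla:scanloop} together with the trivial observation $f(\sigma, \sigma_r) = \emptyset$, which guarantees $\sigma_r \notin S$ and thus makes the $k + 1 < r$ guard cost-free. Once this bijection between iterations-that-sort and elements of $S$ is established, combining with Lemma \ref{lem:mustsort} yields the theorem.
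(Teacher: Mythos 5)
Your proposal is correct and follows the same strategy as the paper: use Lemma~\ref{lem:mustsort} for the lower bound $b((1,2,\ldots),\sigma)$, then show \textsc{QuesadillaSort} calls \textsc{PartialSort} exactly on the modes $\sigma_j$ with $f(\sigma,\sigma_j)\not\subseteq f((1,2,\ldots),\sigma_j)$. The paper's proof compresses the entire scan-loop/sort-loop bookkeeping into the single assertion that \textsc{QuesadillaSort} only sorts on such modes; your careful verification of the bijection between iterations and necessary modes, including the observation that $\sigma_r\notin S$ makes the $k+1<r$ guard harmless, is exactly the argument the paper leaves implicit.
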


\begin{proof}
\textsc{QuesadillaSort} only calls \textsc{PartialSort} on
modes $\sigma_k$ where $f(\sigma, \sigma_k) \not\subseteq f((1, 2,
\ldots), \sigma_k)$. Thus, Lemma \ref{lem:mustsort} implies that
\textsc{QuesadillaSort} makes the minimum number of required calls to
\textsc{PartialSort}.
\end{proof}

\begin{theorem}\label{thm:quesadilla_minimal_buckets}
    Among minimum-length sequences of \textsc{PartialSort} calls that
    sort simply ordered lists of $r$-coordinates to target ordering $\sigma$,
    the sequence used by $\text{\sc QuesadillaSort}(\sigma)$ minimizes the
    number of bucketed partial sorts.
\end{theorem}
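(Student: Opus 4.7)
The plan is to show two things: first, since \text{\sc QuesadillaSort} already achieves the minimum number of total calls $|M|$ to $\text{\sc PartialSort}$ by Theorem \ref{thm:quesadilla_minimal} (with $M$ the set of modes that must be sorted, per Lemma \ref{lem:mustsort}), minimizing bucketed calls is equivalent to maximizing non-bucketed calls. So I will count the non-bucketed calls that \text{\sc QuesadillaSort} performs, and then use Lemma \ref{lem:mustbucket} to prove an upper bound on the non-bucketed count of any minimum-length sequence that matches it.

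For the first step, I will inspect Algorithm \ref{algo:quesadilla} and observe that the prefix argument $(\sigma_1, \ldots, \sigma_l)$ is empty --- so the call invokes Algorithm \ref{alg:histogram} rather than Algorithm \ref{alg:bucketed_histogram} --- exactly when $l = 0$, which only holds during the first execution of the outer loop on line \ref{algo:quesadilla:mainloop}. Defining $m^{*}$ to be the smallest index in $[1, r]$ with $f(\sigma, \sigma_{m^{*}}) \subseteq f((1,2,\ldots), \sigma_{m^{*}})$ (which must exist, since $f(\sigma, \sigma_r) = \emptyset$ guarantees $m^{*} \leq r$), I will argue that the scanning loop on line \ref{algo:quesadilla:scanloop} stops precisely when $k + 1 = m^{*}$, so the first iteration issues non-bucketed calls on $\sigma_{m^{*} - 1}, \ldots, \sigma_1$ in that order and every subsequent iteration runs with $l > 0$. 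The number of non-bucketed calls made by \text{\sc QuesadillaSort} is therefore $m^{*} - 1$.

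For the second step, I will apply Lemma \ref{lem:mustbucket} with $i = m^{*}$. That lemma forbids any minimum-length sequence from issuing a non-bucketed call on any mode in $f(\sigma, \sigma_{m^{*}}) = \{\sigma_{m^{*}+1}, \ldots, \sigma_r\}$. Combined with the facts that $\sigma_{m^{*}} \notin M$ by construction of $m^{*}$ and that any minimum-length sequence sorts only on modes in $M$, this confines all non-bucketed calls of a minimum-length sequence to $\{\sigma_1, \ldots, \sigma_{m^{*}-1}\}$, giving at most $m^{*} - 1$ of them. Matching the count from the first step completes the proof.

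The main obstacle, I expect, is the bookkeeping in the first step --- verifying that the scanning loop's compound termination condition $k+1<r$ genuinely coincides with $k+1 = m^{*}$ in every case, and handling the edge cases $m^{*} = 1$ (when $\sigma_1 = 1$ forces zero non-bucketed calls) and $m^{*} = r$ (when $|M| = r-1$ and the entire minimum-length sequence is non-bucketed, so no bucketed calls are needed at all).
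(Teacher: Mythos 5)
Your proposal is correct and is essentially the paper's own argument, made more explicit: the paper likewise notes that the first scan loop's stopping point (your $m^{*}$) satisfies $f(\sigma, \sigma_{m^{*}}) \subseteq f((1,2,\ldots), \sigma_{m^{*}})$, applies Lemma~\ref{lem:mustbucket} at that index, and observes that \textsc{Quesadilla}'s non-bucketed sorts are confined to $\sigma_1,\ldots,\sigma_{m^{*}-1}$. Your more careful bookkeeping (introducing $m^{*}$ explicitly, counting $m^{*}-1$ on both sides, and invoking the fact that an $|M|$-call sequence sorts each mode in $M$ exactly once) fills in steps the paper leaves implicit but does not change the underlying approach.
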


\begin{proof}
    The first execution of the loop on line
    \ref{algo:quesadilla:scanloop} stops once $f(\sigma, \sigma_{k + 1})
    \subseteq f((1, 2, \ldots), \sigma_{k + 1})$. For all $j > k + 1$, we have
    $\sigma_j \in f(\sigma, \sigma_{k + 1})$. Since \textsc{Quesadilla} uses
    non-bucketed sorts for all sorts on modes $\sigma_1$ through $\sigma_{k + 1}$, Lemma
    \ref{lem:mustbucket} implies that \textsc{Quesadilla} uses the minimum number
    of bucketed partial sorts possible when the total number of sorts is minimized.
\end{proof}


%

\subsection{Top-$K$-sadilla Sort}

Although the two sorting primitives presented are both histogram sort
variants, they could be replaced with any stable sort such as quicksort or
merge sort. However, if a comparison sort is used at some level $k$ where the
current ordering is $\tau$ and $(\tau_1, \ldots, \tau_l) = (\sigma_1, \ldots,
\sigma_l)$, it makes more sense to completely sort each
bucket (equivalence class under $(\tau_1, \ldots, \tau_l)$) to $\sigma$ order.

Thus, we propose the \text{\sc Top-$K$-sadilla} algorithm, which uses
\textsc{Quesadilla} to sort the tensor to $(\sigma_1, \ldots, \sigma_K)$
order, then sorts each bucket using quicksort. The best choice of
the value $K$ will be investigated in our experiments, since it depends both
on the permutation and on the dimension of the tensor.

\section{Evaluation}

We evaluate \text{\sc Quesadilla} and \text{\sc Top-$K$-sadilla} sort against various state of the art approaches for sparse tensor transposition.
As we will show, on the whole, our technique outperforms these existing approaches.

\subsection{Experimental Setup}
We created both parallel and serial implementations of our technique. We implemented the serial version in a code generator that emits C++ code to transpose sparse tensors stored in the COO format using either \text{\sc Quesadilla} or \text{\sc Top-$K$-sadilla} sort. 
We implemented the parallel version by implementing parallel counting sort and bucketed counting sort primitives and calling the necessary sorts for \text{\sc Quesadilla}. To implement the quicksort portion of \text{\sc Top-$K$-sadilla}, we identified the buckets in parallel and then sorted each bucket using the OpenMP for-loop parallelization construct. The buckets were scheduled using dynamic scheduling for \text{\sc Top-$1$-sadilla} and guided scheduling for \text{\sc Top-$K$-sadilla} when $K$ > 1. We made these scheduling choices because we expected more smaller buckets when buckets correspond to more coordinates. The overhead for dynamically scheduling many small buckets caused significant slow down.

Our serial implementation is available at \url{https://github.com/suzmue/taco/tree/transpose} and our parallel implementation is available at \url{https://github.com/suzmue/splatt}.

To evaluate our technique, we compare it against SPLATT~\cite{smith_splatt:_2015}, a high-performance C++ toolkit for sparse tensor factorization that uses a combination of histogram sort, quicksort, and insertion sort to sort tensors in COO.
We also evaluate against sparse tensor transposition routines that sort nonzeros with (least significant digit) radix sort (using Algorithm~\ref{alg:histogram} for each pass) or glibc's implementation of \lstinline{qsort}.

We ran all experiments on a 2.5 GHz Intel Xeon E5-2680 v3 machine with 24 cores, 30 MB of L3 cache and 128 GB of main memory.
The machine runs Ubuntu 18.04.3 LTS with glibc 2.27.
We compiled the benchmarks using GCC 7.4.0.
We ran each experiment 100 times and report minimum execution times.

We ran our experiments on real-world tensors obtained from the FROSTT Tensor Collection~\cite{smith_frostt_2017}.
Table~\ref{tab:input-tensors} reports statistics about these tensors.
We stored tensors in the COO format and stored coordinates of nonzeros using 32-bit integers.

\begin{table}
\caption{Statistics about tensors used in our experiments.}
\centering
\small
\begin{tabular}{lrl} 
\toprule
\multicolumn{1}{c}{Tensor} & \multicolumn{1}{c}{Nonzeros} & \multicolumn{1}{c}{Dimensions} \\ \midrule
flickr-3d & 112890310 & 319686 $\times$ 28153045 $\times$ 1607191
\\
nell-1 & 143599552 & 2902330 $\times$ 2143368 $\times$ 25495389
\\
nell-2 & 76879419 & 12092 $\times$ 9184 $\times$ 28818
\\
vast-2015-mc1-3d & 26021854 & 165427 $\times$ 11374 $\times$ 2
\\ \midrule
chicago-crime-comm & 5330673 & 6186 $\times$ 24 $\times$ 77 $\times$ 32
\\
delicious-4d & 140126220 & 532924 $\times$ 17262471 $\times$ 2480308 $\times$ 1443
\\
enron & 54202099 & 6066 $\times$ 5699 $\times$ 44268 $\times$ 1176
\\
flickr-4d & 112890310 & 319686 $\times$ 28153045 $\times$ 1607191 $\times$ 731
\\
nips & 3101609 & 2482 $\times$ 2862 $\times$ 14036 $\times$ 17
\\
uber & 3309490 & 183 $\times$ 24 $\times$ 1140 $\times$ 1717
\\ \midrule
lbnl-network & 1698825 & 1605 $\times$ 4198 $\times$ 1631 $\times$ 4209 $\times$ 868131
\\
vast-2015-mc1-5d & 26021945 & 165427 $\times$ 11374 $\times$ 2 $\times$ 100 $\times$ 89
\\ \bottomrule
\end{tabular}
\label{tab:input-tensors}
\end{table}

\subsection{Performance Evaluation}

\begin{figure}
\centering
\subfloat[serial]{
    \begin{tikzpicture}
        \begin{axis}[
         boxplot/draw direction=y,
        ymode=log,
        ymin=0.09,
        y tick label style={log ticks with fixed point},
        ylabel={Normalized execution time},
        xmax=6.4,
        xtick={1,2,3,4,5},
        xticklabel style={align=center,rotate=45},
        xticklabels={qsort\\(0\%),1-sadilla\\(2.2\%),2-sadilla\\(15\%),quesadilla\\(58\%),radix\\(0.25\%)}
        ]
        \addplot+ [boxplot prepared={
        lower whisker=0.61, lower quartile=1.91 ,
        median=2.32, upper quartile=3.18,
        upper whisker=6.36}
        ] coordinates {};
        \addplot[color=black,samples=2,smooth,ultra thick] coordinates {(0.5,1) (5.5,1)} node[right,pos=1,align=center] {splatt\\(25\%)};
        \addplot+ [boxplot prepared={
        lower whisker=0.31, lower quartile=1.34 ,
        median=1.54, upper quartile=1.83,
        upper whisker=3.91}
        ] coordinates {};
        \addplot+ [boxplot prepared={
        lower whisker=0.22, lower quartile=0.91 ,
        median=1.19, upper quartile=1.43,
        upper whisker=3.91}
        ] coordinates {};
        \addplot+ [boxplot prepared={
        lower whisker=0.0000001, lower quartile=0.54 ,
        median=0.84, upper quartile=1.27,
        upper whisker=5.78}
        ] coordinates {};
        \addplot+ [boxplot prepared={
        lower whisker=0.51, lower quartile=1.0 ,
        median=1.41, upper quartile=2.26,
        upper whisker=7.84}
        ] coordinates {};
        \end{axis}
        \end{tikzpicture}
        }

\subfloat[parallel]{
    \begin{tikzpicture}
        \begin{axis}[
         boxplot/draw direction=y,
        ymode=log,
        ymin=0.05,
        y tick label style={log ticks with fixed point},
        ylabel={Normalized execution time},
        xmax=6.4,
        xtick={1,2,3,4,5},
        xticklabel style={align=center,rotate=45},
        xticklabels={qsort\\(0\%),1-sadilla\\(12\%),2-sadilla\\(28\%),quesadilla\\(24\%),radix\\(0\%)}
        ]
        \addplot+ [boxplot prepared={
            lower whisker=1.2776128821716999, lower quartile=4.4 ,
            median=20.38, upper quartile=28.29,
            upper whisker=86.7}
            ] coordinates {};
            \addplot[color=black,samples=2,smooth,ultra thick] coordinates {(0.5,1) (5.5,1)} node[right,pos=1,align=center] {splatt\\(36\%)};
            \addplot+ [boxplot prepared={
            lower whisker=0.26983874001633557, lower quartile=0.99 ,
            median=1.06, upper quartile=1.24,
            upper whisker=2.02}
            ] coordinates {};
            \addplot+ [boxplot prepared={
            lower whisker=0.06302045278677847, lower quartile=0.62 ,
            median=1.08, upper quartile=1.49,
            upper whisker=5.69}
            ] coordinates {};
            \addplot+ [boxplot prepared={
            lower whisker=2.973154144423424e-07, lower quartile=0.64 ,
            median=1.25, upper quartile=2.09,
            upper whisker=6.82}
            ] coordinates {};
            \addplot+ [boxplot prepared={
            lower whisker=0.13142526670319157, lower quartile=1.25 ,
            median=2.22, upper quartile=3.84,
            upper whisker=9.83}
            ] coordinates {};
            \end{axis}
            \end{tikzpicture}
}

\caption{Normalized execution times of sparse tensor transposition with various algorithms, aggregated over all 408 possible combinations of test tensors and output orderings.  Percentages in parentheses indicate the proportion of combinations for which each algorithm is the fastest.  Results are normalized to SPLATT (horizontal line) for each tensor and output ordering.  \text{\sc Top-1-sadilla} denotes \text{\sc Top-$K$-sadilla} with $K=1$.}
\label{fig:aggregate-results}
\end{figure}
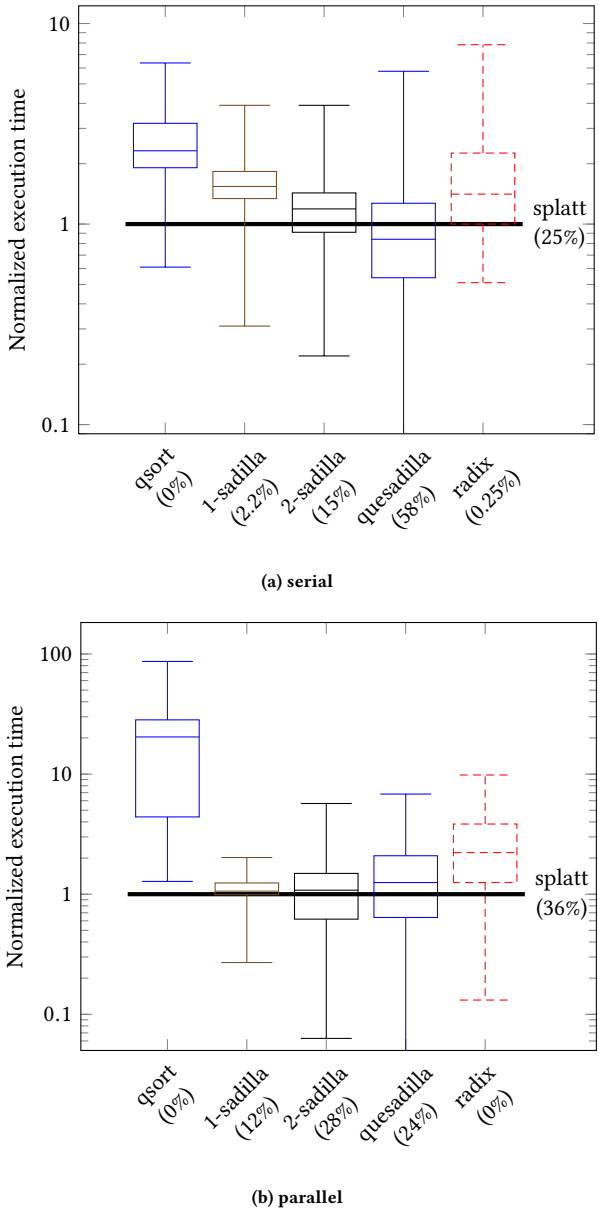

For each tensor in Table~\ref{tab:input-tensors}, we measured the normalized running times of SPLATT, \lstinline{qsort}, \text{\sc Top-$K$-sadilla}, \text{\sc Quesadilla}, and radix sort for transposing the tensor from its initial ordering $\sigma = (1, ..., r)$ to every $r!$ possible ordering.
Figure~\ref{fig:aggregate-results} shows the results of these experiments aggregated over all 408 possible combinations of input tensors and output orderings.
The appendix includes more detailed results that show the performance of each algorithm for every combination of input tensor and output ordering.

In serial tests, these results demonstrate that \text{\sc Quesadilla}
outperforms SPLATT, radix sort, and \lstinline{qsort} on 60\% of the sparse
tensor transpositions. For half of all combinations, \text{\sc Quesadilla} is
at least 1.19$\times$ faster than SPLATT, 1.68$\times$ faster than radix
sort, and 2.76$\times$ faster than \lstinline{qsort}. In parallel tests, at
least one of \text{\sc Quesadilla} or \text{\sc{Top-2-sadilla}} was the best
strategy for 52\% of all tensor and transposition combinations.

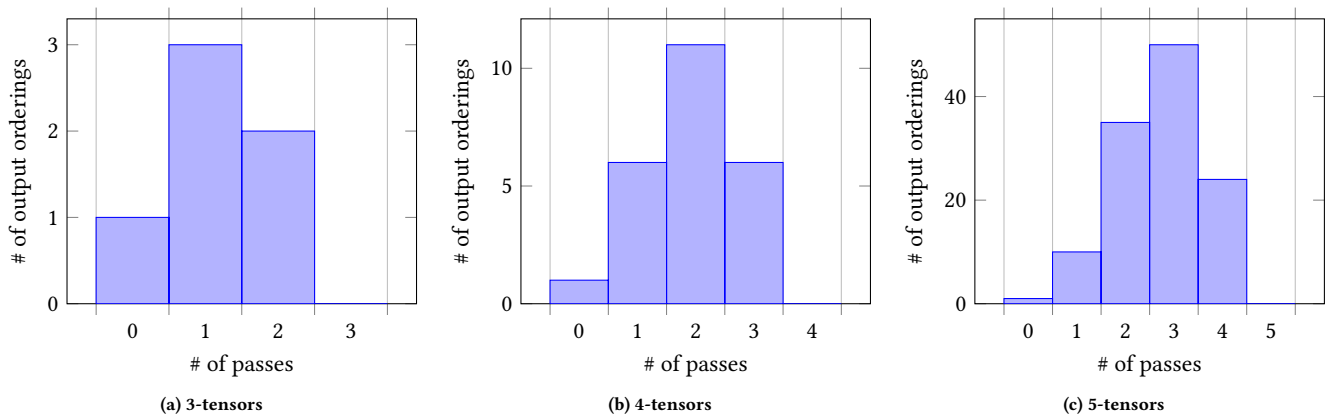
\begin{figure*}
\centering
\subfloat[3-tensors]{%
\begin{tikzpicture}
\begin{axis}[
width=0.35\textwidth,
ybar interval,
enlarge y limits={upper=0},
ylabel={\# of output orderings},
xlabel={\# of passes}
]
\addplot coordinates {
 (0,1)
 (1,3)
 (2,2)
 (3,0)
  (4,0)
 };
\end{axis}
\end{tikzpicture}
}
\hspace{2mm}
\subfloat[4-tensors]{%
\begin{tikzpicture}
\begin{axis}[
width=0.35\textwidth,
ybar interval,
enlarge y limits={upper=0},
ylabel={\# of output orderings},
xlabel={\# of passes}
]
\addplot coordinates {
 (0,1)
 (1,6)
 (2,11)
 (3,6)
 (4,0)
  (5,0)
 };
\end{axis}
\end{tikzpicture}
}
\hspace{2mm}
\subfloat[5-tensors]{%
\begin{tikzpicture}
\begin{axis}[
ybar interval,
width=0.35\textwidth,
enlarge y limits={upper=0},
ylabel={\# of output orderings},
xlabel={\# of passes}
]
\addplot coordinates {
 (0,1)
 (1,10)
 (2,35)
 (3,50)
 (4,24)
 (5,0)
 (6,0)
 };
\end{axis}
\end{tikzpicture}
}
\caption{Distributions of the number of sorting passes needed by \text{\sc Quesadilla} to transpose tensors of varying rank.}
\label{fig:eliminated-passes}
\end{figure*}

\begin{figure*}
\centering
\subfloat[Sort required on last mode (serial)]{    
\begin{tikzpicture}
\begin{axis}[
boxplot/draw direction=y,
xmin = 0.45,
xmax = 7.55,
ymode = log,
ymin=0.25,
ymax=6.6,
y tick label style={log ticks with fixed point},
ytick={0.1, 0.2, 0.3, 0.4, 0.5, 1, 1.5, 2, 3, 4, 5, 6},
ylabel=Normalized execution time, 
xtick={1,2,3,4,5,6, 7},
xticklabel style={align=center,font=\footnotesize},
xticklabels={qsort\\(0\%),1-sadilla\\(0\%),2-sadilla\\(18.8\%),3-sadilla\\(6.25\%),4-sadilla\\(0\%),quesadilla\\(33.3\%),radix\\(0\%)}
]
\addplot+ [boxplot prepared={
lower whisker=1.412925302693152, lower quartile=2.1196618412065584 ,
median=2.5592829618071224, upper quartile=3.733662488935095,
upper whisker=6.361807466533736}
] coordinates {};
\addplot[color=black,samples=2,smooth,ultra thick] coordinates {(0.5,1) (7.5,1)} node[below,pos=0.05,align=center] {\small splatt\\(36\%)};
\addplot+ [boxplot prepared={
lower whisker=1.0172613595557694, lower quartile=1.5170405518628376 ,
median=1.7590638363534845, upper quartile=1.8891245360002524,
upper whisker=2.2084654711464755}
] coordinates {};
\addplot+ [boxplot prepared={
lower whisker=0.89367832008742, lower quartile=1.090118783454837 ,
median=1.443702530812307, upper quartile=1.7276085075559162,
upper whisker=2.390701662603499}
] coordinates {};
\addplot+ [boxplot prepared={
lower whisker=0.8750474026137663, lower quartile=1.1067723751943184 ,
median=1.267125965341029, upper quartile=1.6532099179772897,
upper whisker=3.2282255353744986}
] coordinates {};
\addplot+ [boxplot prepared={
lower whisker=0.8412729456157428, lower quartile=1.030108508855823 ,
median=1.2873064143244533, upper quartile=1.7322538256846793,
upper whisker=4.625910896358661}
] coordinates {};
\addplot+ [boxplot prepared={
lower whisker=0.699845956744554, lower quartile=0.8995458291005227 ,
median=1.1387139929259646, upper quartile=1.5711000181918948,
upper whisker=4.203885185224443}
] coordinates {};
\addplot+ [boxplot prepared={
lower whisker=0.8811747214702095, lower quartile=1.2352013386554037 ,
median=1.4460492608927047, upper quartile=1.9839666044565856,
upper whisker=3.4367855183694114}
] coordinates {};
\end{axis}
\end{tikzpicture}
}
\hspace{4 mm}
\subfloat[No sort required on last mode (serial)]{
\begin{tikzpicture}
\begin{axis}[
 boxplot/draw direction=y,
xmin = 0.45,
xmax = 7.55,
ymin=0.25,
ymax=6.6,
ymode=log,
y tick label style={log ticks with fixed point},
ytick={0.1, 0.2, 0.3, 0.4, 0.5, 1, 1.5, 2, 3, 4, 5, 6},
ylabel=Normalized execution time,
xtick={1,2,3,4,5,6,7},
xticklabel style={align=center,font=\footnotesize},
xticklabels={qsort\\(0\%),1-sadilla\\(0\%),2-sadilla\\(0\%),3-sadilla\\(0\%),4-sadilla\\(0\%),quesadilla\\(100\%),radix\\(0\%)}
]
\addplot+ [boxplot prepared={
lower whisker=1.4444242979702309, lower quartile=1.6768999451724331 ,
median=1.78992074596488, upper quartile=2.049994421956943,
upper whisker=2.824507968293154}
] coordinates {};
\addplot[color=black,samples=2,smooth,ultra thick] coordinates {(0.5,1) (7.5,1)} node[below,pos=0.05,align=center] {\small splatt\\(0\%)};
\addplot+ [boxplot prepared={
lower whisker=0.986069614693971, lower quartile=1.2463801797279115 ,
median=1.3586986553449707, upper quartile=1.4352981059827814,
upper whisker=1.879344074130988}
] coordinates {};
\addplot+ [boxplot prepared={
lower whisker=1.0846173164178006, lower quartile=1.169995924810673 ,
median=1.2212069212635304, upper quartile=1.3284197910533362,
upper whisker=1.6397531108439642}
] coordinates {};
\addplot+ [boxplot prepared={
lower whisker=1.0681223892039622, lower quartile=1.1769922651722617 ,
median=1.2280867012952594, upper quartile=1.3919326199586197,
upper whisker=1.5815545626232266}
] coordinates {};
\addplot+ [boxplot prepared={
lower whisker=1.0360769573126118, lower quartile=1.1279167396132208 ,
median=1.224517838348123, upper quartile=1.3427897394642125,
upper whisker=1.5351350261413381}
] coordinates {};
\addplot+ [boxplot prepared={
lower whisker=0.00000001, lower quartile=0.3629366083240041 ,
median=0.41560209999103914, upper quartile=0.5146290248058191,
upper whisker=0.6213630904360449}
] coordinates {};
\addplot+ [boxplot prepared={
lower whisker=0.8670487216980513, lower quartile=0.911484455612947 ,
median=0.9580684067683403, upper quartile=1.1035930435829264,
upper whisker=1.461404301280929}
] coordinates {};
\addplot+ [boxplot prepared={
lower whisker=0.986069614693971, lower quartile=1.2463801797279115 ,
median=1.3586986553449707, upper quartile=1.4352981059827814,
upper whisker=1.879344074130988}
] coordinates {};
\end{axis}
\end{tikzpicture}
}

\subfloat[Sort required on last mode (parallel)]{
    \begin{tikzpicture}
        \begin{axis}[
         boxplot/draw direction=y,
         ymode=log,
           ymin=0.04,
        ymax=10,
        y tick label style={log ticks with fixed point},
        ytick={0.05, 0.1, 0.2, 0.3, 0.4, 0.5, 1, 1.5, 2, 3, 4, 5, 7,9},
        ylabel=Normalized execution time, 
        xmin = 0.45,
        xmax = 7.55,
        xtick={1,2,3,4,5,6, 7},
        xticklabel style={align=center,font=\footnotesize},
        xticklabels={qsort\\(0\%),1-sadilla\\(14.6\%),2-sadilla\\(43.8\%),3-sadilla\\(25\%),4-sadilla\\(1.04\%),quesadilla\\(2.08\%),radix\\(0\%)}
        ]
        \addplot+ [boxplot prepared={
        lower whisker=1.7931824067358233, lower quartile=2.6430758536675865 ,
        median=3.8842287511077913, upper quartile=5.624083113320541,
        upper whisker=9.340446321854028}
        ] coordinates {};
        \addplot[color=black,samples=2,smooth,ultra thick] coordinates {(0.5,1) (7.5,1)} node[below,pos=0.05,align=center] {\small splatt\\(14\%)};
        \addplot+ [boxplot prepared={
        lower whisker=0.7209110982768987, lower quartile=0.9035587749629512 ,
        median=1.0088162096653372, upper quartile=1.0535247796044351,
        upper whisker=1.2918243352673278}
        ] coordinates {};
        \addplot+ [boxplot prepared={
        lower whisker=0.1943247072575996, lower quartile=0.5365106297780777 ,
        median=0.6580714869990805, upper quartile=1.0041275601433277,
        upper whisker=3.0240469912714705}
        ] coordinates {};
        \addplot+ [boxplot prepared={
        lower whisker=0.17122275423345013, lower quartile=0.5515666332123627 ,
        median=0.9868814912728949, upper quartile=1.3712158771251133,
        upper whisker=3.004430725808945}
        ] coordinates {};
        \addplot+ [boxplot prepared={
        lower whisker=0.495581595515349, lower quartile=0.688533210982723 ,
        median=1.2060660093381403, upper quartile=1.641391412303317,
        upper whisker=3.4300837567036746}
        ] coordinates {};
        \addplot+ [boxplot prepared={
        lower whisker=0.5284658639046319, lower quartile=0.6443517828916756 ,
        median=1.1930410870441412, upper quartile=1.6182491349345858,
        upper whisker=2.971874414178237}
        ] coordinates {};
        \addplot+ [boxplot prepared={
        lower whisker=0.5591210803053872, lower quartile=0.8471531296488181 ,
        median=1.3105347988768226, upper quartile=1.975346911323074,
        upper whisker=3.024655473879586}
        ] coordinates {};
        \end{axis}
        \end{tikzpicture}
}
\hspace{4mm}
\subfloat[No sort required on last mode (parallel)]{
    \begin{tikzpicture}
        \begin{axis}[
         boxplot/draw direction=y,
         ymode = log,
        ymin=0.04,
        ymax=10,
        y tick label style={log ticks with fixed point},
        ytick={0.05, 0.1, 0.2, 0.3, 0.4, 0.5, 1, 1.5, 2, 3, 4, 5, 7,9},
        ylabel=Normalized execution time, 
        xmin = 0.45,
        xmax = 7.55,
        xtick={1,2,3,4,5,6, 7},
        xticklabel style={align=center,font=\footnotesize},
        xticklabels={qsort\\(0\%),1-sadilla\\(0\%),2-sadilla\\(0\%),3-sadilla\\(0\%),4-sadilla\\(0\%),quesadilla\\(100\%),radix\\(0\%)}
        ]
        \addplot+ [boxplot prepared={
            lower whisker=2.979762247809981, lower quartile=3.2877723083813732 ,
            median=4.159196776454441, upper quartile=5.640530394499387,
            upper whisker=9.724420852056829}
            ] coordinates {};
            \addplot[color=black,samples=2,smooth,ultra thick] coordinates {(0.5,1) (7.5,1)} node[below,pos=0.05,align=center] {\small splatt\\(0\%)};
           \addplot+ [boxplot prepared={
lower whisker=0.7743447278003213, lower quartile=0.9150982537301806 ,
median=1.002271961441089, upper quartile=1.0777197625655672,
upper whisker=1.2444756130184254}
] coordinates {};
\addplot+ [boxplot prepared={
lower whisker=0.20975399801423816, lower quartile=0.53504458118924 ,
median=0.710821663163784, upper quartile=0.8237667155713103,
upper whisker=1.2063443306769408}
] coordinates {};
\addplot+ [boxplot prepared={
lower whisker=0.19263063986986498, lower quartile=0.3089054265630806 ,
median=0.4893668307546333, upper quartile=0.6707210477376914,
upper whisker=1.2434038141745467}
] coordinates {};
\addplot+ [boxplot prepared={
lower whisker=0.17063931539881846, lower quartile=0.25161801845864257 ,
median=0.40310690900202517, upper quartile=0.6200444203471318,
upper whisker=1.092514599036201}
] coordinates {};
\addplot+ [boxplot prepared={
lower whisker=0.00001, lower quartile=0.07911931533299622 ,
median=0.13884755264631563, upper quartile=0.31018213509700954,
upper whisker=0.8033147577229772}
] coordinates {};
\addplot+ [boxplot prepared={
lower whisker=0.9418560790627816, lower quartile=1.0893215581126023 ,
median=1.5619000174364865, upper quartile=1.891018080007967,
upper whisker=3.209592256566592}
] coordinates {};
        \end{axis}
        \end{tikzpicture}
}
\caption{Normalized execution times of sparse tensor transposition with various algorithms for the lbnl-network tensor, aggregated over (a) all output orderings where \text{\sc Quesadilla} needs to sort on the last mode and (b) all output orderings where the last mode does not need to be sorted.  Percentages in parentheses indicate the proportion of combinations for which each algorithm is the fastest.  Results are normalized to SPLATT (horizontal lines) for each tensor and output ordering. Again, e.g. \text{\sc Top-1-sadilla} denotes \text{\sc Top-$K$-sadilla} with $K=1$.}
\label{fig:results-lbnl}
\end{figure*}
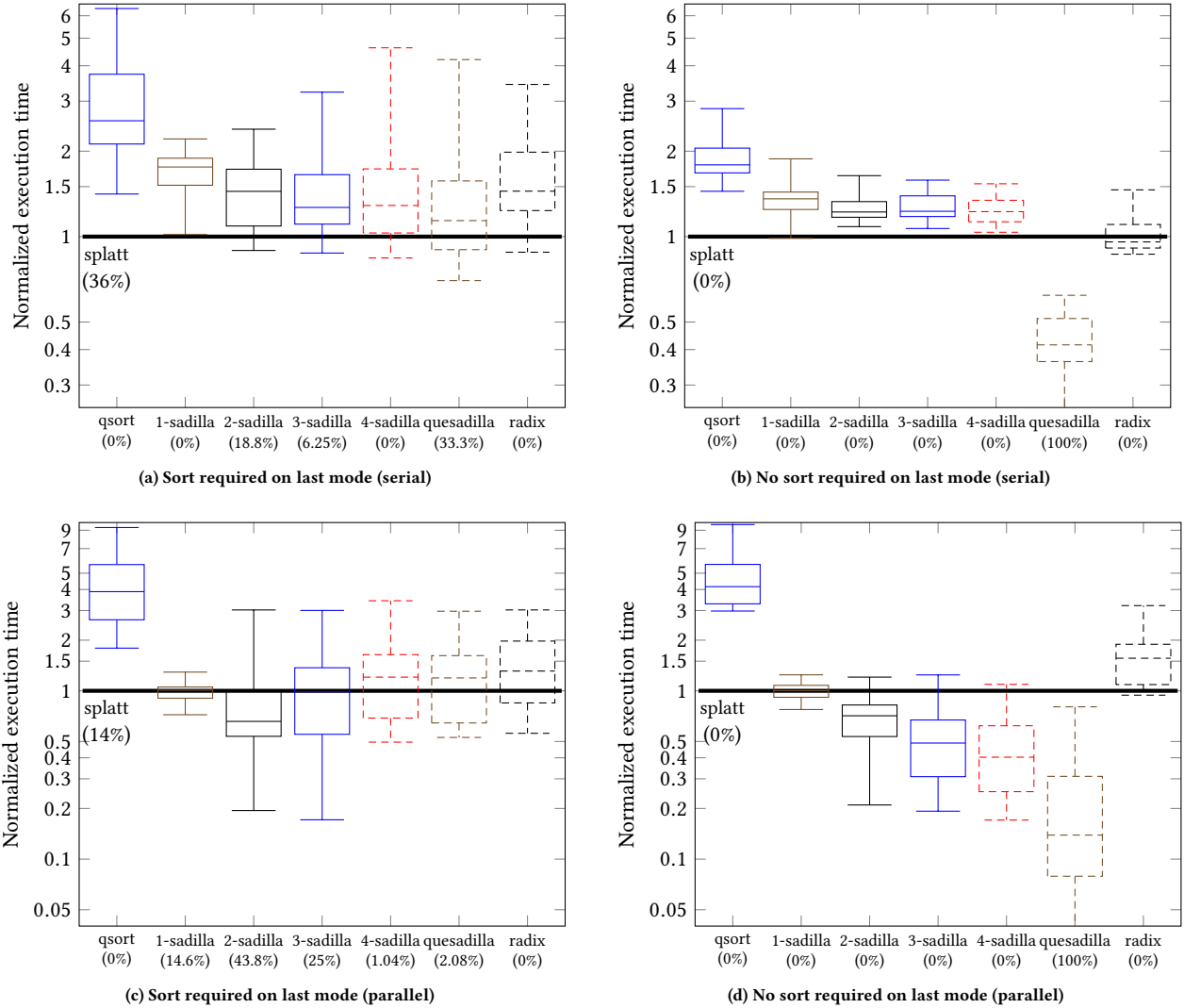

\begin{figure*}
\centering
\subfloat[lbnl-network (serial)]{%
\begin{tikzpicture}
\begin{axis}[
ybar,
ymin = 0,
scale = .8, 
enlargelimits=0.075,
enlarge y limits={upper=0},
ylabel={Execution time (ms)},
symbolic x coords={(2 []),(3 []),(4 []),(5 []),(3 [1]),(4 [1]),(5 [1]),(4 [1 2]),(5 [1 2]),(5 [1 2 3])},
xtick=data,
x tick label style={rotate=90,anchor=east}
]

\addplot coordinates {
 ((2 []),44.9451)
 ((3 []),45.9637)
 ((4 []),46.4687)
 ((5 []),71.9164)
 ((3 [1]),61.9874)
 ((4 [1]),66.1256)
 ((5 [1]),96.1472)
 ((4 [1 2]),67.2675)
 ((5 [1 2]),89.5054)
 ((5 [1 2 3]),101.775)
 };
\end{axis}
\end{tikzpicture}
}
\hspace{4mm}
\subfloat[vast-2015-mc1-5d (serial)]{%
\begin{tikzpicture}
\begin{axis}[
ybar,
ymin = 0,
scale = .8, 
enlargelimits=0.075,
enlarge y limits={upper=0},
ylabel={Execution time (ms)},
symbolic x coords={(2 []),(3 []),(4 []),(5 []),(3 [1]),(4 [1]),(5 [1]),(4 [1 2]),(5 [1 2]),(5 [1 2 3])},
xtick=data,
x tick label style={rotate=90,anchor=east}
]

\addplot coordinates {
 ((2 []),1558.43)
 ((3 []),664.807)
 ((4 []),843.065)
 ((5 []),843.578)
 ((3 [1]),971.527)
 ((4 [1]),2553.26)
 ((5 [1]),2538.61)
 ((4 [1 2]),3565.54)
 ((5 [1 2]),3472.33)
 ((5 [1 2 3]),3347.07)
 };
\end{axis}
\end{tikzpicture}
}

\subfloat[lbnl-network (parallel)]{%
\begin{tikzpicture}
\begin{axis}[
ybar,
scale = .8, 
enlargelimits=0.05,
ylabel={time (ms)},
xlabel={sort},
symbolic x coords={(2 []),(3 []),(4 []),(5 []),(3 [1]),(4 [1]),(5 [1]),(4 [1 2]),(5 [1 2]),(5 [1 2 3])},
xtick=data,
x tick label style={rotate=90,anchor=east}
]

\addplot coordinates {
 ((2 []),3.451671451330185)
 ((3 []),3.5523921251296997)
 ((4 []),4.220690578222275)
 ((5 []),106.08839616179466)
 ((3 [1]),26.423150673508644)
 ((4 [1]),24.866636842489243)
 ((5 [1]),68.97846609354019)
 ((4 [1 2]),24.102650582790375)
 ((5 [1 2]),72.22556695342064)
 ((5 [1 2 3]),65.76675176620483)
 };
\end{axis}
\end{tikzpicture}
}
\hspace{4mm}
\subfloat[vast-2015-mc1-5d (parallel)]{%
\begin{tikzpicture}
\begin{axis}[
ybar,
scale = .8, 
enlargelimits=0.05,
ylabel={time (ms)},
xlabel={sort},
symbolic x coords={(2 []),(3 []),(4 []),(5 []),(3 [1]),(4 [1]),(5 [1]),(4 [1 2]),(5 [1 2]),(5 [1 2 3])},
xtick=data,
x tick label style={rotate=90,anchor=east}
]

\addplot coordinates {
 ((2 []),304.16058003902435)
 ((3 []),179.5085184276104)
 ((4 []),179.37827482819557)
 ((5 []),175.57837441563606)
 ((3 [1]),282.12855756282806)
 ((4 [1]),330.7906538248062)
 ((5 [1]),330.170214176178)
 ((4 [1 2]),420.0591817498207)
 ((5 [1 2]),408.06737169623375)
 ((5 [1 2 3]),413.84611651301384)
 };
\end{axis}
\end{tikzpicture}
}
\centering
\caption{
Execution times of \text{\sc PartialSort} for sorting different modes of two test tensors.  Labels along the x-axes indicate the modes being sorted; for instance, (4 [1 2]) denotes sort on mode 4 assuming modes 1 and 2 are bucketed.
}
\label{fig:single-pass-results}
\end{figure*}
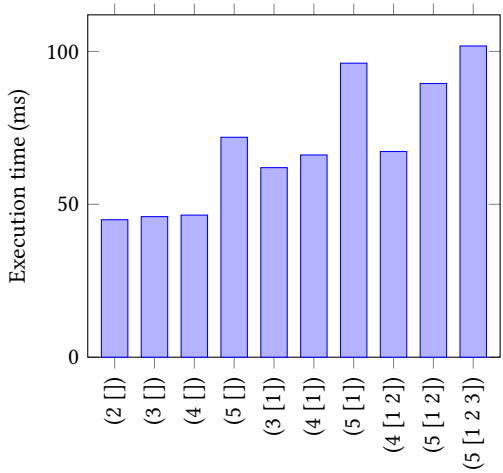
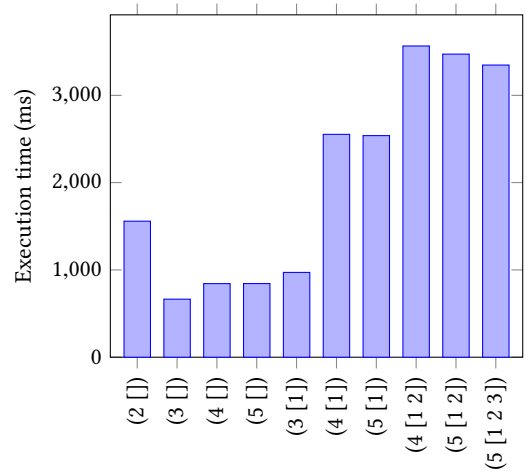
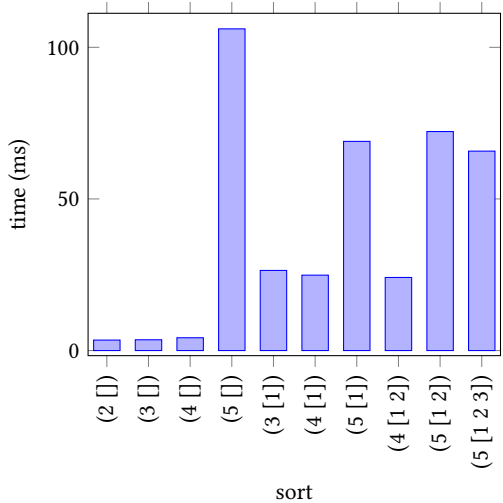
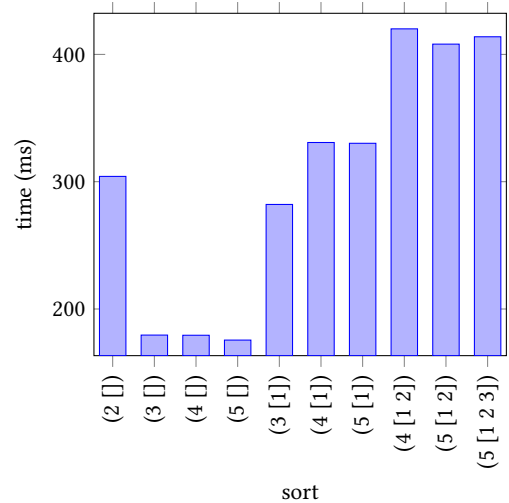

\text{\sc Quesadilla} is able to significantly outperform radix sort by minimizing the number of passes over the input tensor.
As Figure~\ref{fig:eliminated-passes} shows, \text{\sc Quesadilla} exploits the partial ordering of the input tensor to eliminate at least one sorting pass for all possible output orderings and eliminate two or more passes for the majority of output orderings.
By contrast, radix sort always makes as many sorting passes as there are modes in the input tensor, thereby incurring overhead from unnecessary memory traffic.

\text{\sc Quesadilla}'s performance, however, depends to a large degree on the dimensions of the input tensor as well as the ordering of modes in the output.
In particular, \text{\sc Quesadilla} is more efficient when it does not have to sort large modes.
Figure~\ref{fig:results-lbnl}, for instance, shows \text{\sc Quesadilla}'s performance for the lbnl-network tensor, whose last mode is significantly larger than the other modes.
For output orderings where \text{\sc Quesadilla} does not have to sort the last mode, \text{\sc Quesadilla} significantly outperforms all other algorithms we evaluate.
On the other hand, SPLATT and \text{\sc Top-$K$-sadilla} (where $K < r$) are more efficient for the other output orderings, with both being faster than \text{\sc Quesadilla} in approximately two-thirds of cases where \text{\sc Quesadilla} must sort the last mode in the serial implementation, and nearly all cases in the parallel implementation.
This can be attributed to the fact that each invocation of \text{\sc PartialSort} in \text{\sc Quesadilla} requires a histogram containing $n_k$ bins, where $n_k$ is the size of the mode being sorted.
When $n_k$ is large, accesses into the histogram are less likely to hit the cache, thereby limiting performance.
Furthermore, constructing the histogram incurs $O(n_k)$ overhead, which becomes more significant when $n_k$ is large.
Thus, as Figure~\ref{fig:single-pass-results} shows, \text{\sc PartialSort} is significantly slower for large modes than for small modes, assuming the bucketed dimensions are the same.
This, in turn, limits \text{\sc Quesadilla}'s performance for input tensors and output orderings that require sorting large modes.
By contrast, SPLATT and \text{\sc Top-$K$-sadilla} use comparison-based sorting algorithms to sort all but the first mode or the first several modes respectively, thus making their performance less dependent on the dimensions of the input tensor. 

When $K=1$, \text{\sc Top-$K$-sadilla} reduces to the \text{\sc Top-1-sadilla} algorithm that is similar to what SPLATT implements for sorting COO tensors, which we summarize in Section~\ref{transposition}.
Unlike SPLATT, which uses a custom hand-optimized implementation of quicksort, serial \text{\sc Top-1-sadilla} uses \lstinline{qsort} from C \lstinline{stdlib} to sort nonzeros within each bucket created by the initial histogram sort.
As Figure~\ref{fig:aggregate-results} shows, serial SPLATT outperforms serial \text{\sc Top-1-sadilla} for most tensor transpositions in our experiments, thereby demonstrating that SPLATT's custom implementation of quicksort is more efficient than \lstinline{qsort}.
This performance difference suggests we can improve \text{\sc Top-$K$-sadilla}'s performance by using more optimized implementations of comparison sort to sort each bucket.

\section{Conclusion}

We have described an algorithm to transpose sparse tensors faster than simply sorting a list of coordinates.
By taking advantage of the lexicographic ordering of the input and knowledge of the requested transposition, our algorithm applies only a subset of the passes of a radix sort and thereby reduces the amount of work required to sort the coordinates.
We provide two non-comparison based partial sorting algorithms for radix sort passes that are optimized for different situations.
We prove two things: (1) We prove that our algorithm minimizes the total calls to either sorting algorithm. (2) We prove that among sorts with the minimum total calls, we minimize the number of calls to the more expensive of the two.
The amount of work required by our algorithm is proportional to the number of modes that need reordering in the transposition.
We evaluated our algorithm empirically with a C++ implementation, and showed that it produced significant improvements over existing approaches.

As sparse tensor representations receive increasing study, diversity in tensor formats will increase and applications will more frequently convert between formats.
Sparse tensor transposition is the most basic instance of sparse format conversion, and an important subroutine in several format conversions.
We have provided evidence that naive algorithms for sparse tensor transpositions can be improved substantially, but there are further improvements that need investigation.

Focusing on the multi-pass coordinate-sorting-based transposition technique we describe, improvements can be made in scheduling passes, the implementation of passes themselves, and handling the buckets.
Although we minimize the number of passes over the data, we don't necessarily pick a schedule of passes that minimizes the true runtime.
Since the bucketed histogram sort costs more than the histogram sort, we can improve our scheduling by minimizing a cost model which reflects the true costs of the passes.

We can improve the implementation of a sorting pass by reducing the size of coordinates using bit-packing techniques.
If the mode to be sorted has a large dimension, it can make sense to perform the histogram sort itself as a radix sort, with multiple passes and a radix smaller than the dimension.
In some cases, we can also fuse the first loop of the next histogram sort into the last loop of the current one, reducing the number of reads.

Discovering the buckets is expensive.
If we need to perform several bucketed histogram sorts with the same buckets, we only need to discover the buckets once at the beginning, perform the histogram sorts, and then sort on the buckets at the end, skipping the bucketing step between the two sorts.
Since $l$ is constant, we can use the same buckets for all iterations of the loop on line \ref{algo:quesadilla:sortloop} of Algorithm \ref{algo:quesadilla}. 
Additionally, instead of evaluating the all $l$ entries of each coordinate to discover the buckets, we can use buckets from the previous pass, which differ precisely when the previous entries differed.
While such an optimization would involve permuting a bucket array, we can avoid examining entire coordinates during bucket discovery, saving a factor of $r$ in the asymptotic analysis.

If our goal is to transpose tensors stored in formats other than COO, including formats like HiCOO \cite{li_hicoo_2018}, BICRS \cite{yzelman_cache-oblivious_2012}, and JAD \cite{saad_krylov_1989}, additional optimizations may present themselves.
For example, instead of converting to coordinates, then sorting, our first histogram sort can iterate over the input format in order, fusing the conversion to coordinates into the first histogram sort. 
Additionally, the sorting techniques we describe in this work may apply directly to the format we want to transpose.
If the tensor is in CSF, for example, it may be possible to sort the nodes in the CSF tree directly, moving the nodes instead of moving entire subtrees.

\begin{acks}
This work was supported by a grant from the Toyota Research Institute, DARPA
PAPPA Grant HR00112090017, and a DOE CSGF Fellowship DE-FG02-97ER25308.
\end{acks}

\bibliographystyle{ACM-Reference-Format}
\bibliography{main}

\pagebreak[4]

\appendix
\clearpage
\newpage
\begin{onecolumn}
\section{Aggregate Results}

These tables contain statistics about the performance of the algorithms across all permutations and tensors. In addition, we counted the number of times that each strategy was the best of all of the strategies. We exclude Top-2-sadilla and Top-3-sadilla from these results, as the strategy is not comparable across tensors of different orders.
\end{onecolumn}

\twocolumn

\begin{table*}[h]
\caption{Aggregate timing results (serial)}
\begin{tabular}{ |c|c|c|c|c|c|c| } 
\hline
stat & qsort & splatt & 1-sadilla & 2-sadilla & quesadilla & radix\\
\hline
min & 0.61 & 1.00 & 0.31 & 0.22 & 0.00 & 0.51\\
Q1 &  1.91 &  1.00 &  1.34 &  0.91 &  0.54 &  1.00\\
median &  2.32 &  1.00 &  1.54 &  1.19 &  0.84 &  1.41\\
Q3 &  3.18 &  1.00 &  1.83 &  1.43 &  1.27 &  2.26\\
max &  6.36 &  1.00 &  3.91 &  3.91 &  5.78 &  7.84\\
wins &  0 & 25 & 2.2 & 15 & 58 & 0.25\\
\hline
\end{tabular}
\end{table*}

\begin{table*}[h]
\caption{Aggregate timing results (parallel)}
\begin{tabular}{ |c|c|c|c|c|c|c| } 
\hline
stat & qsort & splatt & 1-sadilla & 2-sadilla & quesadilla & radix\\
\hline
min & 1.27 & 1 & 0.27 & 0.063 & 0.00 & 0.13\\
Q1 &  4.40 &  1.00 &  0.99 &  0.62 &  0.64 &  1.25\\
median & 20.38 &  1.00 &  1.06 &  1.08 &  1.25 &  2.22\\
Q3 & 28.29 &  1.00 &  1.24 &  1.49 &  2.09 &  3.84\\
max & 86.70 &  1.00 &  2.02 &  5.69 &  6.82 &  9.83\\
wins &  0 & 36 & 12 & 28 & 24 &  0\\
\hline
\end{tabular}
\end{table*}

\begin{table*}[h]
\caption{Median results by tensor (serial)}
\centering
\begin{tabular}{ |c|c|c|c|c|c|c| } 
    \hline
filename & qsort & splatt & 1-sadilla & 2-sadilla & quesadilla & radix\\
\hline
flickr-3d &  4.31 &  1.00 &  1.71 &  1.45 &  0.99 &  2.40\\
nell-1 &  3.04 &  1.00 &  1.94 &  1.15 &  0.81 &  3.08\\
nell-2 &  2.80 &  1.00 &  1.90 &  1.35 &  0.60 &  1.34\\
vast-2015-mc1-3d &  3.49 &  1.00 &  1.49 &  1.18 &  0.55 &  2.07\\
chicago-crime-comm &  2.18 &  1.00 &  1.71 &  1.08 &  0.52 &  0.93\\
delicious-4d &  2.66 &  1.00 &  1.51 &  1.11 &  1.17 &  1.89\\
enron &  2.43 &  1.00 &  1.53 &  1.06 &  0.64 &  1.18\\
flickr-4d &  3.48 &  1.00 &  1.54 &  1.21 &  1.11 &  2.02\\
nips &  2.84 &  1.00 &  1.88 &  1.35 &  0.81 &  1.79\\
uber &  2.24 &  1.00 &  1.68 &  1.30 &  0.63 &  1.10\\
lbnl-network &  2.29 &  1.00 &  1.67 &  1.34 &  1.04 &  1.37\\
vast-2015-mc1-5d &  1.97 &  1.00 &  1.35 &  0.92 &  0.75 &  1.25\\
\hline
\end{tabular}
\end{table*}

\begin{table*}
\caption{Median by tensor (parallel)}
\centering
\begin{tabular}{ |c|c|c|c|c|c|c| } 
    \hline
    filename & qsort & splatt & 1-sadilla & 2-sadilla & quesadilla & radix\\
    \hline
    flickr-3d & 32.39 &  1.00 &  1.06 &  1.03 &  0.90 &  5.52\\
    nell-1 & 22.94 &  1.00 &  1.08 &  1.05 &  0.88 &  6.33\\
    nell-2 & 32.69 &  1.00 &  1.29 &  1.52 &  1.35 &  2.69\\
    vast-2015-mc1-3d & 25.64 &  1.00 &  1.30 &  0.93 &  0.60 &  2.45\\
    chicago-crime-comm & 19.41 &  1.00 &  1.25 &  0.93 &  0.84 &  1.82\\
    delicious-4d & 27.22 &  1.00 &  1.15 &  1.52 &  1.88 &  4.85\\
    enron & 28.81 &  1.00 &  1.28 &  1.68 &  1.61 &  3.20\\
    flickr-4d & 27.69 &  1.00 &  1.11 &  1.49 &  1.74 &  5.73\\
    nips & 40.06 &  1.00 &  1.43 &  1.37 &  1.24 &  2.02\\
    uber & 41.10 &  1.00 &  1.60 &  1.53 &  0.95 &  1.70\\
    lbnl-network &  3.97 &  1.00 &  1.01 &  0.68 &  1.01 &  1.33\\
    vast-2015-mc1-5d & 23.18 &  1.00 &  1.05 &  1.28 &  1.84 &  3.06\\
    \hline
\end{tabular}
\end{table*}

\begin{table*}[h]
\caption{Wins by tensor (serial)}
\begin{tabular}{ |c|c|c|c|c|c|c| } 
    \hline
filename & qsort & splatt & 1-sadilla & 2-sadilla & quesadilla & radix\\
\hline
flickr-3d &     0\% &    50\% &     0\% &     0\% &    50\% &     0\%\\
nell-1 &     0\% &  33.3\% &     0\% &     0\% &  66.7\% &     0\%\\
nell-2 &     0\% &  33.3\% &     0\% &     0\% &  66.7\% &     0\%\\
vast-2015-mc1-3d &     0\% &     0\% &     0\% &     0\% &   100\% &     0\%\\
chicago-crime-comm &     0\% &  12.5\% &     0\% &     0\% &  87.5\% &     0\%\\
delicious-4d &     0\% &  45.8\% &     0\% &  16.7\% &  37.5\% &     0\%\\
enron &     0\% &  12.5\% &     0\% &  4.17\% &  79.2\% &  4.17\%\\
flickr-4d &     0\% &  45.8\% &     0\% &  12.5\% &  41.7\% &     0\%\\
nips &     0\% &    25\% &  12.5\% &  8.33\% &  54.2\% &     0\%\\
uber &     0\% &  16.7\% &     0\% &     0\% &  83.3\% &     0\%\\
lbnl-network &     0\% &  36.7\% &     0\% &  16.7\% &  46.7\% &     0\%\\
vast-2015-mc1-5d &     0\% &  10.8\% &     5\% &  25.8\% &  58.3\% &     0\%\\
\hline
\end{tabular}
\end{table*}

\begin{table*}
\caption{Wins by tensor (parallel)}
\centering
\begin{tabular}{ |c|c|c|c|c|c|c| } 
    \hline
    filename & qsort & splatt & 1-sadilla & 2-sadilla & quesadilla & radix\\
    \hline
    flickr-3d &     0\% &  16.7\% &  16.7\% &     0\% &  66.7\% &     0\%\\
    nell-1 &     0\% &  16.7\% &  16.7\% &  16.7\% &    50\% &     0\%\\
    nell-2 &     0\% &    50\% &  16.7\% &     0\% &  33.3\% &     0\%\\
    vast-2015-mc1-3d &     0\% &  33.3\% &  16.7\% &     0\% &    50\% &     0\%\\
    chicago-crime-comm &     0\% &  8.33\% &  16.7\% &  20.8\% &  54.2\% &     0\%\\
    delicious-4d &     0\% &    50\% &  16.7\% &  12.5\% &  20.8\% &     0\%\\
    enron &     0\% &  79.2\% &     0\% &     0\% &  20.8\% &     0\%\\
    flickr-4d &     0\% &  41.7\% &  16.7\% &  20.8\% &  20.8\% &     0\%\\
    nips &     0\% &  62.5\% &  4.17\% &  8.33\% &    25\% &     0\%\\
    uber &     0\% &  37.5\% &     0\% &  8.33\% &  54.2\% &     0\%\\
    lbnl-network &     0\% &  11.7\% &  11.7\% &  53.3\% &  23.3\% &     0\%\\
    vast-2015-mc1-5d &     0\% &  49.2\% &    15\% &  26.7\% &  9.17\% &     0\%\\
    \hline
\end{tabular}
\end{table*}
\clearpage
\newpage

\newpage
\begin{onecolumn}
\section{Detailed Results}

These tables contain the results of running all of the experiments. They are organized by file and the permutations are ordered lexicographically. A cell that contains a value of 1 is colored white. This value means that it performed as well as SPLATT. A cell that contains a value $> 1$ is colored red and performed worse than SPLATT. A cell that contains a value $< 1$ is colored blue and performed better than SPLATT.
\end{onecolumn}

\twocolumn
\begin{figure*}[h]
    \centering
    \caption{flickr-3d results normalized by splatt (serial)}

\pgfplotstabletypeset[
    /pgfplots/colormap={whiteblue}{rgb255(0cm)=(49,130,189); rgb255(.14cm)=(255,255,255); rgb255(1cm)=(222,45,38)},
    colorCell/.style={
        color cells={min=0,max=7}
    },
    itemCell/.style={
        string type,
        column name={},
    },
    col sep=comma,
    columns={permutation,qsort,splatt,1-sadilla,2-sadilla,quesadilla,radix},
    columns/permutation/.style={itemCell},
   columns/qsort/.style={colorCell},
   columns/splatt/.style={colorCell},
   columns/1-sadilla/.style={colorCell},
   columns/2-sadilla/.style={colorCell},
   columns/quesadilla/.style={colorCell},
   columns/radix/.style={colorCell},
]{
permutation,qsort,splatt,1-sadilla,2-sadilla,quesadilla,radix
123,4.11808550728536,1.0,1.8382139831549893,0.5617846928868515,0,5.594703938872487
132,2.75092756597953,1.0,1.564882647905631,1.4746990389022663,1.014534999634603,1.853590455844966
213,4.402011806002546,1.0,1.3999019057063113,1.43488823738455,0.9700926900059105,2.444900787951217
231,4.8658617423959605,1.0,1.575739891057223,3.3689069193708012,3.092146152405659,3.40710019419011
312,5.477160549141566,1.0,2.0726373394836024,1.3769028185884427,0.7455381851206154,2.3534761835734748
321,4.216974657128984,1.0,2.000087525946792,1.5636410623934236,1.3951540104594877,1.6595628634540418
}
\end{figure*}
\begin{figure*}[h]
    \centering
    \caption{nell-1 results normalized by splatt (serial)}

\pgfplotstabletypeset[
    /pgfplots/colormap={whiteblue}{rgb255(0cm)=(49,130,189); rgb255(.14cm)=(255,255,255); rgb255(1cm)=(222,45,38)},
    colorCell/.style={
        color cells={min=0,max=7}
    },
    itemCell/.style={
        string type,
        column name={},
    },
    col sep=comma,
    columns={permutation,qsort,splatt,1-sadilla,2-sadilla,quesadilla,radix},
    columns/permutation/.style={itemCell},
   columns/qsort/.style={colorCell},
   columns/splatt/.style={colorCell},
   columns/1-sadilla/.style={colorCell},
   columns/2-sadilla/.style={colorCell},
   columns/quesadilla/.style={colorCell},
   columns/radix/.style={colorCell},
]{
permutation,qsort,splatt,1-sadilla,2-sadilla,quesadilla,radix
123,4.215129852234928,1.0,1.9939447981970528,0.9602057421355381,0,7.840832045247769
132,3.120579599494259,1.0,1.8903795075100787,3.9136140940377278,3.687825580120249,3.7965011668849944
213,4.225125160696888,1.0,2.1373570608391352,1.3443147076620716,0.7039480802744225,4.831746640218517
231,2.9585025325731102,1.0,2.048482654227607,2.5123780461207676,2.355601760298219,2.370951542766977
312,1.7717035814567648,1.0,1.0055601315442972,0.8757679970792224,0.7735909247360611,1.161527505373592
321,1.6996877479725148,1.0,1.0832274200730236,0.9270569858355585,0.8408911683297499,0.9050544966421933
}
\end{figure*}

\begin{figure*}[h]
    \centering
    \caption{nell-2 results normalized by splatt (serial)}

\pgfplotstabletypeset[
    /pgfplots/colormap={whiteblue}{rgb255(0cm)=(49,130,189); rgb255(.14cm)=(255,255,255); rgb255(1cm)=(222,45,38)},
    colorCell/.style={
        color cells={min=0,max=7}
    },
    itemCell/.style={
        string type,
        column name={},
    },
    col sep=comma,
    columns={permutation,qsort,splatt,1-sadilla,2-sadilla,quesadilla,radix},
    columns/permutation/.style={itemCell},
   columns/qsort/.style={colorCell},
   columns/splatt/.style={colorCell},
   columns/1-sadilla/.style={colorCell},
   columns/2-sadilla/.style={colorCell},
   columns/quesadilla/.style={colorCell},
   columns/radix/.style={colorCell},
]{
permutation,qsort,splatt,1-sadilla,2-sadilla,quesadilla,radix
123,4.150286115586783,1.0,2.4082875694396875,1.341319478731659,0,4.1520563529595
132,2.4977817282668955,1.0,1.8041684388640968,1.6163954528192577,1.3405781126658853,1.4700464960674875
213,5.500879197792959,1.0,2.866275209898877,1.763687269013009,0.5156003455285613,3.6015345772309773
231,2.5482539161719644,1.0,1.985944981296006,1.3674375030034953,1.0946371179117576,1.219578268292642
312,3.0418456819507966,1.0,1.3643853255021892,0.9197745907287163,0.6535575281658257,0.9521947090882684
321,2.256800311836851,1.0,1.3985105288520194,0.7731532953827084,0.545360257543548,0.6081827708422005
}
\end{figure*}

\begin{figure*}[h]
    \centering
    \caption{vast-2015-mc1-3d results normalized by splatt (serial)}

\pgfplotstabletypeset[
    /pgfplots/colormap={whiteblue}{rgb255(0cm)=(49,130,189); rgb255(.14cm)=(255,255,255); rgb255(1cm)=(222,45,38)},
    colorCell/.style={
        color cells={min=0,max=7}
    },
    itemCell/.style={
        string type,
        column name={},
    },
    col sep=comma,
    columns={permutation,qsort,splatt,1-sadilla,2-sadilla,quesadilla,radix},
    columns/permutation/.style={itemCell},
   columns/qsort/.style={colorCell},
   columns/splatt/.style={colorCell},
   columns/1-sadilla/.style={colorCell},
   columns/2-sadilla/.style={colorCell},
   columns/quesadilla/.style={colorCell},
   columns/radix/.style={colorCell},
]{
permutation,qsort,splatt,1-sadilla,2-sadilla,quesadilla,radix
123,4.717960362395435,1.0,1.5419366446638256,0.3807820439847572,0,4.682876274709794
132,4.0501300725226015,1.0,1.4109370557942995,2.130196321905472,0.8423424197267729,3.2221098071632515
213,3.3474671532685667,1.0,1.4430978242294037,0.7210858785795746,0.6167806918612416,1.1354883780985652
231,2.715373578526802,1.0,1.202123224395051,1.466287997177493,0.719786490350531,0.8239197097234264
312,3.626023189150076,1.0,3.9089234928017627,1.5479861186171788,0.4792704450614844,3.012922179322869
321,1.9536516296868098,1.0,2.078826068722575,0.884160284928987,0.40747550470671307,0.6107053707593626
}
\end{figure*}
\newpage

\begin{figure*}[h]
    \centering
    \caption{chicago-crime-comm results normalized by splatt (serial)}

\pgfplotstabletypeset[
    /pgfplots/colormap={whiteblue}{rgb255(0cm)=(49,130,189); rgb255(.14cm)=(255,255,255); rgb255(1cm)=(222,45,38)},
    colorCell/.style={
        color cells={min=0,max=7}
    },
    itemCell/.style={
        string type,
        column name={},
    },
    col sep=comma,
    columns={permutation,qsort,splatt,1-sadilla,2-sadilla,3-sadilla,quesadilla,radix},
    columns/permutation/.style={itemCell},
   columns/qsort/.style={colorCell},
   columns/splatt/.style={colorCell},
   columns/1-sadilla/.style={colorCell},
   columns/2-sadilla/.style={colorCell},
   columns/3-sadilla/.style={colorCell},
   columns/quesadilla/.style={colorCell},
   columns/radix/.style={colorCell},
]{
permutation,qsort,splatt,1-sadilla,2-sadilla,3-sadilla,quesadilla,radix
1234,3.1976466847735745,1.0,1.4258156902856063,0.95135644912045,0.3827921190754883,0,3.068091069292038
1243,2.4564342127112555,1.0,1.3714967046252762,1.023854435060034,1.2197001718940208,0.8948057617339407,1.3200244606351845
1324,2.1683147601170685,1.0,1.2276086516135656,1.0100053034412007,0.8165761448430394,0.6112269633540226,1.622833929373905
1342,2.1874574739808432,1.0,1.3301672016457409,1.1822513622568374,1.4677765219125654,1.2425027501278287,1.5454971223298792
1423,2.0760260705807037,1.0,1.3089450718637794,0.8608093327610542,0.6222182124127069,0.3775072255464369,1.1005290771413236
1432,2.3291671569442083,1.0,1.5129629106746088,1.2591277397984149,0.9467686111269339,0.8196666656587215,1.4485966131014354
2134,3.7920988443003356,1.0,2.1463726719935274,1.0884991811379374,0.6737261808748363,0.38176148354610684,2.2436064757619567
2143,2.90533269702557,1.0,1.8832795238137243,1.073967173904364,1.0184969834866713,0.8475217507953102,1.1357981669588462
2314,1.7716845961869447,1.0,1.674471349490963,0.8242824924704937,0.398644413721194,0.2754275176956146,0.7339664505180196
2341,1.7629232099032357,1.0,1.6483681040717102,1.2134408494905797,0.886611680486777,0.5328372857112155,0.6032568938573112
2413,1.9388977023176532,1.0,1.745043524266026,0.8479631578377923,0.40237096539010575,0.2364853962769206,0.5704605118118364
2431,1.905556841421606,1.0,1.7574697761279816,1.33016139039026,0.6419175462570924,0.3075825180749486,0.5109857491270338
3124,4.38139704051233,1.0,2.0659201431679275,0.9891559060752684,0.7084923272641118,0.38677786111858525,2.165019180224749
3142,3.4223404179242873,1.0,1.9009253125287582,1.0192468923091262,1.3946270032824988,1.1368271029836101,1.6248687492048273
3214,2.1502873161957186,1.0,1.6349538989062316,1.0332624201421583,0.5986380218962094,0.5377367216088144,0.7532483528680484
3241,2.073792087081117,1.0,1.6175708542554765,1.4657359260706933,0.903872441693425,0.5473247950029515,0.6188523851421626
3412,2.189533209069813,1.0,1.7635406850325723,1.1285515133063668,0.6758325671451797,0.525765226129921,0.7674701880830583
3421,2.1104155697989246,1.0,1.7161917369285384,1.5427642786953741,0.8270676386586583,0.5394537514433208,0.6097250620604543
4123,4.69711631110951,1.0,2.2236639159923532,1.1330188095860807,0.7405716699757852,0.3490247543987543,1.7346874905695386
4132,3.334955133126036,1.0,1.9627719853335959,1.1434957656592406,1.5929965726155029,1.3670459476680474,1.5181666526366255
4213,2.2772011396322434,1.0,1.6954985759682513,0.9425226246021431,0.44459790005960265,0.25993480996918866,0.6202864395154787
4231,2.1021236630049387,1.0,1.620670383029775,1.4145338581260813,0.6585135472121106,0.31383101151862647,0.5241825406597985
4312,2.146310166991488,1.0,1.7713745049406153,0.7896548632540964,0.36733598598202616,0.25022956816946434,0.6861352695129934
4321,2.1098907742538695,1.0,1.7621511142313588,1.2305343025055886,0.8267300109684402,0.504465896438609,0.5300997631953648
}
\end{figure*}
\newpage

\begin{figure*}[h]
    \centering
    \caption{delicious-4d results normalized by splatt (serial)}

\pgfplotstabletypeset[
    /pgfplots/colormap={whiteblue}{rgb255(0cm)=(49,130,189); rgb255(.14cm)=(255,255,255); rgb255(1cm)=(222,45,38)},
    colorCell/.style={
        color cells={min=0,max=7}
    },
    itemCell/.style={
        string type,
        column name={},
    },
    col sep=comma,
    columns={permutation,qsort,splatt,1-sadilla,2-sadilla,3-sadilla,quesadilla,radix},
    columns/permutation/.style={itemCell},
   columns/qsort/.style={colorCell},
   columns/splatt/.style={colorCell},
   columns/1-sadilla/.style={colorCell},
   columns/2-sadilla/.style={colorCell},
   columns/3-sadilla/.style={colorCell},
   columns/quesadilla/.style={colorCell},
   columns/radix/.style={colorCell},
]{
permutation,qsort,splatt,1-sadilla,2-sadilla,3-sadilla,quesadilla,radix
1234,3.5108895051952156,1.0,1.4345966244646944,0.46970788202243813,0.3133899354689956,0,5.435636937944005
1243,3.6417362827278286,1.0,1.4926419144932337,0.4873025763895327,2.275019768009702,1.8351497381430724,4.898154442177464
1324,2.6052157776708813,1.0,1.4887503816426804,1.5405619700692197,1.3106972384171505,1.1520346835255675,2.3602297262110357
1342,2.58476165228415,1.0,1.5027529090010854,1.5946837912609997,2.0480577685392323,1.8137953338576902,2.5999649436993355
1423,2.584081551189462,1.0,1.304241193026742,1.2644568439739654,1.0736105715778754,0.7677350311248077,2.931736762717067
1432,2.5248319130293004,1.0,1.3841067223127936,1.434345621607874,2.349383585883486,2.1239629859586313,2.851505952318251
2134,2.6830750184608103,1.0,1.3522952277238793,1.0956790221293204,0.9885202473942645,0.8655991759179348,2.0743927804684583
2143,2.5717175146158096,1.0,1.2436542290246504,1.051873906552075,1.8066933441433093,1.6009759155399284,2.0295684551262623
2314,2.2156878493493872,1.0,1.285711503107075,1.3813773217062586,1.2926282408269043,1.1811744443869288,1.70243447082115
2341,2.235549229028286,1.0,1.2910239896683007,1.4827827607894717,1.4709075569253305,1.350093815507335,1.590435285720308
2413,2.2330863097412985,1.0,1.2397333574706704,0.8790283073529616,0.8509709230581759,0.653023224739785,1.4492654555748428
2431,2.193750711665083,1.0,1.2987114498500545,0.911904272366357,1.375583784195196,1.2565093094475757,1.4740589663809054
3124,3.9490263818896074,1.0,1.5796669321221153,0.8967386938136088,0.7105930958734156,0.5530533934498854,2.2734564895213496
3142,3.4619315760635505,1.0,1.511383069393987,0.8928711638803569,2.081898105968778,1.8705212587490732,2.208795176478188
3214,2.8498802719911187,1.0,1.6716731250335684,1.3002344874200333,1.2192745380242145,1.0977906223962621,1.4395427376205867
3241,2.7959040642650903,1.0,1.7007272738631307,1.3760322424055758,1.1537420115029537,1.0448816450333274,1.2496145767142777
3412,2.4987813612320697,1.0,1.5322598374276912,0.844858490811282,0.66660789045976,0.5429555063081688,1.6174324863135379
3421,2.651239871256537,1.0,1.6826710747900036,1.1330066862875703,1.355149325717893,1.2398982768851678,1.2934016905060925
4123,4.79764546334516,1.0,2.051744120210087,0.9929964590091077,0.731710534707777,0.38455560212033896,4.058591509093747
4132,3.7477820791796836,1.0,1.83618156318595,0.9955874365713853,2.4372346253728416,2.2452762126244403,3.005911525389104
4213,2.6727460828849923,1.0,1.813344140541863,1.2571627047298297,1.222647368761608,0.9531710986622647,1.7264515808352694
4231,2.6896535005762447,1.0,1.8472422030623434,1.3421998317500026,1.69176434248051,1.5429286684654573,1.759332604615614
4312,2.536008725941168,1.0,1.8094332859871611,0.9451492425005042,0.7646629236616126,0.6128035359994584,1.7409769280971343
4321,2.6883848677915436,1.0,1.9638679277549234,1.2341315700110207,1.4972337597929666,1.3772833396607116,1.407870848668861
}
\end{figure*}
\newpage

\begin{figure*}[h]
    \centering
    \caption{enron results normalized by splatt (serial)}

\pgfplotstabletypeset[
    /pgfplots/colormap={whiteblue}{rgb255(0cm)=(49,130,189); rgb255(.14cm)=(255,255,255); rgb255(1cm)=(222,45,38)},
    colorCell/.style={
        color cells={min=0,max=7}
    },
    itemCell/.style={
        string type,
        column name={},
    },
    col sep=comma,
    columns={permutation,qsort,splatt,1-sadilla,2-sadilla,3-sadilla,quesadilla,radix},
    columns/permutation/.style={itemCell},
   columns/qsort/.style={colorCell},
   columns/splatt/.style={colorCell},
   columns/1-sadilla/.style={colorCell},
   columns/2-sadilla/.style={colorCell},
   columns/3-sadilla/.style={colorCell},
   columns/quesadilla/.style={colorCell},
   columns/radix/.style={colorCell},
]{
permutation,qsort,splatt,1-sadilla,2-sadilla,3-sadilla,quesadilla,radix
1234,3.0562530255716127,1.0,1.9588887294070685,1.3019659469007152,0.3685114662059793,0,2.825755288336641
1243,2.170827613101148,1.0,1.6089027907715325,1.1891129267705554,0.8070602978029452,0.3575218878212739,1.2681690949875475
1324,1.8844301334099944,1.0,1.3806028015248706,1.0395109795432484,0.814380562029772,0.6475747115661483,1.179617637421644
1342,1.8535981414844727,1.0,1.3865959148920157,1.2172328761445377,1.2512589995411358,0.9672455172681724,1.0319318619851463
1423,1.9085098277287325,1.0,1.5175311320157643,0.8175215421661141,0.6294702415078142,0.24715777575750442,1.0128011652464421
1432,1.8058962348881313,1.0,1.461332534876219,1.0147026687032246,1.0197403611343185,0.7749221677983509,0.9873502217076032
2134,4.077448445837368,1.0,2.0229565208189713,1.6393623804053068,0.7265348233972142,0.3919190221597186,2.6159205824352454
2143,2.7615450884710033,1.0,1.6577443516939665,1.4295567540313763,1.091142575357119,0.6258350854522994,1.2603306366777967
2314,2.3019592828836197,1.0,1.5397451712504713,1.2452815922718532,1.1505360533053188,0.9172293692292763,1.2535781941128332
2341,2.443648782123402,1.0,1.6393907960393692,1.4239652873637865,1.4376511524651772,1.2315606704885311,1.4556032781881423
2413,2.388668547440412,1.0,1.6476018450189902,0.7726202154190028,0.7381549846495172,0.30104881210345136,1.011933544261549
2431,2.5033952740074845,1.0,1.726447243528132,0.8576111147608396,1.1790707941486012,1.0959672715541044,1.294242444554984
3124,3.1139336113387466,1.0,1.4331050848550828,1.0356211845276329,0.7218638412339082,0.47893403009613467,1.2200818369117825
3142,2.483381443013241,1.0,1.349378843545252,1.0611231337440814,1.3820222794434591,1.1027829855557407,0.9174923527912571
3214,2.150989215033235,1.0,1.3130524624409177,0.85277111043045,0.7756957679468972,0.5910643690580645,0.777281656216916
3241,2.2315560817230593,1.0,1.3799846473479553,0.9553956120126715,0.8612140441746633,0.754038856676912,0.9353346539264044
3412,2.0081456016819135,1.0,1.3443106317549802,0.816502370296037,0.803722154698627,0.4972186737890743,0.7935009928721942
3421,2.1366300875511803,1.0,1.4440857245838121,0.9880237323560846,0.8849058312306628,0.7118640622530099,0.7962173530958342
4123,5.529888216977918,1.0,2.2287430845021308,1.7745120674832595,1.3030747584091793,0.38806088668693733,2.4638820214639456
4132,2.8503618323436095,1.0,1.4740594733418204,1.1961953532196494,1.3454025924456297,1.0699342792731412,1.1756381692290345
4213,3.555866938771978,1.0,1.774428475684921,1.1541860279205267,1.099973543380238,0.47159181957483903,1.4335269540294109
4231,3.423092714837422,1.0,1.7738741393465,1.1710656770508718,1.6054470675208343,1.470586289599146,1.6779085070539685
4312,2.4153989157278817,1.0,1.4604017856512184,0.9048053998344221,0.8597865767678622,0.5362081553546858,0.8903096352320077
4321,2.6264721637191846,1.0,1.5809949318188359,1.0627274518899419,1.0099396488744596,0.9592276753200136,1.0699473256585126
}
\end{figure*}
\newpage

\begin{figure*}[h]
    \centering
    \caption{flickr-4d results normalized by splatt (serial)}

\pgfplotstabletypeset[
    /pgfplots/colormap={whiteblue}{rgb255(0cm)=(49,130,189); rgb255(.14cm)=(255,255,255); rgb255(1cm)=(222,45,38)},
    colorCell/.style={
        color cells={min=0,max=7}
    },
    itemCell/.style={
        string type,
        column name={},
    },
    col sep=comma,
    columns={permutation,qsort,splatt,1-sadilla,2-sadilla,3-sadilla,quesadilla,radix},
    columns/permutation/.style={itemCell},
   columns/qsort/.style={colorCell},
   columns/splatt/.style={colorCell},
   columns/1-sadilla/.style={colorCell},
   columns/2-sadilla/.style={colorCell},
   columns/3-sadilla/.style={colorCell},
   columns/quesadilla/.style={colorCell},
   columns/radix/.style={colorCell},
]{
permutation,qsort,splatt,1-sadilla,2-sadilla,3-sadilla,quesadilla,radix
1234,3.446014878191377,1.0,1.46949124365551,0.47121592388816635,0.28431116445689225,0,4.925387379515189
1243,3.5103608723674893,1.0,1.5410183701698128,0.48345647746048953,1.4424393710901222,0.904951837882455,3.491611803250785
1324,2.4434639703258516,1.0,1.397405644812246,1.266514570428233,0.99370189576,0.8948957671675333,1.6124912577981894
1342,2.289410451932883,1.0,1.3058881031313612,1.2244325389087103,1.5112418676716233,1.2250666776214085,1.6445806964303904
1423,3.3292813503083467,1.0,1.469475351320005,1.764453481111478,1.181681254094343,0.7821116937461507,4.447644040053039
1432,2.337924491704075,1.0,1.2752715318834884,1.4412292301266747,1.8344269549450691,1.534456713374551,2.084321856957324
2134,4.205907013103996,1.0,1.2756354594681065,1.3197023641964751,1.11935599354324,0.9188292369206528,2.534117600835197
2143,3.963172334328921,1.0,1.2639477541229884,1.2570617872453553,1.5481607154886659,1.1664211509833329,2.5109685401693826
2314,4.013845179510334,1.0,1.3015031183262926,2.738018413308847,2.741809109327273,2.5577556284326333,3.133875383233012
2341,4.2156729751233435,1.0,1.3883902581773078,3.074650900604542,3.4377253954629103,3.058709585382596,3.3140982877810097
2413,3.899994359223211,1.0,1.2222104803443279,1.152753501348814,1.2091936603002897,0.8531238476636367,2.0238727241756185
2431,4.0083641744659735,1.0,1.3380895841543452,1.2609926270157583,2.0825258223741665,1.855381843010508,2.109643243930558
3124,4.633016770871379,1.0,1.721376648506508,1.1904449417861784,0.791809744158967,0.6126957876326563,2.0186773437667207
3142,4.492032792046977,1.0,1.6638647633469348,1.185380533310996,2.162872955391569,1.7141295349090888,2.1440236660557614
3214,3.6606041138449323,1.0,1.7518565363955045,1.3113168532981765,1.3602861094901333,1.1922230810357954,1.5854461116137193
3241,3.610356795860356,1.0,1.733689205102516,1.3340839412172503,1.282456355312306,1.1573103121806219,1.3732453539402687
3412,3.0613887822622488,1.0,1.5404999362237697,1.030258903460607,0.9361404342526579,0.6218511836646082,1.50784759409594
3421,3.1897094440887415,1.0,1.6247190623320038,1.166799177921628,1.3604354317146226,1.1483568175383483,1.2835353952413253
4123,4.679546327885619,1.0,2.296294684169742,1.2711304323367991,0.7813600103197591,0.37608923664987093,3.9262698741256647
4132,3.172415950106986,1.0,1.7921268906060643,1.117563004779363,1.6846594455681667,1.3715887833647105,1.7920061823992652
4213,2.723657761855313,1.0,1.9634627965714815,0.9668223280494701,0.9146758699627273,0.7282278014328546,1.4886133205137881
4231,2.7716637305415497,1.0,1.952055058385084,0.955014315869169,1.9646443158701261,1.8450463697622281,2.0090890179659766
4312,2.361445778983605,1.0,1.7650030654843134,0.8453356752692535,0.7650275410313219,0.5346988316875239,1.2426823680082972
4321,2.4910073159458834,1.0,1.8507096581658797,0.989116119006847,1.1365569845850716,1.0725989964133071,1.116183187122781
}
\end{figure*}
\newpage

\begin{figure*}[h]
    \centering
    \caption{nips results normalized by splatt (serial)}

\pgfplotstabletypeset[
    /pgfplots/colormap={whiteblue}{rgb255(0cm)=(49,130,189); rgb255(.14cm)=(255,255,255); rgb255(1cm)=(222,45,38)},
    colorCell/.style={
        color cells={min=0,max=7}
    },
    itemCell/.style={
        string type,
        column name={},
    },
    col sep=comma,
    columns={permutation,qsort,splatt,1-sadilla,2-sadilla,3-sadilla,quesadilla,radix},
    columns/permutation/.style={itemCell},
   columns/qsort/.style={colorCell},
   columns/splatt/.style={colorCell},
   columns/1-sadilla/.style={colorCell},
   columns/2-sadilla/.style={colorCell},
   columns/3-sadilla/.style={colorCell},
   columns/quesadilla/.style={colorCell},
   columns/radix/.style={colorCell},
]{
permutation,qsort,splatt,1-sadilla,2-sadilla,3-sadilla,quesadilla,radix
1234,3.8804191972811677,1.0,1.8886990772327386,1.648376767160887,0.3439526009571065,0,3.775384028446028
1243,3.8545296782265766,1.0,1.8958339059747933,1.6611509185232782,2.405436597227166,0.7808454388198861,3.4769743356594645
1324,0.8522501731188465,1.0,0.4364291096441487,0.7648112124068469,0.7296406122453565,0.6597551332342211,0.8828307153826487
1342,0.8587431921864221,1.0,0.4428683648270861,0.7615685084831805,0.9237492096201558,0.8231018770618305,0.8768366567763175
1423,3.7893365258202585,1.0,1.877215008117433,2.5960114842822617,2.3715944667336837,0.7707533984550996,3.5203447885786536
1432,0.6092841167330937,1.0,0.3149449981768034,0.4120032921371232,0.5998015345794044,0.5265799115043501,0.6015695717157804
2134,5.66529578656033,1.0,2.489078224813099,2.209827798597575,0.9570137803141545,0.6202602109586818,4.006862326224126
2143,5.608832591024253,1.0,2.4816738577107182,2.2049470995689235,2.698211011471152,1.1470869457524162,3.3005603187002532
2314,3.6785166067107884,1.0,1.8875951853002362,2.0457968082381384,2.035645282143099,1.596901129771632,2.534709922707694
2341,3.658883486025812,1.0,1.883316983082713,2.0566786363741683,2.5421977676298075,2.287705227309942,2.2022432573392323
2413,5.486437662730772,1.0,2.4246027065447424,2.5074622845945784,2.4403327333937757,0.907214528539722,3.3294208061114543
2431,4.3621703663242135,1.0,2.0285859487677924,1.9933135186677025,2.5895970628602516,2.315812680189273,2.9835421248524714
3124,2.892345199368458,1.0,1.348082129184643,0.8158935591284515,0.7274319477265717,0.5669286398283989,1.254522517182456
3142,2.94810902419376,1.0,1.403411075748156,0.8452759611070192,1.2320261875885588,0.9744787760132442,1.2878491868924797
3214,2.415552063996835,1.0,1.5793973615587498,0.8569406813754897,0.8425315747250947,0.7116177868759666,0.9197281175372503
3241,2.3169485652092403,1.0,1.572901500159453,0.7765297487066176,0.9378157469329013,0.8061614172432867,0.9110380448778019
3412,2.7862163452404345,1.0,1.2806152506015114,1.3383793177946761,1.0913110916372883,0.8704096931246152,1.1657038558528205
3421,2.3710694516648525,1.0,1.4100484528536,1.468074835538494,0.9843799029875618,0.8707538291411544,0.9863941528686287
4123,3.438101326186365,1.0,3.13817518681661,2.153348736472028,1.8886908212516402,0.5070805054154828,3.180987370131586
4132,1.8194278672867348,1.0,1.7313165591537902,1.1450597313349196,1.8614875687105124,1.641230577190973,1.8290544022073647
4213,2.551078836488903,1.0,2.4100173631300885,1.3570891198628057,1.279407018278913,0.4848048960395144,1.750736900185622
4231,2.463604113732609,1.0,2.331809543985169,1.320789510486256,1.7258571760638612,1.554736467236403,1.9698956867308968
4312,1.878123953138074,1.0,1.7318548183384082,0.8460678807467656,0.6662664489856613,0.49614952998216205,0.876029433329986
4321,2.0573054089411404,1.0,1.99792157040889,1.2720572085128024,0.9567469637910053,0.8542461156281047,0.9626916416387754
}
\end{figure*}
\newpage

\begin{figure*}[h]
    \centering
    \caption{uber results normalized by splatt (serial)}

\pgfplotstabletypeset[
    /pgfplots/colormap={whiteblue}{rgb255(0cm)=(49,130,189); rgb255(.14cm)=(255,255,255); rgb255(1cm)=(222,45,38)},
    colorCell/.style={
        color cells={min=0,max=7}
    },
    itemCell/.style={
        string type,
        column name={},
    },
    col sep=comma,
    columns={permutation,qsort,splatt,1-sadilla,2-sadilla,3-sadilla,quesadilla,radix},
    columns/permutation/.style={itemCell},
   columns/qsort/.style={colorCell},
   columns/splatt/.style={colorCell},
   columns/1-sadilla/.style={colorCell},
   columns/2-sadilla/.style={colorCell},
   columns/3-sadilla/.style={colorCell},
   columns/quesadilla/.style={colorCell},
   columns/radix/.style={colorCell},
]{
permutation,qsort,splatt,1-sadilla,2-sadilla,3-sadilla,quesadilla,radix
1234,3.408381056023164,1.0,2.1900461849783532,1.5016466624013365,0.6242859177091502,0,2.830397498559087
1243,2.274323792829738,1.0,1.6841750530386626,1.3431920495337193,1.083376981942823,0.7941587222561429,1.4563011659703091
1324,1.9502406566204686,1.0,1.437103880197562,1.0583219496864014,0.7879373526261811,0.46023651312674885,1.5025307456843757
1342,2.0627300568361684,1.0,1.4686916110506887,1.3558852548701237,1.3525024508441263,1.136128759964867,1.6945881048744382
1423,1.9503655294125166,1.0,1.5171176536135735,0.7738768324776416,0.6044744824404811,0.3447662033903455,1.1228795974090988
1432,2.117163617082595,1.0,1.6626760999865569,1.1276749781200972,1.106809219514517,0.9094145362766501,1.3956070983291433
2134,4.364908075626544,1.0,2.7904481901817464,1.8540984433461187,1.0329880161029361,0.4575394737732074,2.5430154686692585
2143,2.8215257430715277,1.0,2.055490888992739,1.5406889091387692,1.4983369836925158,1.2046175090061115,1.439622043883531
2314,1.9259653512358097,1.0,1.7277700832180856,1.006842093865004,0.6281182723636568,0.37211617988770135,1.001717232978123
2341,1.8774797041950053,1.0,1.7435125017672986,1.3943915440239822,1.0127462369401112,0.7091345041808749,1.0760703198299033
2413,1.9376339765999697,1.0,1.7802060637038521,0.7666466613267279,0.4847471097963886,0.26198316322060017,0.746579777174146
2431,2.0722539602101784,1.0,1.8805385355330906,1.2055020515493435,0.8712455214727812,0.613011969114627,0.9205260780754253
3124,3.9573568905557783,1.0,2.0222155073150048,1.3255847033769586,0.8826410109244112,0.4564196555257724,2.2505474117034536
3142,2.70404498389076,1.0,1.7560741545391203,1.265195921182968,1.4821764552835857,1.3764449630922677,1.7412495665792154
3214,2.2236173237145076,1.0,1.5574960485651752,1.241864567766444,0.8865316245928181,0.6414671171978582,0.8504669373973035
3241,2.0575526873054923,1.0,1.547808950679026,1.5728718015926535,0.9686828339573645,0.678474785001636,1.024823418837233
3412,1.90019098025828,1.0,1.5924478649135982,0.9982261424435387,0.7236869951039662,0.5506568517140668,0.9651330745589696
3421,1.982138235025894,1.0,1.673304395077928,1.369912310257012,1.0990819205140283,0.7380992560040355,0.9157729025779264
4123,4.819944175725723,1.0,2.043232989785601,1.3360924034414494,0.9289151101788624,0.4608747783929369,2.297980693018896
4132,3.1335826159507367,1.0,1.7443816118728113,1.2615585995047094,1.6059325289746975,1.3528424609903857,1.7542796606754925
4213,2.5997615962112754,1.0,1.606792713160671,1.2569302587688949,0.9005202878319772,0.6454886724018143,0.8451119290687203
4231,2.438053357778273,1.0,1.5424754211457425,1.5486132644217987,0.9561962572971848,0.5992827753394195,1.008376590916741
4312,2.2619138021545053,1.0,1.6077665005796142,0.9913078709473546,0.7208349882403992,0.55049527578152,0.9567291569932938
4321,2.337353597130801,1.0,1.6548480425565406,1.336861029454954,1.0750336038460269,0.8229993535521809,0.9098393148389542
}
\end{figure*}
\newpage

\begin{figure*}[h]
    \centering
    \caption{lbnl-network results normalized by splatt (serial) (1)}

\pgfplotstabletypeset[
    /pgfplots/colormap={whiteblue}{rgb255(0cm)=(49,130,189); rgb255(.14cm)=(255,255,255); rgb255(1cm)=(222,45,38)},
    colorCell/.style={
        color cells={min=0,max=7}
    },
    itemCell/.style={
        string type,
        column name={},
    },
    col sep=comma,
    columns={permutation,qsort,splatt,1-sadilla,2-sadilla,3-sadilla,4-sadilla,quesadilla,radix},
    columns/permutation/.style={itemCell},
   columns/qsort/.style={colorCell},
   columns/splatt/.style={colorCell},
   columns/1-sadilla/.style={colorCell},
   columns/2-sadilla/.style={colorCell},
   columns/3-sadilla/.style={colorCell},
   columns/4-sadilla/.style={colorCell},
   columns/quesadilla/.style={colorCell},
   columns/radix/.style={colorCell},
]{
permutation,qsort,splatt,1-sadilla,2-sadilla,3-sadilla,4-sadilla,quesadilla,radix
12345,2.01391860236142,1.0,1.4165512698176752,1.327886636724859,1.2110159498157325,1.1370346823933577,0,1.461404301280929
12354,1.7599123059633253,1.0,1.2457636232240443,1.1519075931222864,1.067424100864231,0.8412729456157428,0.699845956744554,1.3385125523772798
12435,1.6533339918764052,1.0,1.1409386568442546,1.0846173164178006,1.263163295383066,1.240228072791269,0.3976275994055949,0.9978240500994098
12453,1.6929105596816403,1.0,1.2119433069122285,1.1253959253515786,1.3502396852551728,1.072985192965527,0.9484707311392984,1.1875990721047702
12534,2.1172808631768842,1.0,1.4952710603899564,1.4013498869500949,0.9342853886099958,0.8678031792530976,0.699896111664656,1.4795771302347818
12543,2.080081267808821,1.0,1.5248530556587234,1.3988247273661172,0.9367916859192195,1.334929136170434,1.1803409975010408,1.519847809199744
13245,1.4560698633965423,1.0,0.9998749101540091,1.211242546697939,1.1610370882251444,1.087519296230802,0.32932336894566305,0.9746074857628332
13254,1.4577325130579248,1.0,1.040876764160145,1.2463404900162693,1.192183815361609,0.9031278478382051,0.7964130795798272,0.9464141713464529
13425,1.4444242979702309,1.0,0.986069614693971,1.1994929869276245,1.3700135969778005,1.3213873360572106,0.579471638320313,0.961219091006764
13452,1.412925302693152,1.0,1.0172613595557694,1.230008201803851,1.4225254301337187,1.1310380392280799,1.0236867270188121,1.047746640090796
13524,1.8444177008011626,1.0,1.3060659312437546,1.6070135220756554,1.0378803970337314,1.0506814655151275,0.9882508854965093,1.348346551635287
13542,1.876957816544909,1.0,1.3046581693433985,1.6056393537009463,1.1224741644060396,1.4616100518471724,1.330127147702644,1.2454581206298136
14235,1.5098633254442104,1.0,1.0561325258570564,1.1924842043777695,1.1161598058164095,1.1062600326364547,0.34932147352637455,0.8915078558426837
14253,1.5750668779748087,1.0,1.1484037031775087,1.264495266543412,1.2073349534307294,0.997591388512258,0.7993214195955148,1.0653196392192925
14325,1.491526557341368,1.0,1.0389243465163354,1.1719347205381059,1.3101096110637702,1.2809166620329289,0.5590703150421997,0.9305049774767943
14352,1.471136619185277,1.0,1.0609847199027649,1.1829056467217547,1.365658820325478,1.091850909814702,0.993846308512422,1.0142981684947636
14523,1.9334043692155314,1.0,1.3853149423908424,1.5109565208641698,1.1404570318606648,1.1359683505643317,0.9972203760100312,1.3316513396508587
14532,1.9462036290580145,1.0,1.3829122284052953,1.5347199834131033,1.1323837990487224,1.4461424123890083,1.30793485113691,1.2757769959473801
15234,2.278494833510805,1.0,1.6598715669024857,0.9537416209286296,0.9526017625806014,0.9456970091342717,0.788447399708945,1.594473219236835
15243,2.274081389041734,1.0,1.6314612649263176,0.9241355590152535,0.9402772367867722,1.4360946831532195,1.2760339370726843,1.5845801471476
15324,2.2507370405060216,1.0,1.6274655604903951,0.9526657542541306,1.3056237663270125,1.2909760374029398,1.1431433841703504,1.5257724368843035
15342,2.246997647947215,1.0,1.632144905957692,0.9468822004035101,1.3069385296742313,1.6958052416252871,1.5252608346880991,1.5346629765713131
15423,2.2630358240965105,1.0,1.6383658744688079,0.9529805093981962,1.3513952130015014,1.3558192211433406,1.1501601908015058,1.5819597706603574
15432,2.2584578057798708,1.0,1.6236963254014543,0.9494136687706323,1.3445193889229043,1.6932030215514535,1.540909861462205,1.5732935703831645
}
\end{figure*}
\newpage

\begin{figure*}[h]
    \centering
    \caption{lbnl-network results normalized by splatt (serial) (2)}

\pgfplotstabletypeset[
    /pgfplots/colormap={whiteblue}{rgb255(0cm)=(49,130,189); rgb255(.14cm)=(255,255,255); rgb255(1cm)=(222,45,38)},
    colorCell/.style={
        color cells={min=0,max=7}
    },
    itemCell/.style={
        string type,
        column name={},
    },
    col sep=comma,
    columns={permutation,qsort,splatt,1-sadilla,2-sadilla,3-sadilla,4-sadilla,quesadilla,radix},
    columns/permutation/.style={itemCell},
   columns/qsort/.style={colorCell},
   columns/splatt/.style={colorCell},
   columns/1-sadilla/.style={colorCell},
   columns/2-sadilla/.style={colorCell},
   columns/3-sadilla/.style={colorCell},
   columns/4-sadilla/.style={colorCell},
   columns/quesadilla/.style={colorCell},
   columns/radix/.style={colorCell},
]{
permutation,qsort,splatt,1-sadilla,2-sadilla,3-sadilla,4-sadilla,quesadilla,radix
21345,2.583849198232619,1.0,1.8451205984288368,1.6397531108439642,1.5271404049872428,1.429924117490066,0.345133946539443,1.4114253784236601
21354,2.1726733986722793,1.0,1.5857217476918875,1.4069518187179606,1.3158522135444248,1.0351567816783394,0.9029446954679744,1.2497753420199211
21435,2.214327079337619,1.0,1.5555656066957686,1.4070984043294694,1.466687785452204,1.4743371061330026,0.5785584379104652,1.143409836454095
21453,2.2763498290147117,1.0,1.6702525061122335,1.4929821095000113,1.5734953838977703,1.3596546845542974,1.2259276569562658,1.2386150411673447
21534,2.6680852507835486,1.0,1.973509234278741,1.7378671465824689,1.2115121339861505,1.2075548095029545,1.0546925444756416,1.4417543637131494
21543,2.6688508539651683,1.0,1.9539422295160096,1.7370637218539944,1.1270569109079163,1.4418468505346553,1.4196082191571346,1.4835049154604905
23145,1.8079590473644518,1.0,1.4271949463335953,1.2116378943796233,1.1790413607250927,1.113500084271444,0.372269357048342,0.9469989444930322
23154,1.7461918636112188,1.0,1.4156486352763564,1.2346623978475642,1.2077721216739334,0.9679822741147682,0.8645281047475768,1.0026283597151722
23415,1.7187045674517736,1.0,1.342119521405405,1.1766852346394188,1.2374605517137633,1.2278506311722173,0.5060960628292147,0.9156025170813126
23451,1.785997396379555,1.0,1.4543357508748673,1.2491102597239605,1.3255602776004158,0.9470154542786954,0.8434777366242641,0.9476930501039932
23514,2.120455500549783,1.0,1.715681128672113,1.462629542206692,0.8750474026137663,0.8736116215339051,0.8102762920155391,1.1950398674976535
23541,2.1314609420528443,1.0,1.7228727168986755,1.5016505762079133,0.9482340244145713,1.2027768712342173,1.0764987480070982,1.2374778776169888
24135,1.738093850077166,1.0,1.4043334378167953,1.1526491468372264,1.1233254311416503,1.0785525082611933,0.36913460478483784,0.9163635919311865
24153,1.9056014496973512,1.0,1.5010923854953393,1.231086003912168,1.1837155036283353,1.047894542652866,0.9422408229089598,1.0081438006710401
24315,1.7062135853145914,1.0,1.3807336111047464,1.1641795376283748,1.2065981031165438,1.1973426091460244,0.4909623298986254,0.8914705703107612
24351,1.845782374215359,1.0,1.4520165296090601,1.2417192442623743,1.3008932304568002,0.925535300622693,0.8306020683969114,0.8811747214702095
24513,2.2489627439979296,1.0,1.809711999030777,1.5420264358517477,0.9874713669864444,0.9910499615615408,0.8611302560809869,1.2174244965241303
24531,2.2649259361469833,1.0,1.8234296440180378,1.528566703784454,0.9882907589781043,1.1808476329701416,1.0473314709232193,1.2112495923337347
25134,2.5593303001410423,1.0,2.0566602124463422,0.8973445002991883,0.903530881687185,0.9005288907188083,0.7547301693607414,1.412166951432963
25143,2.6072750869148282,1.0,2.094367388883201,0.9091883505663296,0.9045202226597273,1.3759173738416166,1.1572235773949509,1.3809814930225146
25314,2.5634790045440234,1.0,2.13249990228484,0.9122350022434123,1.1218317807790084,1.1339707386649451,0.9842082664683844,1.402879387377397
25341,2.5711139845866047,1.0,2.117181919668476,0.9056632378662994,1.119888466637681,1.3616134217188194,1.2099118021978124,1.4052352528742573
25413,2.559235623473203,1.0,2.0971879841827605,0.903786247062024,1.1984923511842764,1.1966072212387076,1.0367756774326213,1.4454367419487557
25431,2.6476115434658105,1.0,2.06715490364637,0.89367832008742,1.190710175716135,1.4102899102706066,1.2399958112885605,1.446661779836654
}
\end{figure*}
\newpage

\begin{figure*}[h]
    \centering
    \caption{lbnl-network results normalized by splatt (serial) (3)}

\pgfplotstabletypeset[
    /pgfplots/colormap={whiteblue}{rgb255(0cm)=(49,130,189); rgb255(.14cm)=(255,255,255); rgb255(1cm)=(222,45,38)},
    colorCell/.style={
        color cells={min=0,max=7}
    },
    itemCell/.style={
        string type,
        column name={},
    },
    col sep=comma,
    columns={permutation,qsort,splatt,1-sadilla,2-sadilla,3-sadilla,4-sadilla,quesadilla,radix},
    columns/permutation/.style={itemCell},
   columns/qsort/.style={colorCell},
   columns/splatt/.style={colorCell},
   columns/1-sadilla/.style={colorCell},
   columns/2-sadilla/.style={colorCell},
   columns/3-sadilla/.style={colorCell},
   columns/4-sadilla/.style={colorCell},
   columns/quesadilla/.style={colorCell},
   columns/radix/.style={colorCell},
]{
permutation,qsort,splatt,1-sadilla,2-sadilla,3-sadilla,4-sadilla,quesadilla,radix
31245,2.6825907227976624,1.0,1.7404489606939204,1.615396810196022,1.545325763135086,1.4433183687561588,0.3588151276509298,1.3514516038482827
31254,2.2781994725380175,1.0,1.5180474015952408,1.4247755194179224,1.3437880955471007,0.9912127241194243,0.863391222735645,1.2936898043999403
31425,2.0471711104018246,1.0,1.353902725168795,1.2582534678139174,1.457689688901078,1.4069969496852184,0.5572587197252622,1.081042874298063
31452,2.1983735047123343,1.0,1.435917958848451,1.3299511483508364,1.5815541655635879,1.2604398971520863,1.132669031391962,1.225280136371323
31524,2.673229889598049,1.0,1.8075770405990668,1.660918415838093,1.1520183658222187,1.1529674078288688,1.002299080589799,1.5276729181044386
31542,2.7036197049673714,1.0,1.8254508568139491,1.655293859581501,1.1574414655561127,1.4325950178053712,1.2633331251158801,1.5165526554231132
32145,2.0584643566222995,1.0,1.4596075849303398,1.4202484460254605,1.3544477931849412,1.2856758464969436,0.43357660057648345,1.090320779292537
32154,1.8986473296879312,1.0,1.3199702525609174,1.286315312755917,1.2524461152538233,0.9745483883364008,0.8619806905557195,1.0630532985973467
32415,1.7944225176693758,1.0,1.2784231631943581,1.2307759481474374,1.3140481339148702,1.2959569176056642,0.5402279107356319,0.9589817184741589
32451,1.975459853639945,1.0,1.3688258218571523,1.3262547112528458,1.4069251049722908,0.9288060130030367,0.8931058281183223,1.0611228650480828
32514,2.3088090571517377,1.0,1.6583671491474024,1.5949593763881353,0.9283314574406045,0.9323608284967428,0.8730646588559845,1.2359410447518269
32541,2.339644129103429,1.0,1.6430647842843822,1.6008886259130928,1.0023839514473352,1.2701792174340496,1.1395525706815648,1.2040593024873927
34125,1.837326286339287,1.0,1.2643521806597033,1.2382175963789692,1.1708449785137693,1.132722291393813,0.3886506534089802,0.9571550950625217
34152,1.9145380442100979,1.0,1.3349049878898591,1.3522313278045313,1.2787829576919123,1.0053836713645223,0.8997312756989995,1.0877372623169759
34215,1.661999907479663,1.0,1.192464176932536,1.2343584913796726,1.2111110051627958,1.198544664398925,0.5010028636763996,0.8947107641808312
34251,1.7948700362411174,1.0,1.2918691907824582,1.29090438929569,1.266301951383104,0.886795696230759,0.8347494270159843,0.9825973032025872
34512,2.339518269536442,1.0,1.654891571604527,1.7074807718926508,1.025624108022937,1.028394142085138,0.8945690301474231,1.2829955022610715
34521,2.324582229754408,1.0,1.6593394878616763,1.7321569146658,1.0197395898114088,1.2790897999875381,1.1378754151703645,1.3015290070389596
35124,2.7908278379964284,1.0,1.9607602692216353,0.9617206717217134,0.9666732160230708,0.9594843011422827,0.8072769195062313,1.5383871232053798
35142,2.7163795150983114,1.0,1.9759513331637357,0.9729145540478117,0.9661376628401489,1.4332881131141668,1.2650928240165515,1.5121080708052081
35214,2.694319578051065,1.0,1.9734419514601016,0.9001209837375205,1.20673880078856,1.2743139612538499,1.1138839582426103,1.5712688882431705
35241,2.7652501717489497,1.0,1.9955648883675876,0.9684515475916741,1.264237579402557,1.5173612272183121,1.3633445795724695,1.4162806958641592
35412,2.8059453344706915,1.0,1.9668687480409799,0.9725949169922081,1.2753016170231926,1.2760671384061064,1.1251942752160602,1.4344160508463124
35421,2.739193364876181,1.0,1.9574694384542646,0.9014886611828812,1.2742346426825772,1.4986532282451233,1.3549941032405277,1.5571855381782118
}
\end{figure*}
\newpage

\begin{figure*}[h]
    \centering
    \caption{lbnl-network results normalized by splatt (serial) (4)}

\pgfplotstabletypeset[
    /pgfplots/colormap={whiteblue}{rgb255(0cm)=(49,130,189); rgb255(.14cm)=(255,255,255); rgb255(1cm)=(222,45,38)},
    colorCell/.style={
        color cells={min=0,max=7}
    },
    itemCell/.style={
        string type,
        column name={},
    },
    col sep=comma,
    columns={permutation,qsort,splatt,1-sadilla,2-sadilla,3-sadilla,4-sadilla,quesadilla,radix},
    columns/permutation/.style={itemCell},
   columns/qsort/.style={colorCell},
   columns/splatt/.style={colorCell},
   columns/1-sadilla/.style={colorCell},
   columns/2-sadilla/.style={colorCell},
   columns/3-sadilla/.style={colorCell},
   columns/4-sadilla/.style={colorCell},
   columns/quesadilla/.style={colorCell},
   columns/radix/.style={colorCell},
]{
permutation,qsort,splatt,1-sadilla,2-sadilla,3-sadilla,4-sadilla,quesadilla,radix
41235,2.824507968293154,1.0,1.879344074130988,1.6201331231141851,1.5055998749695985,1.483052293227944,0.3643104352150288,1.4446680928777558
41253,2.857669444170361,1.0,1.8950421358000624,1.6555646446807615,1.540919143447017,1.3396430501590075,1.105320764975417,1.5261309725737449
41325,2.2929932357263456,1.0,1.558623742703124,1.3300192540387679,1.5815545626232266,1.5351350261413381,0.6213630904360449,1.1495460301887381
41352,2.311833474310466,1.0,1.6027456211186581,1.3982027169486555,1.6937989553074522,1.3962134285624193,1.2791191078142055,1.2757546737650007
41523,2.988132552931998,1.0,2.0319909179513393,1.7637991639294126,1.3823664787111565,1.3617144272151502,1.206651364979249,1.6535109926041676
41532,3.0033175426665553,1.0,2.0507645402401073,1.7894645859954188,1.3820373501828909,1.8415995778628562,1.6616704883809643,1.5879731722047168
42135,1.811340628974031,1.0,1.3737628155970667,1.153564535592839,1.1020396527719967,1.0982529024040677,0.3699108413056014,0.9107109557052523
42153,1.958623965294565,1.0,1.5140200026656283,1.2267973916983972,1.19476319471892,1.0439750233765148,0.9380073247284065,1.0089703328533461
42315,1.785418974260384,1.0,1.3577391890607862,1.1636323541355187,1.2187128508767557,1.2211850455240285,0.49582335845702386,0.8935945328139157
42351,1.8582534784512437,1.0,1.4839642240849213,1.2596433571089938,1.3217573925270423,0.9444857968704602,0.8409334848339282,0.9986668044564175
42513,2.412384902254625,1.0,1.887152002733649,1.602954194216837,1.0343902467931738,1.0306799644460516,0.8989894893050927,1.277333082217787
42531,2.4376537632804687,1.0,1.93277497608889,1.6041721056560327,1.019651221291078,1.2273403472001467,1.1094615932936638,1.2761818667398939
43125,1.6818666244033567,1.0,1.3349472316163988,1.1276081493538208,1.0681223892039622,1.0360769573126118,0.34624765659588613,0.8670487216980513
43152,1.776540460201094,1.0,1.395285234001877,1.1929046783264428,1.1366090710304708,0.9457813970682271,0.8469766550862367,0.9747623655745294
43215,1.744879940129672,1.0,1.359658121629155,1.2434470071249513,1.1849441808718948,1.216715769223228,0.5053807937017057,0.911742288915512
43251,1.9104972759325338,1.0,1.5036270345451788,1.3490228623037588,1.3450894392197097,0.8792993451438003,0.803749370825032,0.9987580561549521
43512,2.245670692490824,1.0,1.7391645184606173,1.5960994081257451,0.9588534152456951,0.9593960381412625,0.8319700948322054,1.2042804589515153
43521,2.2483963181618876,1.0,1.7329586013075575,1.5658338085965404,0.9499775686977446,1.188675969921392,1.0654277679677817,1.2329822203661338
45123,2.823841342761025,1.0,2.1669419763699955,0.9684748077059225,0.923913232164855,0.9688788843793091,0.8198715703012888,1.4175979486415937
45132,2.751833037584299,1.0,2.1989524899519317,0.9729369396773782,0.9736415556247013,1.461316781511105,1.3057748358979853,1.4910483348745915
45213,2.7558630346933692,1.0,2.2084654711464755,0.9444768044503995,1.1624611527175723,1.2836367912459667,1.1043303660057837,1.5506722959981463
45231,2.8338472804661774,1.0,2.1790505930887782,0.9842873577646116,1.267949979298954,1.5004704484432323,1.3403729224595269,1.5378885627991659
45312,2.860911538009007,1.0,2.1635489003116795,0.9826273275177101,1.2213881210996778,1.2094194867520016,1.0608925690291338,1.5146155833286032
45321,2.7578216793125145,1.0,2.1738276245086414,0.9740575806075419,1.2088147370490423,1.4111421745491723,1.261301488154651,1.4900423446776898
}
\end{figure*}
\newpage

\begin{figure*}[h]
    \centering
    \caption{lbnl-network results normalized by splatt (serial) (5)}

\pgfplotstabletypeset[
    /pgfplots/colormap={whiteblue}{rgb255(0cm)=(49,130,189); rgb255(.14cm)=(255,255,255); rgb255(1cm)=(222,45,38)},
    colorCell/.style={
        color cells={min=0,max=7}
    },
    itemCell/.style={
        string type,
        column name={},
    },
    col sep=comma,
    columns={permutation,qsort,splatt,1-sadilla,2-sadilla,3-sadilla,4-sadilla,quesadilla,radix},
    columns/permutation/.style={itemCell},
   columns/qsort/.style={colorCell},
   columns/splatt/.style={colorCell},
   columns/1-sadilla/.style={colorCell},
   columns/2-sadilla/.style={colorCell},
   columns/3-sadilla/.style={colorCell},
   columns/4-sadilla/.style={colorCell},
   columns/quesadilla/.style={colorCell},
   columns/radix/.style={colorCell},
]{
permutation,qsort,splatt,1-sadilla,2-sadilla,3-sadilla,4-sadilla,quesadilla,radix
51234,6.0857148701988875,1.0,1.7746924672614357,1.6620425391235114,1.651136315162603,1.6551447496454341,1.255540843384458,3.286916275921468
51243,6.040207238702099,1.0,1.6740701703607619,1.5755885496424442,1.6594307264213493,3.1565556407992488,2.6704195823612142,3.274866034435981
51324,6.053509403662715,1.0,1.7515515102586159,1.6405816987721589,3.046861181600453,3.021842470406447,2.572817899453157,3.199770571755123
51342,6.293648703284307,1.0,1.8642259425237389,1.7260923718526218,3.2282255353744986,4.625910896358661,4.185968188139168,3.064107888459915
51423,6.0637698011397045,1.0,1.7622114683335333,1.6386355921435485,3.111920057463501,3.113531329157647,2.631483884574472,3.189503655811754
51432,6.361807466533736,1.0,1.8689547072862964,1.7368693643236446,2.999658394923651,4.394876864660088,4.203885185224443,3.3978032921698
52134,6.063887067849634,1.0,1.7565381403333642,2.256968983923899,2.243737961928218,2.264887030765874,1.8661739534245199,3.294164565704507
52143,6.042541324304617,1.0,1.7788718689259093,2.275578626861067,2.2738127738301728,3.6703947738521,3.3002061918951284,3.342863661190456
52314,5.924697327740713,1.0,1.7910770682118875,2.281209448866068,2.7500475587456745,2.7532555718509646,2.3361791869404978,3.3099932408155186
52341,5.967420439508786,1.0,1.7760326094065415,2.271050084503967,2.7538028339288996,3.282362344398631,2.855345767211744,3.2846457630005617
52413,5.989668103134909,1.0,1.7615895323736046,2.2484799410385743,2.76733885306129,2.7733661231105273,2.3820045201407174,3.270016599752956
52431,6.022178767750835,1.0,1.7688586161324533,2.2727575040967203,2.8255950908508365,3.257817660948443,2.8801445958713687,3.3040548604420366
53124,6.203798442142541,1.0,1.8328455133327866,2.216394597808586,2.1980079480361727,2.2128422256680906,1.7962507249507986,3.3298401249003446
53142,6.2677504143173035,1.0,1.862981501942173,2.209601966112389,2.2270967781994933,3.6643891848000014,3.2712418798425795,3.313089038384264
53214,6.285864456987681,1.0,1.8358233548171727,2.007376088755608,2.4973174001618252,2.7030770236468973,2.2936309214977606,3.0514368413287376
53241,6.266942411166072,1.0,1.849723938489768,2.2090225744378276,2.7193873002062934,3.2704209007384497,2.847288485223167,2.9753334400138396
53412,6.215533411324844,1.0,1.8334284684570954,2.190283925673067,2.750946994760907,2.6113576368095526,2.3300341420394743,3.3133973795715552
53421,6.114435463286668,1.0,1.8448382568969186,2.2041043591236282,2.749556187224474,3.2427302375078004,2.893535506625743,3.330589936903039
54123,6.206078506363153,1.0,1.8170629683732482,2.309639746629327,2.3170641947494612,2.1601078260383284,1.792014294596508,3.4068847924980887
54132,6.094337277260324,1.0,1.8355709652653638,2.3534738769443893,2.3525827538643127,3.695416123232019,3.28703223809443,3.381437817197765
54213,6.150903848468336,1.0,1.8193038761493299,2.344902991026973,2.836878218343938,2.8193399419029097,2.460744192691502,3.3516354896034914
54231,6.204852810170484,1.0,1.8631985907548259,2.390701662603499,2.896207305060243,3.4008243456701566,3.006684089772224,3.4367855183694114
54312,6.283552190218266,1.0,1.7011861816253977,2.3590703640140247,2.8336423524433068,2.8076508707629904,2.4281609669981474,3.4227766575170593
54321,6.022347336632521,1.0,1.7417529320747431,2.328635356247266,2.7778540019101845,3.144127019333851,2.9060092079778097,3.366631123174466
}
\end{figure*}
\newpage

\begin{figure*}[h]
    \centering
    \caption{vast-2015-mc1-5d results normalized by splatt (serial) (1)}

\pgfplotstabletypeset[
    /pgfplots/colormap={whiteblue}{rgb255(0cm)=(49,130,189); rgb255(.14cm)=(255,255,255); rgb255(1cm)=(222,45,38)},
    colorCell/.style={
        color cells={min=0,max=7}
    },
    itemCell/.style={
        string type,
        column name={},
    },
    col sep=comma,
    columns={permutation,qsort,splatt,1-sadilla,2-sadilla,3-sadilla,4-sadilla,quesadilla,radix},
    columns/permutation/.style={itemCell},
   columns/qsort/.style={colorCell},
   columns/splatt/.style={colorCell},
   columns/1-sadilla/.style={colorCell},
   columns/2-sadilla/.style={colorCell},
   columns/3-sadilla/.style={colorCell},
   columns/4-sadilla/.style={colorCell},
   columns/quesadilla/.style={colorCell},
   columns/radix/.style={colorCell},
]{
permutation,qsort,splatt,1-sadilla,2-sadilla,3-sadilla,4-sadilla,quesadilla,radix
12345,3.3843442377780595,1.0,1.069164122928115,0.21943084482989106,0.21913042138494498,0.21893065930207173,0,4.2390060210146725
12354,3.4056023837681026,1.0,1.0957580407004195,0.22550639043076978,0.225682366600026,2.986653403674526,2.629484628716261,4.125973402344846
12435,3.3915977237877395,1.0,1.0752412660506652,0.23118086425894,2.9891747899545438,2.7636078551687744,2.7909495390084067,4.435121697276655
12453,3.3461441284724436,1.0,1.0676739654827028,0.2183117551792653,2.862841230208687,6.3628543945445095,5.571634570365645,4.975482455522854
12534,3.3594328607210984,1.0,1.0773753473997463,0.228579975913978,2.817441363307049,3.014073677063586,2.7184754022385516,4.391909087265862
12543,3.4075251468961025,1.0,1.0809321571536847,0.2256193834613065,3.0859790505009985,5.956836464656662,5.77825916185535,4.7529888925485775
13245,2.819021525020843,1.0,0.9194826536543713,1.4357607914256811,0.8098318421288008,0.7554613265907952,0.6180952942791008,3.5622230828962467
13254,2.813754620280016,1.0,0.9250800774008493,1.4868163888445927,0.8087254786485659,3.2428309715956134,3.162556457361157,3.5629323058161066
13425,2.1053637634757325,1.0,0.9669967418376041,1.2814540425143532,1.354692797202541,1.3045920249854277,1.122700022607987,2.2293372556894298
13452,2.0133381400301005,1.0,0.9788523626462985,1.2513747507918702,1.4143757489824065,2.286008725870764,2.2128385838807176,1.9027431401061767
13524,2.1893349109724514,1.0,0.9773016132115391,1.3028530097997815,1.3137601490703918,1.3306002343535914,1.0832101831774772,2.282974275946344
13542,2.1023396714404696,1.0,0.961981243925935,1.2242747138453514,1.4015818662277273,2.3030947522366283,2.1813762179360925,1.7823705256685598
14235,2.260875047043239,1.0,1.050742261782374,1.285532472967409,1.078562860403809,1.1694733848228134,0.978823503323214,2.3198246605201316
14253,2.3671145443419097,1.0,1.0554360989083507,1.2554511660769996,1.1447604494149288,2.8290717096417746,2.5890929111112753,2.8400937968690676
14325,2.3152761997311515,1.0,1.0397522924082856,1.2782841677534342,1.777119945237171,1.825152257907728,1.5093482152213606,2.2757212896432435
14352,2.1978715530138917,1.0,1.018187795004704,1.2074951131937242,1.9123615888593242,2.559926111570659,2.2918880152364527,2.0930931524623277
14523,2.1914189149309418,1.0,1.0230281093491644,1.297486314678102,2.299943217754324,2.2736249642980795,2.0041117260471113,2.357632113187461
14532,2.1870353310456916,1.0,1.0131120331313699,1.2933558103559395,2.235780324902834,2.936386217968347,2.6275188237115046,1.9012926101546215
15234,2.3846912444455572,1.0,1.0632728739795245,1.2947988012893954,1.1552607991823116,1.1801124068118396,0.9778136956663916,2.462469997214165
15243,2.4045463837039067,1.0,1.0774081390393049,1.3083724885585526,1.1121259553441714,2.5909936274377756,2.734710053659321,2.8827384785879393
15324,2.2610258948872493,1.0,1.038043932174891,1.2882844236866111,1.855186490508194,1.8598626382992187,1.6319185064745656,2.463827647051129
15342,2.1912780535945386,1.0,1.0185253099190974,1.2076657905803154,1.9228197944442107,2.4835508429572433,2.3120712946881863,2.10560404581453
15423,2.217316491638593,1.0,1.022836890768547,1.311958458184235,2.3922777116338265,2.275455342146441,2.1624420448849078,2.1505964073946626
15432,2.225363142004888,1.0,1.0239577640937931,1.3059219362680319,2.31753061618023,2.840917768202389,2.6505393453888897,2.152281268999997
}
\end{figure*}
\newpage

\begin{figure*}[h]
    \centering
    \caption{vast-2015-mc1-5d results normalized by splatt (serial) (2)}

\pgfplotstabletypeset[
    /pgfplots/colormap={whiteblue}{rgb255(0cm)=(49,130,189); rgb255(.14cm)=(255,255,255); rgb255(1cm)=(222,45,38)},
    colorCell/.style={
        color cells={min=0,max=7}
    },
    itemCell/.style={
        string type,
        column name={},
    },
    col sep=comma,
    columns={permutation,qsort,splatt,1-sadilla,2-sadilla,3-sadilla,4-sadilla,quesadilla,radix},
    columns/permutation/.style={itemCell},
   columns/qsort/.style={colorCell},
   columns/splatt/.style={colorCell},
   columns/1-sadilla/.style={colorCell},
   columns/2-sadilla/.style={colorCell},
   columns/3-sadilla/.style={colorCell},
   columns/4-sadilla/.style={colorCell},
   columns/quesadilla/.style={colorCell},
   columns/radix/.style={colorCell},
]{
permutation,qsort,splatt,1-sadilla,2-sadilla,3-sadilla,4-sadilla,quesadilla,radix
21345,2.3232226952437083,1.0,0.9892384736867091,0.4881513931495676,0.5258140569122445,0.5163001776760491,0.4436479598463229,1.5432183956985701
21354,2.3385716334791824,1.0,0.9599321170450984,0.4985049944173832,0.5157352827903374,1.776012686037648,1.6504651154021202,1.4566115552894132
21435,2.434457274727189,1.0,0.9741486772083485,0.5365516255804154,1.7405008772790949,1.8587656432619666,1.669861338363356,1.496673416455204
21453,2.2980022642682396,1.0,0.9933491051147822,0.5057215361578447,1.7071968839184213,2.929990950992137,3.0292140929069755,1.8325345303389493
21534,2.2332770313409256,1.0,0.97961744002596,0.5051771217285225,1.6723324979009198,1.822201029343647,1.6561081669127742,1.4489363772030799
21543,2.4053457044975888,1.0,0.9922381137061713,0.5359929446955538,1.7910815245868037,3.0244567889937106,3.1515384835701306,1.729355713211947
23145,1.9521572048143516,1.0,0.8141886516269562,1.034039865422907,0.6063452930495967,0.634000738590002,0.5879667333210641,1.1656753942170632
23154,1.9317036122327165,1.0,0.7976247858345632,1.0267066146690904,0.5891349196791209,1.647574282762305,1.514506611486336,1.2440929298481942
23415,1.5222817443650476,1.0,0.9122815341703304,1.0456415359675018,0.6507381571700227,0.5640384998628939,0.512411960754183,1.0859417844340435
23451,1.5159361721442681,1.0,0.9123000855921075,1.008222720231181,0.7894374499892278,0.6737964338492276,0.5690132081659274,0.7142143606894308
23514,1.4822531772048413,1.0,0.8472221566568292,0.9837840929090594,0.6486750906011773,0.5632224297190235,0.47090816381815465,0.9961383496016516
23541,1.4810809700359708,1.0,0.8840515556475183,0.9779543067974756,0.7895166572159116,0.679750859939899,0.5499087709499866,0.7140697187076062
24135,1.5781495890832278,1.0,0.9693040203018704,0.6921701253945597,0.5336259447117042,0.5380634785871923,0.47030824191260306,1.16275606029061
24153,1.6658520393145286,1.0,0.9598073135813847,0.6865113673646991,0.5318227678133274,1.375518898897962,1.2354872161447605,1.281253291689248
24315,1.5107555409016538,1.0,0.9326067333151875,0.6453303019702835,0.8961903507468396,0.7554828126581296,0.6862687515630074,1.14498281874299
24351,1.5188215973359211,1.0,0.9523875094314724,0.7711178619348311,0.9482597710499349,0.6663925601398774,0.5638645460548083,0.8098141739050866
24513,1.5239412834304942,1.0,0.9644138896430281,0.7678236933373913,0.6535584808376097,0.6129770943673681,0.5490469258953036,0.7738421628776894
24531,1.5106506723713482,1.0,0.9656287712109949,0.7833855261363968,0.6135240780217471,0.8078769552653071,0.7826591863504513,0.8197433401523248
25134,1.5626926755500745,1.0,0.9269970379575625,0.647825315914631,0.5478398323929855,0.5463676580445045,0.48696338354704494,1.0796433474811697
25143,1.6664757579773817,1.0,0.9485103612627199,0.6914748524211856,0.5198101703772084,1.3297471638654697,1.1756442146277821,1.3815279432800096
25314,1.516809031348117,1.0,0.9122145938753521,0.6775779077466851,0.9273720165798324,0.7239811013315204,0.7041726735481653,1.0415317331420015
25341,1.500625516188713,1.0,0.9211678474642847,0.7624407946762665,0.949892407602736,0.683378720441187,0.5575482590519446,0.8158960902177362
25413,1.5249786965751493,1.0,0.9240082953908317,0.8155600700870711,0.6549248111631968,0.6120655008575492,0.503795921054233,0.8404715780762264
25431,1.5554496965924698,1.0,0.9112303031040012,0.8004757108724713,0.6197276368157278,0.8046167221724736,0.70246609870948,0.8254819000473103
}
\end{figure*}
\newpage

\begin{figure*}[h]
    \centering
    \caption{vast-2015-mc1-5d results normalized by splatt (serial) (3)}

\pgfplotstabletypeset[
    /pgfplots/colormap={whiteblue}{rgb255(0cm)=(49,130,189); rgb255(.14cm)=(255,255,255); rgb255(1cm)=(222,45,38)},
    colorCell/.style={
        color cells={min=0,max=7}
    },
    itemCell/.style={
        string type,
        column name={},
    },
    col sep=comma,
    columns={permutation,qsort,splatt,1-sadilla,2-sadilla,3-sadilla,4-sadilla,quesadilla,radix},
    columns/permutation/.style={itemCell},
   columns/qsort/.style={colorCell},
   columns/splatt/.style={colorCell},
   columns/1-sadilla/.style={colorCell},
   columns/2-sadilla/.style={colorCell},
   columns/3-sadilla/.style={colorCell},
   columns/4-sadilla/.style={colorCell},
   columns/quesadilla/.style={colorCell},
   columns/radix/.style={colorCell},
]{
permutation,qsort,splatt,1-sadilla,2-sadilla,3-sadilla,4-sadilla,quesadilla,radix
31245,2.5889586659264157,1.0,2.882403277912659,1.1495861714695177,0.5407381527261427,0.5439734166658498,0.37647927778134765,2.9398372825976042
31254,2.6399519300834524,1.0,2.8638991203051667,1.137955746951899,0.5450794147225154,2.911262689927003,2.75579913617184,3.2155880132512973
31425,1.7895054756608988,1.0,1.9461671341988491,0.9603672010128961,1.2999601332502582,1.2284506095596563,1.1535651850005728,1.7661323267067646
31452,1.735670656511812,1.0,1.8714627209223673,0.9557360555006605,1.3191348736394668,2.0996772327364788,1.906896367901554,1.498077762146112
31524,1.8080170124263055,1.0,1.8880605387152265,0.9479697111662397,1.36197182448048,1.2294081105467654,1.1333758399791845,1.8645693178542315
31542,1.717485883378404,1.0,1.8563633578317107,0.9330347136333301,1.2932467294827255,2.0349879556591643,1.9144648554890171,1.606525838553673
32145,1.418641280875389,1.0,1.4923004935018787,0.6651070672136107,0.37987130535127245,0.38662627731317967,0.3139977488236982,0.9362949860233929
32154,1.4787940028186095,1.0,1.4743349366070426,0.6486558681640348,0.37018026997072767,1.1410612169780985,0.98889817498873,0.8404095418233121
32415,1.3419664163138534,1.0,1.41593844179733,0.8683057183909889,0.6129128245907719,0.513381378148717,0.42542021071081904,1.0017335181329097
32451,1.3909263011919202,1.0,1.393074591726185,0.8685594013371536,0.7417461317398677,0.6063163807052856,0.5234658218784,0.6683515491587837
32514,1.348424712050376,1.0,1.4109855461079852,0.8425598563975284,0.6326018354057016,0.4727850062154513,0.4660978635929804,0.9408277421537454
32541,1.3550338821373615,1.0,1.3791940647099192,0.8115778003613979,0.7289850378332023,0.6028813073875139,0.46517349117475587,0.6375295749824089
34125,1.4758093564596713,1.0,1.5013589764917572,0.6778090502186901,0.2805984357019954,0.24121641268648722,0.18744656016512012,0.9480567702888661
34152,1.423893038377847,1.0,1.5056281933748628,0.7050615413058777,0.3121312831925181,0.6177682778251181,0.5300474675379169,1.0233450756131601
34215,1.470444132968977,1.0,1.5777187550638985,1.0974736408092465,0.6271499694343944,0.5231549630926204,0.49971489566022426,0.8939907856848441
34251,1.4731912459295946,1.0,1.4659003237803532,1.058019046800634,0.7219779216721379,0.6333593526199105,0.5744595915559751,0.7002990486003315
34512,1.2538489535373238,1.0,1.360010095278669,0.9538071870723835,0.5615341465518361,0.28222951149734704,0.21225972610787516,0.7397269136892943
34521,1.4074888926774123,1.0,1.4418516355666784,1.0597195709467409,0.8637613758085217,0.5779267474384056,0.49000994646937396,0.6231504817203812
35124,1.466817853235681,1.0,1.509739732475799,0.6681779562654263,0.28046896109867014,0.22803395005023483,0.1841857209533472,0.9811804446390147
35142,1.3806420092302443,1.0,1.492613925858096,0.6963204371696683,0.3117868367377927,0.557300058463412,0.5019117112640061,1.0105793432982555
35214,1.4640513527031722,1.0,1.567053076415696,1.117761600028405,0.6054010703968136,0.5269399115930488,0.4977910762306682,0.9062478015069214
35241,1.4946788264130524,1.0,1.5350127096795354,1.0746150739425542,0.734083395523843,0.6498795261293604,0.600860619415093,0.7221297281538964
35412,1.275826324999039,1.0,1.3374740869070267,0.9420750405680337,0.555546485498857,0.27766815414317747,0.20399850890562804,0.7430587445009338
35421,1.3762586938651353,1.0,1.416907315977643,1.0576184535044904,0.8732610299529339,0.5748152248666778,0.4496986411238198,0.6183198653746171
}
\end{figure*}
\newpage

\begin{figure*}[h]
    \centering
    \caption{vast-2015-mc1-5d results normalized by splatt (serial) (4)}

\pgfplotstabletypeset[
    /pgfplots/colormap={whiteblue}{rgb255(0cm)=(49,130,189); rgb255(.14cm)=(255,255,255); rgb255(1cm)=(222,45,38)},
    colorCell/.style={
        color cells={min=0,max=7}
    },
    itemCell/.style={
        string type,
        column name={},
    },
    col sep=comma,
    columns={permutation,qsort,splatt,1-sadilla,2-sadilla,3-sadilla,4-sadilla,quesadilla,radix},
    columns/permutation/.style={itemCell},
   columns/qsort/.style={colorCell},
   columns/splatt/.style={colorCell},
   columns/1-sadilla/.style={colorCell},
   columns/2-sadilla/.style={colorCell},
   columns/3-sadilla/.style={colorCell},
   columns/4-sadilla/.style={colorCell},
   columns/quesadilla/.style={colorCell},
   columns/radix/.style={colorCell},
]{
permutation,qsort,splatt,1-sadilla,2-sadilla,3-sadilla,4-sadilla,quesadilla,radix
41235,5.2962301472861135,1.0,2.280360640632782,0.8582884403889479,0.6647030192298069,0.6803440156168123,0.473648585452974,3.3797136905668563
41253,5.022075836805406,1.0,2.2166170083775123,0.8307589475708498,0.6456587439039015,1.9802383464526574,1.7383157730466483,4.152231768703865
41325,4.8982390064550465,1.0,2.1530242166578923,0.799000361677591,1.1630498353823575,0.9809883566066786,0.8351124234309402,3.264519884810396
41352,3.823535743127946,1.0,1.8692815450786044,0.7214900911622698,1.0198933913477881,1.8023193562887792,1.474815722119161,2.6273726377053723
41523,4.0176965570308125,1.0,1.8920403340799368,0.7507531364726968,1.4875462578893717,1.5271580646050669,1.2813139628484442,3.2480963442555537
41532,3.8946136036692858,1.0,1.8543499898194271,0.7127517885617347,1.4150097025509873,1.7925568053224918,1.4612373807861758,2.7205675594736034
42135,2.1899188317737717,1.0,1.5662506660582671,0.8201244452302465,0.7202439321885955,0.6880624946329561,0.5886810486492474,1.3367343393354574
42153,2.1107809481601976,1.0,1.5584608718779693,0.869816267393878,0.7194498977677009,1.1713978154179088,1.2134628594589627,1.5272092761935272
42315,2.0546241655861692,1.0,1.4451191572350384,0.7895916521458322,0.9796527609890776,0.8597956972696345,0.7979432480059863,1.2522529170059642
42351,1.9214412840720665,1.0,1.4771809041068484,0.8977519562607861,1.1360594973491882,0.9280594098597649,0.7899778784357553,1.025606533408568
42513,2.0445608791875998,1.0,1.4710954546838022,0.9705378521190582,0.8016760901972658,0.7599789996694752,0.693999836893334,1.0666304037883017
42531,1.9742782976943878,1.0,1.4946025586428773,0.9300071770057163,0.8841468938344614,1.0856587242720819,0.9297372079038816,1.0563508019899177
43125,3.4402224439386098,1.0,1.531237176644155,1.986016621140785,0.9671633643613089,0.8559604802575222,0.8092000128424517,2.4079150559375546
43152,2.958533990080588,1.0,1.4590361763235824,1.8031818021100567,0.9636274595445615,1.476190802057342,1.4097862958737974,1.9834762719246972
43215,1.856857539192424,1.0,1.332710120670281,1.5741880934455752,0.8036306657387899,0.6224151803259678,0.5687684061881342,1.095384506960545
43251,1.8488154599370583,1.0,1.311697492535137,1.4933829789955761,0.9059029118679424,0.8161482952671404,0.6978924261861551,0.8595217225827796
43512,1.667087952187137,1.0,1.2197788906198441,1.3957916352907422,0.708147284866045,0.3504054068323382,0.2733527021141928,0.9020632584067106
43521,1.7010009816092084,1.0,1.2714460628174307,1.4452141204694549,1.084276548519756,0.6691238543909271,0.600568960347492,0.7339148248868057
45123,1.9115948238359237,1.0,1.4333303726497424,0.7765032772613798,0.36006901922085777,0.32975115844413133,0.2507324203699633,1.2770283353932235
45132,1.9576536254020223,1.0,1.4195092170972061,0.787757456694948,0.35997077567113467,0.4855307224963009,0.4031204503219152,1.1380675051169382
45213,1.9643387767368343,1.0,1.504750149639427,1.1964559547393865,0.7621539606239401,0.6650674084941302,0.6466753198950101,0.8549669450751236
45231,1.899067624806426,1.0,1.435338596066535,1.198843413058977,0.6892401688368396,0.9288846730840844,0.7785104159654307,0.8636910316950215
45312,1.68801146714841,1.0,1.2522463719379897,0.6734533065965953,0.9005663496908299,0.49119024734164723,0.4487478520716968,0.9503901044498102
45321,1.652983018126132,1.0,1.3313780114145333,1.0500427697054682,1.2399974605595228,0.7057663496010093,0.6146226247260493,0.6812299316152512
}
\end{figure*}
\newpage

\begin{figure*}[h]
    \caption{vast-2015-mc1-5d results normalized by splatt (serial) (5)}

\pgfplotstabletypeset[
    /pgfplots/colormap={whiteblue}{rgb255(0cm)=(49,130,189); rgb255(.14cm)=(255,255,255); rgb255(1cm)=(222,45,38)},
    colorCell/.style={
        color cells={min=0,max=7}
    },
    itemCell/.style={
        string type,
        column name={},
    },
    col sep=comma,
    columns={permutation,qsort,splatt,1-sadilla,2-sadilla,3-sadilla,4-sadilla,quesadilla,radix},
    columns/permutation/.style={itemCell},
   columns/qsort/.style={colorCell},
   columns/splatt/.style={colorCell},
   columns/1-sadilla/.style={colorCell},
   columns/2-sadilla/.style={colorCell},
   columns/3-sadilla/.style={colorCell},
   columns/4-sadilla/.style={colorCell},
   columns/quesadilla/.style={colorCell},
   columns/radix/.style={colorCell},
]{
permutation,qsort,splatt,1-sadilla,2-sadilla,3-sadilla,4-sadilla,quesadilla,radix
51234,5.169260302530314,1.0,2.2568273975678284,0.8191135602712816,0.6584734946907789,0.6677930876949392,0.4608325690072408,3.6534552682850734
51243,5.0775725550247035,1.0,2.2422415254338435,0.8262930343087223,0.6479030241288942,1.989877323496277,1.7804824591424115,3.7595490953896458
51324,5.033441502196777,1.0,2.242038976497081,0.8083495260212649,1.193311017270865,1.0505292034538127,0.7981159083742105,3.4775703083180463
51342,3.8759673792264833,1.0,1.8924216210429827,0.7074597426361794,1.0167831417573774,1.782577227416217,1.4395330808816285,2.63904211643688
51423,3.9444363345882465,1.0,1.8370106259737211,0.7277404341126555,1.4185793461786143,1.367622271336178,1.3212965280662845,2.8450833232143604
51432,3.913229800091442,1.0,1.8338463433739145,0.7197626034923091,1.5019040857612695,1.7778616175338666,1.5606162726037813,2.7290113487179077
52134,2.2207443482748754,1.0,1.5310763528439715,0.8393136137903483,0.7168554073792942,0.6590076081402187,0.6585903453260353,1.3228474568507544
52143,2.100445558936297,1.0,1.5529773910348492,0.7964647704993397,0.6811493300026449,1.1900147596148152,1.1093827338226039,1.5063426384176881
52314,2.075902223747845,1.0,1.4714775571354943,0.8414876118926796,1.0100693483765437,0.8337126971969949,0.811257384444805,1.2513855117731756
52341,1.9369966790150248,1.0,1.467114280140631,0.9308526808590551,1.1152649457312116,0.8822809935617516,0.7490339776250197,1.0629013821420707
52413,2.0527906337506883,1.0,1.4874037737237538,0.9735009596648734,0.8653239388756436,0.8563740345606895,0.7596236090829734,1.088965869476564
52431,1.9897993229437465,1.0,1.4456038746405022,0.9166794788563487,0.8234220716141583,1.0386501908723906,0.9030319143058007,1.0012358270446517
53124,3.4499390054591035,1.0,1.5085006610239913,1.988865313760583,1.0464534800992684,0.943483793261218,0.7132696047264867,2.35918751919084
53142,2.975763307992788,1.0,1.4670739890482072,1.8151370581666508,0.9618204492061947,1.5370424848408708,1.4070992273669507,2.1666230198339167
53214,1.8926854622837128,1.0,1.3693082356917643,1.544802612874964,0.805717241969367,0.620305104837778,0.6065560123032214,1.11167933034649
53241,1.8583704287431524,1.0,1.2970603937752239,1.5193063945705876,0.8794168800219363,0.7672892651617459,0.7032450931768224,0.9315038113264388
53412,1.6712066689700518,1.0,1.2141586637265485,1.388501895722888,0.7068756094390442,0.34076866110321713,0.26415411562432983,0.9192575744596831
53421,1.706995449078498,1.0,1.2683367148091982,1.4232411133416962,1.0791144209592045,0.6954732407504813,0.5468911602701801,0.7121248512893043
54123,1.997089366841611,1.0,1.3932922521179727,0.7972783129429287,0.3461256836833597,0.32812733981598424,0.24464289834284955,1.2265649710807285
54132,1.9519703880559793,1.0,1.4030916650918315,0.7610336168474527,0.35273760603985155,0.4833660217147252,0.39820148514930165,1.1724363951879253
54213,1.984906945743631,1.0,1.4753614434385012,1.208552521474772,0.7569478865438124,0.6593431313727972,0.6543311725592783,0.8745385334575765
54231,1.8881808409106235,1.0,1.4491223549181247,1.1905388116767588,0.751659786672428,0.8979456165676781,0.7769440452954152,0.8299622050298826
54312,1.7054846276006557,1.0,1.2455623847344814,0.6617635047427491,0.9150300787733143,0.5278620708420346,0.44211316770382636,0.9372305022866976
54321,1.7088663588233737,1.0,1.3319657568329928,1.0645261664965857,1.2272407163591095,0.6746457265239997,0.5729058303227732,0.7675737514718584
}
\end{figure*}

\clearpage
\newpage

\begin{figure*}[h]
    \centering
    \caption{flickr-3d results normalized by splatt (parallel)}

\pgfplotstabletypeset[
    /pgfplots/colormap={whiteblue}{rgb255(0cm)=(49,130,189); rgb255(.14cm)=(255,255,255); rgb255(1cm)=(222,45,38)},
    colorCell/.style={
        color cells={min=0,max=7}
    },
    itemCell/.style={
        string type,
        column name={},
    },
    col sep=comma,
    columns={permutation,qsort,splatt,1-sadilla,2-sadilla,k-sadilla,radix},
    columns/permutation/.style={itemCell},
   columns/qsort/.style={colorCell},
   columns/splatt/.style={colorCell},
   columns/1-sadilla/.style={colorCell},
   columns/2-sadilla/.style={colorCell},
   columns/k-sadilla/.style={colorCell},
   columns/radix/.style={colorCell},
]{
permutation,qsort,splatt,1-sadilla,2-sadilla,k-sadilla,radix
123,30.163,1.000,0.293,0.274,0.000,9.829
132,34.607,1.000,0.557,2.060,1.819,7.267
213,5.225,1.000,1.076,0.629,0.599,0.964
231,5.266,1.000,1.051,0.955,0.909,1.019
312,40.667,1.000,1.268,1.098,0.897,5.858
321,36.651,1.000,1.295,4.521,4.371,5.181
}
\end{figure*}

\begin{figure*}[h]
    \centering
    \caption{nell-1 results normalized by splatt (parallel)}

\pgfplotstabletypeset[
    /pgfplots/colormap={whiteblue}{rgb255(0cm)=(49,130,189); rgb255(.14cm)=(255,255,255); rgb255(1cm)=(222,45,38)},
    colorCell/.style={
        color cells={min=0,max=7}
    },
    itemCell/.style={
        string type,
        column name={},
    },
    col sep=comma,
    columns={permutation,qsort,splatt,1-sadilla,2-sadilla,k-sadilla,radix},
    columns/permutation/.style={itemCell},
   columns/qsort/.style={colorCell},
   columns/splatt/.style={colorCell},
   columns/1-sadilla/.style={colorCell},
   columns/2-sadilla/.style={colorCell},
   columns/k-sadilla/.style={colorCell},
   columns/radix/.style={colorCell},
]{
permutation,qsort,splatt,1-sadilla,2-sadilla,k-sadilla,radix
123,19.938,1.000,0.345,0.161,0.000,7.076
132,25.938,1.000,0.455,3.694,3.504,6.403
213,33.904,1.000,1.249,1.161,0.907,7.542
231,34.151,1.000,1.344,5.688,5.410,6.260
312,8.074,1.000,1.073,0.746,0.761,1.155
321,8.180,1.000,1.087,0.934,0.858,1.076
}
\end{figure*}

\begin{figure*}[h]
    \centering
    \caption{nell-2 results normalized by splatt (parallel)}

\pgfplotstabletypeset[
    /pgfplots/colormap={whiteblue}{rgb255(0cm)=(49,130,189); rgb255(.14cm)=(255,255,255); rgb255(1cm)=(222,45,38)},
    colorCell/.style={
        color cells={min=0,max=7}
    },
    itemCell/.style={
        string type,
        column name={},
    },
    col sep=comma,
    columns={permutation,qsort,splatt,1-sadilla,2-sadilla,k-sadilla,radix},
    columns/permutation/.style={itemCell},
   columns/qsort/.style={colorCell},
   columns/splatt/.style={colorCell},
   columns/1-sadilla/.style={colorCell},
   columns/2-sadilla/.style={colorCell},
   columns/k-sadilla/.style={colorCell},
   columns/radix/.style={colorCell},
]{
permutation,qsort,splatt,1-sadilla,2-sadilla,k-sadilla,radix
123,28.951,1.000,0.391,0.235,0.000,4.484
132,30.239,1.000,0.740,2.132,1.864,2.926
213,38.843,1.000,1.257,1.134,0.800,3.877
231,28.383,1.000,1.371,1.980,1.735,2.311
312,36.857,1.000,1.333,1.346,1.157,2.458
321,35.146,1.000,1.406,1.687,1.538,2.087
}
\end{figure*}

\begin{figure*}[h]
    \centering
    \caption{vast-2015-mc1-3d results normalized by splatt (parallel)}

\pgfplotstabletypeset[
    /pgfplots/colormap={whiteblue}{rgb255(0cm)=(49,130,189); rgb255(.14cm)=(255,255,255); rgb255(1cm)=(222,45,38)},
    colorCell/.style={
        color cells={min=0,max=7}
    },
    itemCell/.style={
        string type,
        column name={},
    },
    col sep=comma,
    columns={permutation,qsort,splatt,1-sadilla,2-sadilla,k-sadilla,radix},
    columns/permutation/.style={itemCell},
   columns/qsort/.style={colorCell},
   columns/splatt/.style={colorCell},
   columns/1-sadilla/.style={colorCell},
   columns/2-sadilla/.style={colorCell},
   columns/k-sadilla/.style={colorCell},
   columns/radix/.style={colorCell},
]{
permutation,qsort,splatt,1-sadilla,2-sadilla,k-sadilla,radix
123,25.764,1.000,0.314,0.405,0.000,4.630
132,25.514,1.000,0.384,1.678,1.490,4.218
213,35.880,1.000,1.309,1.464,1.095,2.591
231,33.908,1.000,1.387,1.915,1.664,2.304
312,3.620,1.000,1.285,0.142,0.112,0.590
321,2.040,1.000,1.470,0.111,0.098,0.131
}
\end{figure*}
\newpage

\begin{figure*}[h]
    \centering
    \caption{chicago-crime-comm results normalized by splatt (parallel)}

\pgfplotstabletypeset[
    /pgfplots/colormap={whiteblue}{rgb255(0cm)=(49,130,189); rgb255(.14cm)=(255,255,255); rgb255(1cm)=(222,45,38)},
    colorCell/.style={
        color cells={min=0,max=7}
    },
    itemCell/.style={
        string type,
        column name={},
    },
    col sep=comma,
    columns={permutation,qsort,splatt,1-sadilla,2-sadilla,3-sadilla,k-sadilla,radix},
    columns/permutation/.style={itemCell},
   columns/qsort/.style={colorCell},
   columns/splatt/.style={colorCell},
   columns/1-sadilla/.style={colorCell},
   columns/2-sadilla/.style={colorCell},
   columns/3-sadilla/.style={colorCell},
   columns/k-sadilla/.style={colorCell},
   columns/radix/.style={colorCell},
]{
permutation,qsort,splatt,1-sadilla,2-sadilla,3-sadilla,k-sadilla,radix
1234,20.770,1.000,0.332,0.327,0.413,0.000,4.343
1243,22.909,1.000,0.480,0.443,1.901,1.626,4.054
1324,22.956,1.000,0.537,1.745,1.726,1.329,3.879
1342,23.227,1.000,0.533,1.708,3.015,2.461,3.717
1423,23.051,1.000,0.557,1.420,1.472,1.170,3.518
1432,26.742,1.000,0.578,1.695,2.972,2.594,3.561
2134,22.259,1.000,1.236,0.864,0.939,0.599,3.169
2143,19.480,1.000,1.219,0.774,1.720,1.432,2.388
2314,13.315,1.000,1.262,0.844,0.892,0.755,1.505
2341,13.161,1.000,1.301,0.945,1.108,0.966,1.289
2413,15.664,1.000,1.313,0.929,0.920,0.711,1.773
2431,15.227,1.000,1.361,1.031,1.167,0.994,1.320
3124,28.372,1.000,1.135,1.029,1.097,0.682,3.198
3142,23.350,1.000,1.195,0.927,2.083,1.774,2.575
3214,17.077,1.000,1.164,1.001,1.031,0.776,1.665
3241,19.917,1.000,1.387,1.430,1.474,1.293,1.714
3412,19.349,1.000,1.332,1.100,1.077,0.897,1.872
3421,19.859,1.000,1.341,1.364,1.356,1.166,1.531
4123,18.486,1.000,1.233,0.768,0.637,0.388,2.046
4132,10.972,1.000,1.310,0.568,0.779,0.635,1.121
4213,9.827,1.000,1.381,0.642,0.553,0.428,0.980
4231,8.491,1.000,1.292,0.812,0.624,0.503,0.688
4312,8.793,1.000,1.316,0.524,0.458,0.365,0.800
4321,9.135,1.000,1.374,0.806,0.622,0.521,0.687
}
\end{figure*}
\newpage

\begin{figure*}[h]
    \centering
    \caption{delicious-4d results normalized by splatt (parallel)}

\pgfplotstabletypeset[
    /pgfplots/colormap={whiteblue}{rgb255(0cm)=(49,130,189); rgb255(.14cm)=(255,255,255); rgb255(1cm)=(222,45,38)},
    colorCell/.style={
        color cells={min=0,max=7}
    },
    itemCell/.style={
        string type,
        column name={},
    },
    col sep=comma,
    columns={permutation,qsort,splatt,1-sadilla,2-sadilla,3-sadilla,k-sadilla,radix},
    columns/permutation/.style={itemCell},
   columns/qsort/.style={colorCell},
   columns/splatt/.style={colorCell},
   columns/1-sadilla/.style={colorCell},
   columns/2-sadilla/.style={colorCell},
   columns/3-sadilla/.style={colorCell},
   columns/k-sadilla/.style={colorCell},
   columns/radix/.style={colorCell},
]{
permutation,qsort,splatt,1-sadilla,2-sadilla,3-sadilla,k-sadilla,radix
1234,22.000,1.000,0.271,0.238,0.287,0.000,8.430
1243,22.767,1.000,0.285,0.244,2.224,1.978,8.094
1324,27.099,1.000,0.449,2.089,2.102,1.869,6.871
1342,28.567,1.000,0.479,2.169,3.674,3.451,6.935
1423,25.411,1.000,0.428,1.786,1.814,1.590,7.220
1432,26.426,1.000,0.461,1.880,3.770,3.517,7.038
2134,7.868,1.000,1.023,0.670,0.725,0.642,1.607
2143,7.846,1.000,1.037,0.694,1.047,1.075,1.542
2314,8.316,1.000,1.102,1.247,1.201,1.155,1.631
2341,8.172,1.000,1.009,1.205,1.331,1.286,1.506
2413,8.067,1.000,1.040,0.869,0.850,0.810,1.462
2431,7.816,1.000,1.035,0.885,1.281,1.229,1.419
3124,27.343,1.000,1.189,1.005,0.960,0.789,4.365
3142,29.794,1.000,1.218,1.247,2.603,2.408,4.503
3214,29.222,1.000,1.314,2.893,2.819,2.643,4.100
3241,29.220,1.000,1.289,3.015,3.396,3.189,3.800
3412,28.520,1.000,1.336,2.649,1.689,1.458,4.316
3421,27.021,1.000,1.240,3.697,3.239,2.983,3.670
4123,45.127,1.000,1.391,1.229,1.227,0.945,9.086
4132,42.007,1.000,1.296,1.136,3.938,3.620,7.744
4213,35.604,1.000,1.310,3.592,3.612,3.358,6.289
4231,36.139,1.000,1.420,3.672,5.605,5.266,6.032
4312,36.126,1.000,1.311,2.004,2.211,1.893,5.576
4321,38.518,1.000,1.345,2.171,4.624,4.482,5.195
}
\end{figure*}
\newpage

\begin{figure*}[h]
    \centering
    \caption{enron results normalized by splatt (parallel)}

\pgfplotstabletypeset[
    /pgfplots/colormap={whiteblue}{rgb255(0cm)=(49,130,189); rgb255(.14cm)=(255,255,255); rgb255(1cm)=(222,45,38)},
    colorCell/.style={
        color cells={min=0,max=7}
    },
    itemCell/.style={
        string type,
        column name={},
    },
    col sep=comma,
    columns={permutation,qsort,splatt,1-sadilla,2-sadilla,3-sadilla,k-sadilla,radix},
    columns/permutation/.style={itemCell},
   columns/qsort/.style={colorCell},
   columns/splatt/.style={colorCell},
   columns/1-sadilla/.style={colorCell},
   columns/2-sadilla/.style={colorCell},
   columns/3-sadilla/.style={colorCell},
   columns/k-sadilla/.style={colorCell},
   columns/radix/.style={colorCell},
]{
permutation,qsort,splatt,1-sadilla,2-sadilla,3-sadilla,k-sadilla,radix
1234,16.969,1.000,0.742,0.629,0.276,0.000,3.349
1243,11.956,1.000,1.100,0.999,0.810,0.615,1.617
1324,12.330,1.000,1.033,1.305,1.224,1.041,1.844
1342,11.909,1.000,1.062,1.415,2.020,1.923,1.638
1423,11.059,1.000,1.146,0.882,0.803,0.624,1.371
1432,10.938,1.000,1.132,1.278,1.463,1.392,1.326
2134,34.836,1.000,1.260,1.382,1.278,0.901,4.455
2143,35.464,1.000,1.397,1.638,2.394,2.079,3.504
2314,28.284,1.000,1.315,2.388,2.382,2.077,3.809
2341,29.326,1.000,1.297,2.580,3.548,3.143,4.008
2413,31.839,1.000,1.354,1.731,1.669,1.414,3.193
2431,33.919,1.000,1.347,1.968,3.217,2.911,3.681
3124,29.615,1.000,1.242,1.231,1.232,1.004,3.246
3142,27.789,1.000,1.276,1.363,2.335,2.162,2.840
3214,26.360,1.000,1.259,1.966,1.866,1.581,2.862
3241,27.015,1.000,1.284,2.129,2.570,2.320,3.005
3412,24.393,1.000,1.244,1.723,1.568,1.424,2.465
3421,26.190,1.000,1.273,1.984,2.469,2.148,2.850
4123,45.372,1.000,1.340,1.256,1.124,0.902,4.471
4132,35.291,1.000,1.386,1.236,2.476,2.251,3.449
4213,38.305,1.000,1.393,1.834,1.811,1.589,3.560
4231,38.506,1.000,1.345,1.955,3.266,3.023,3.735
4312,32.375,1.000,1.314,1.799,1.791,1.639,2.895
4321,34.322,1.000,1.307,2.009,2.865,2.568,3.202
}
\end{figure*}
\newpage

\begin{figure*}[h]
    \centering
    \caption{flickr-4d results normalized by splatt (parallel)}

\pgfplotstabletypeset[
    /pgfplots/colormap={whiteblue}{rgb255(0cm)=(49,130,189); rgb255(.14cm)=(255,255,255); rgb255(1cm)=(222,45,38)},
    colorCell/.style={
        color cells={min=0,max=7}
    },
    itemCell/.style={
        string type,
        column name={},
    },
    col sep=comma,
    columns={permutation,qsort,splatt,1-sadilla,2-sadilla,3-sadilla,k-sadilla,radix},
    columns/permutation/.style={itemCell},
   columns/qsort/.style={colorCell},
   columns/splatt/.style={colorCell},
   columns/1-sadilla/.style={colorCell},
   columns/2-sadilla/.style={colorCell},
   columns/3-sadilla/.style={colorCell},
   columns/k-sadilla/.style={colorCell},
   columns/radix/.style={colorCell},
]{
permutation,qsort,splatt,1-sadilla,2-sadilla,3-sadilla,k-sadilla,radix
1234,22.941,1.000,0.270,0.235,0.296,0.000,9.649
1243,23.131,1.000,0.283,0.248,2.089,1.817,8.613
1324,26.799,1.000,0.469,1.930,1.967,1.702,6.997
1342,27.102,1.000,0.522,1.903,3.378,3.172,7.131
1423,23.797,1.000,0.308,2.008,2.017,1.787,9.631
1432,24.546,1.000,0.472,1.871,3.568,3.379,7.582
2134,4.424,1.000,1.032,0.589,0.569,0.557,1.027
2143,4.506,1.000,1.028,0.581,0.766,0.748,1.022
2314,4.476,1.000,1.030,0.880,0.910,0.882,1.125
2341,4.402,1.000,1.024,0.909,1.044,1.016,1.141
2413,4.437,1.000,1.056,0.683,0.665,0.642,1.006
2431,4.282,1.000,1.025,0.682,0.874,0.847,0.972
3124,33.111,1.000,1.215,1.124,1.133,0.901,5.688
3142,31.626,1.000,1.171,1.087,2.496,2.285,5.693
3214,29.923,1.000,1.267,4.040,4.216,4.044,5.429
3241,29.428,1.000,1.257,4.102,4.626,4.482,5.076
3412,28.287,1.000,1.329,2.159,1.633,1.452,5.109
3421,29.004,1.000,1.396,2.645,4.596,4.398,5.258
4123,36.727,1.000,1.226,1.119,1.097,0.862,9.055
4132,37.873,1.000,1.330,1.155,3.032,2.996,7.724
4213,31.153,1.000,1.321,4.844,4.702,4.497,6.984
4231,31.862,1.000,1.316,4.814,7.083,6.821,7.577
4312,32.044,1.000,1.317,1.827,1.815,1.639,5.994
4321,31.654,1.000,1.281,1.818,5.266,5.091,5.758
}
\end{figure*}
\newpage

\begin{figure*}[h]
    \centering
    \caption{nips results normalized by splatt (parallel)}

\pgfplotstabletypeset[
    /pgfplots/colormap={whiteblue}{rgb255(0cm)=(49,130,189); rgb255(.14cm)=(255,255,255); rgb255(1cm)=(222,45,38)},
    colorCell/.style={
        color cells={min=0,max=7}
    },
    itemCell/.style={
        string type,
        column name={},
    },
    col sep=comma,
    columns={permutation,qsort,splatt,1-sadilla,2-sadilla,3-sadilla,k-sadilla,radix},
    columns/permutation/.style={itemCell},
   columns/qsort/.style={colorCell},
   columns/splatt/.style={colorCell},
   columns/1-sadilla/.style={colorCell},
   columns/2-sadilla/.style={colorCell},
   columns/3-sadilla/.style={colorCell},
   columns/k-sadilla/.style={colorCell},
   columns/radix/.style={colorCell},
]{
permutation,qsort,splatt,1-sadilla,2-sadilla,3-sadilla,k-sadilla,radix
1234,52.958,1.000,0.782,0.674,0.431,0.000,5.751
1243,50.012,1.000,0.836,0.759,3.067,2.428,5.775
1324,16.705,1.000,1.123,1.179,1.133,1.063,1.627
1342,16.793,1.000,1.125,1.174,1.930,1.906,1.714
1423,50.823,1.000,0.882,3.446,3.229,2.449,5.764
1432,12.269,1.000,1.642,1.439,1.440,1.303,1.097
2134,68.120,1.000,1.405,1.208,1.074,0.905,4.395
2143,66.320,1.000,1.594,1.388,3.340,2.671,4.624
2314,50.479,1.000,1.467,3.148,3.087,2.850,3.351
2341,50.548,1.000,1.333,3.169,3.320,3.002,3.609
2413,66.558,1.000,1.594,2.122,2.026,1.065,4.335
2431,56.642,1.000,1.589,1.902,4.109,3.222,3.806
3124,40.304,1.000,1.429,1.285,1.385,1.187,2.083
3142,39.818,1.000,1.360,1.362,2.649,2.265,2.033
3214,40.345,1.000,1.604,1.444,1.410,1.102,1.763
3241,39.291,1.000,1.387,1.494,1.589,1.423,1.710
3412,41.580,1.000,1.593,1.570,1.553,1.125,2.007
3421,38.933,1.000,1.423,1.525,1.742,1.469,1.887
4123,25.394,1.000,1.836,0.706,0.662,0.574,2.744
4132,14.841,1.000,1.919,1.438,1.119,1.044,1.388
4213,16.520,1.000,1.386,0.559,0.577,0.397,1.280
4231,17.579,1.000,1.563,0.668,1.392,1.296,1.366
4312,16.329,1.000,1.656,0.776,0.799,0.666,0.952
4321,15.332,1.000,1.349,0.775,0.835,0.661,0.773
}
\end{figure*}
\newpage

\begin{figure*}[h]
    \centering
    \caption{uber results normalized by splatt (parallel)}

\pgfplotstabletypeset[
    /pgfplots/colormap={whiteblue}{rgb255(0cm)=(49,130,189); rgb255(.14cm)=(255,255,255); rgb255(1cm)=(222,45,38)},
    colorCell/.style={
        color cells={min=0,max=7}
    },
    itemCell/.style={
        string type,
        column name={},
    },
    col sep=comma,
    columns={permutation,qsort,splatt,1-sadilla,2-sadilla,3-sadilla,k-sadilla,radix},
    columns/permutation/.style={itemCell},
   columns/qsort/.style={colorCell},
   columns/splatt/.style={colorCell},
   columns/1-sadilla/.style={colorCell},
   columns/2-sadilla/.style={colorCell},
   columns/3-sadilla/.style={colorCell},
   columns/k-sadilla/.style={colorCell},
   columns/radix/.style={colorCell},
]{
permutation,qsort,splatt,1-sadilla,2-sadilla,3-sadilla,k-sadilla,radix
1234,54.221,1.000,0.932,0.656,0.554,0.000,3.822
1243,44.925,1.000,1.098,0.787,1.768,1.613,2.056
1324,39.205,1.000,1.119,2.094,1.963,1.611,2.162
1342,41.625,1.000,1.089,2.093,3.221,3.010,1.995
1423,40.578,1.000,1.126,1.563,1.565,1.210,1.673
1432,45.749,1.000,1.134,2.085,2.774,2.630,1.726
2134,37.227,1.000,1.551,0.709,0.626,0.286,1.803
2143,30.164,1.000,2.024,0.687,1.167,1.178,1.146
2314,18.643,1.000,1.509,0.695,0.564,0.515,0.909
2341,18.121,1.000,1.511,1.044,0.812,0.692,0.717
2413,18.277,1.000,1.467,0.563,0.415,0.376,0.681
2431,19.511,1.000,1.416,0.941,0.779,0.589,0.606
3124,59.819,1.000,1.713,2.092,1.563,0.771,2.546
3142,46.188,1.000,1.681,2.201,2.041,2.067,1.742
3214,40.564,1.000,1.634,1.466,0.940,0.670,1.765
3241,38.006,1.000,1.567,2.443,1.608,1.353,1.348
3412,37.550,1.000,1.637,1.040,0.864,0.829,1.259
3421,38.199,1.000,1.660,1.365,1.262,0.972,1.170
4123,86.704,1.000,1.951,3.063,1.905,0.829,3.083
4132,63.178,1.000,1.940,4.086,2.564,2.398,1.996
4213,54.575,1.000,1.728,2.538,1.225,0.765,2.002
4231,52.889,1.000,1.818,4.867,1.983,1.516,1.583
4312,49.200,1.000,1.649,1.506,1.037,0.929,1.318
4321,50.713,1.000,1.761,2.581,1.574,1.093,1.295
}
\end{figure*}
\newpage

\begin{figure*}[h]
    \centering
    \caption{lbnl-network results normalized by splatt (parallel) (1)}

\pgfplotstabletypeset[
    /pgfplots/colormap={whiteblue}{rgb255(0cm)=(49,130,189); rgb255(.14cm)=(255,255,255); rgb255(1cm)=(222,45,38)},
    colorCell/.style={
        color cells={min=0,max=7}
    },
    itemCell/.style={
        string type,
        column name={},
    },
    col sep=comma,
    columns={permutation,qsort,splatt,1-sadilla,2-sadilla,3-sadilla,4-sadilla,k-sadilla,radix},
    columns/permutation/.style={itemCell},
   columns/qsort/.style={colorCell},
   columns/splatt/.style={colorCell},
   columns/1-sadilla/.style={colorCell},
   columns/2-sadilla/.style={colorCell},
   columns/3-sadilla/.style={colorCell},
   columns/4-sadilla/.style={colorCell},
   columns/k-sadilla/.style={colorCell},
   columns/radix/.style={colorCell},
]{
permutation,qsort,splatt,1-sadilla,2-sadilla,3-sadilla,4-sadilla,k-sadilla,radix
12345,5.597,1.000,0.897,0.708,0.611,0.374,0.000,2.571
12354,4.121,1.000,0.835,0.695,0.479,1.202,1.059,1.879
12435,3.965,1.000,0.921,0.734,0.669,0.659,0.342,1.546
12453,3.456,1.000,0.742,0.596,0.535,1.391,1.356,1.455
12534,3.662,1.000,0.745,0.509,1.068,1.130,0.977,1.501
12543,4.186,1.000,0.902,0.649,1.158,1.658,1.623,1.865
13245,4.507,1.000,0.965,0.990,0.915,0.723,0.435,1.757
13254,4.332,1.000,1.090,1.263,0.918,1.442,1.497,1.855
13425,4.558,1.000,0.961,1.206,1.243,1.093,0.803,1.925
13452,4.132,1.000,0.725,1.076,1.097,2.026,2.130,1.775
13524,3.883,1.000,0.870,0.855,1.616,1.583,1.617,1.660
13542,4.601,1.000,0.834,0.993,1.682,2.254,2.237,1.955
14235,3.972,1.000,0.783,0.714,0.671,0.640,0.346,1.578
14253,4.344,1.000,0.805,0.753,0.683,1.507,1.392,1.816
14325,4.146,1.000,0.774,0.772,0.971,1.008,0.698,1.521
14352,4.537,1.000,0.819,0.901,1.099,2.096,2.057,1.739
14523,5.552,1.000,0.882,0.887,1.976,2.019,2.155,2.182
14532,5.739,1.000,0.926,0.912,2.313,2.712,2.729,2.229
15234,5.702,1.000,0.721,1.490,1.606,1.631,1.415,2.247
15243,5.966,1.000,0.967,1.551,1.592,1.848,2.016,2.352
15324,5.615,1.000,0.793,1.642,2.162,2.120,2.047,2.302
15342,6.282,1.000,1.013,1.895,2.524,2.953,2.834,2.425
15423,6.278,1.000,0.895,1.846,2.133,2.391,2.155,2.570
15432,5.652,1.000,0.740,1.572,2.016,2.643,2.478,2.190
}
\end{figure*}
\newpage

\begin{figure*}[h]
    \centering
    \caption{lbnl-network results normalized by splatt (parallel) (2)}

\pgfplotstabletypeset[
    /pgfplots/colormap={whiteblue}{rgb255(0cm)=(49,130,189); rgb255(.14cm)=(255,255,255); rgb255(1cm)=(222,45,38)},
    colorCell/.style={
        color cells={min=0,max=7}
    },
    itemCell/.style={
        string type,
        column name={},
    },
    col sep=comma,
    columns={permutation,qsort,splatt,1-sadilla,2-sadilla,3-sadilla,4-sadilla,k-sadilla,radix},
    columns/permutation/.style={itemCell},
   columns/qsort/.style={colorCell},
   columns/splatt/.style={colorCell},
   columns/1-sadilla/.style={colorCell},
   columns/2-sadilla/.style={colorCell},
   columns/3-sadilla/.style={colorCell},
   columns/4-sadilla/.style={colorCell},
   columns/k-sadilla/.style={colorCell},
   columns/radix/.style={colorCell},
]{
permutation,qsort,splatt,1-sadilla,2-sadilla,3-sadilla,4-sadilla,k-sadilla,radix
21345,4.856,1.000,1.144,0.846,0.480,0.374,0.050,1.741
21354,3.734,1.000,1.192,0.808,0.433,0.817,0.792,1.241
21435,3.805,1.000,1.024,0.831,0.508,0.502,0.300,1.316
21453,3.396,1.000,0.885,0.747,0.448,1.012,0.999,1.022
21534,3.888,1.000,1.102,0.755,0.908,0.826,0.791,1.262
21543,3.763,1.000,1.224,0.876,0.881,1.158,1.086,1.248
23145,3.289,1.000,1.113,0.403,0.310,0.242,0.072,1.098
23154,3.107,1.000,1.210,0.581,0.349,0.709,0.644,1.106
23415,2.980,1.000,1.013,0.467,0.309,0.255,0.136,0.942
23451,3.087,1.000,1.163,0.512,0.283,0.980,0.959,0.988
23514,3.226,1.000,1.011,0.482,1.056,1.156,1.005,1.114
23541,3.146,1.000,1.033,0.472,1.120,1.019,1.020,1.058
24135,3.196,1.000,1.096,0.210,0.193,0.171,0.074,1.077
24153,3.351,1.000,1.173,0.202,0.176,0.705,0.644,1.090
24315,3.258,1.000,1.072,0.263,0.307,0.222,0.141,1.078
24351,3.298,1.000,1.124,0.237,0.239,1.047,1.089,1.052
24513,3.654,1.000,1.232,0.225,1.180,1.123,1.063,1.185
24531,3.560,1.000,1.170,0.215,1.066,1.104,1.082,1.157
25134,3.182,1.000,0.904,0.901,0.958,1.010,0.924,1.020
25143,3.674,1.000,1.151,1.031,1.102,1.170,1.200,1.164
25314,4.013,1.000,1.292,1.171,1.330,1.302,1.249,1.369
25341,3.197,1.000,0.790,0.995,1.065,1.033,0.961,1.132
25413,3.886,1.000,1.218,1.179,1.321,1.214,1.202,1.320
25431,4.021,1.000,1.087,1.090,1.237,1.153,1.268,1.302
}
\end{figure*}
\newpage

\begin{figure*}[h]
    \centering
    \caption{lbnl-network results normalized by splatt (parallel) (3)}

\pgfplotstabletypeset[
    /pgfplots/colormap={whiteblue}{rgb255(0cm)=(49,130,189); rgb255(.14cm)=(255,255,255); rgb255(1cm)=(222,45,38)},
    colorCell/.style={
        color cells={min=0,max=7}
    },
    itemCell/.style={
        string type,
        column name={},
    },
    col sep=comma,
    columns={permutation,qsort,splatt,1-sadilla,2-sadilla,3-sadilla,4-sadilla,k-sadilla,radix},
    columns/permutation/.style={itemCell},
   columns/qsort/.style={colorCell},
   columns/splatt/.style={colorCell},
   columns/1-sadilla/.style={colorCell},
   columns/2-sadilla/.style={colorCell},
   columns/3-sadilla/.style={colorCell},
   columns/4-sadilla/.style={colorCell},
   columns/k-sadilla/.style={colorCell},
   columns/radix/.style={colorCell},
]{
permutation,qsort,splatt,1-sadilla,2-sadilla,3-sadilla,4-sadilla,k-sadilla,radix
31245,9.724,1.000,0.949,0.827,0.772,0.613,0.101,3.210
31254,6.557,1.000,0.912,0.902,0.574,1.344,1.448,2.049
31425,6.255,1.000,1.028,0.838,0.819,0.679,0.442,2.064
31452,7.261,1.000,1.019,0.879,1.016,1.914,2.022,2.448
31524,6.405,1.000,0.846,0.605,1.375,1.266,1.251,2.205
31542,6.530,1.000,0.956,0.617,1.473,1.636,1.717,2.270
32145,5.888,1.000,0.860,0.576,0.498,0.427,0.133,1.849
32154,6.083,1.000,1.034,0.789,0.593,1.240,1.295,1.978
32415,5.770,1.000,1.016,0.823,0.671,0.485,0.258,1.982
32451,6.622,1.000,1.045,0.749,0.600,1.953,1.989,2.199
32514,5.971,1.000,0.840,0.640,2.156,2.054,1.798,1.933
32541,6.060,1.000,0.974,0.667,1.787,1.995,1.805,2.036
34125,6.241,1.000,1.124,0.633,0.558,0.408,0.136,1.880
34152,6.556,1.000,1.047,0.770,0.637,1.281,1.258,2.087
34215,4.760,1.000,0.815,0.782,0.481,0.413,0.225,1.585
34251,5.216,1.000,0.990,0.835,0.458,1.737,1.583,1.757
34512,6.446,1.000,1.054,0.841,2.135,2.156,1.881,2.099
34521,6.011,1.000,0.826,0.756,2.090,1.789,1.852,2.024
35124,8.430,1.000,1.053,2.386,2.600,2.355,2.604,2.680
35142,9.340,1.000,1.196,3.024,2.782,3.430,2.972,2.993
35214,8.040,1.000,1.039,2.374,2.475,2.477,2.465,2.930
35241,8.498,1.000,1.076,2.682,2.750,2.548,2.679,2.624
35412,8.657,1.000,1.000,2.443,2.766,2.753,2.674,2.857
35421,9.106,1.000,1.037,2.608,3.004,2.784,2.859,3.025
}
\end{figure*}
\newpage

\begin{figure*}[h]
    \centering
    \caption{lbnl-network results normalized by splatt (parallel) (4)}

\pgfplotstabletypeset[
    /pgfplots/colormap={whiteblue}{rgb255(0cm)=(49,130,189); rgb255(.14cm)=(255,255,255); rgb255(1cm)=(222,45,38)},
    colorCell/.style={
        color cells={min=0,max=7}
    },
    itemCell/.style={
        string type,
        column name={},
    },
    col sep=comma,
    columns={permutation,qsort,splatt,1-sadilla,2-sadilla,3-sadilla,4-sadilla,k-sadilla,radix},
    columns/permutation/.style={itemCell},
   columns/qsort/.style={colorCell},
   columns/splatt/.style={colorCell},
   columns/1-sadilla/.style={colorCell},
   columns/2-sadilla/.style={colorCell},
   columns/3-sadilla/.style={colorCell},
   columns/4-sadilla/.style={colorCell},
   columns/k-sadilla/.style={colorCell},
   columns/radix/.style={colorCell},
]{
permutation,qsort,splatt,1-sadilla,2-sadilla,3-sadilla,4-sadilla,k-sadilla,radix
41235,6.613,1.000,1.244,0.618,0.395,0.336,0.077,2.096
41253,6.199,1.000,1.042,0.532,0.325,1.249,1.235,1.975
41325,4.173,1.000,0.992,0.567,0.455,0.398,0.255,1.266
41352,4.570,1.000,0.990,0.624,0.563,1.135,1.175,1.496
41523,5.022,1.000,1.026,0.541,1.065,1.058,1.021,1.709
41532,5.124,1.000,0.996,0.554,1.090,1.237,1.277,1.669
42135,3.667,1.000,1.129,0.210,0.199,0.175,0.080,1.093
42153,3.883,1.000,0.971,0.208,0.171,0.728,0.733,1.209
42315,3.056,1.000,0.844,0.210,0.260,0.195,0.181,0.982
42351,3.985,1.000,0.995,0.269,0.267,1.231,1.213,1.284
42513,3.682,1.000,0.886,0.194,1.196,1.269,1.139,1.167
42531,3.741,1.000,0.818,0.196,1.199,1.174,1.217,1.151
43125,3.284,1.000,1.068,0.558,0.308,0.242,0.071,1.013
43152,2.874,1.000,0.855,0.495,0.300,0.608,0.552,0.917
43215,3.174,1.000,0.980,0.745,0.302,0.283,0.135,1.034
43251,3.879,1.000,1.036,0.810,0.336,1.210,1.229,1.154
43512,3.643,1.000,1.236,0.709,1.068,1.241,1.110,1.180
43521,3.633,1.000,1.240,0.633,1.170,1.072,1.043,1.175
45123,4.466,1.000,0.878,1.226,1.234,1.217,1.186,1.339
45132,4.283,1.000,1.134,1.223,1.220,1.507,1.479,1.362
45213,4.374,1.000,1.183,1.271,1.384,1.407,1.302,1.493
45231,4.724,1.000,1.096,1.300,1.370,1.530,1.391,1.570
45312,4.397,1.000,0.981,1.306,1.322,1.369,1.254,1.319
45321,4.273,1.000,0.886,1.292,1.394,1.279,1.386,1.362
}
\end{figure*}
\newpage

\begin{figure*}[h]
    \centering
    \caption{lbnl-network results normalized by splatt (parallel) (5)}

\pgfplotstabletypeset[
    /pgfplots/colormap={whiteblue}{rgb255(0cm)=(49,130,189); rgb255(.14cm)=(255,255,255); rgb255(1cm)=(222,45,38)},
    colorCell/.style={
        color cells={min=0,max=7}
    },
    itemCell/.style={
        string type,
        column name={},
    },
    col sep=comma,
    columns={permutation,qsort,splatt,1-sadilla,2-sadilla,3-sadilla,4-sadilla,k-sadilla,radix},
    columns/permutation/.style={itemCell},
   columns/qsort/.style={colorCell},
   columns/splatt/.style={colorCell},
   columns/1-sadilla/.style={colorCell},
   columns/2-sadilla/.style={colorCell},
   columns/3-sadilla/.style={colorCell},
   columns/4-sadilla/.style={colorCell},
   columns/k-sadilla/.style={colorCell},
   columns/radix/.style={colorCell},
]{
permutation,qsort,splatt,1-sadilla,2-sadilla,3-sadilla,4-sadilla,k-sadilla,radix
51234,1.879,1.000,0.957,0.531,0.495,0.496,0.572,0.611
51243,1.858,1.000,1.016,0.507,0.500,0.597,0.613,0.581
51324,1.863,1.000,1.004,0.501,0.605,0.607,0.578,0.573
51342,1.857,1.000,1.027,0.491,0.624,0.678,0.662,0.600
51423,1.889,1.000,1.008,0.497,0.586,0.595,0.578,0.588
51432,1.929,1.000,1.039,0.556,0.609,0.692,0.686,0.592
52134,1.952,1.000,1.037,0.568,0.547,0.557,0.531,0.613
52143,1.882,1.000,0.996,0.548,0.552,0.602,0.570,0.570
52314,1.911,1.000,1.051,0.538,0.574,0.570,0.559,0.635
52341,1.851,1.000,1.018,0.539,0.549,0.587,0.564,0.603
52413,1.869,1.000,0.976,0.543,0.548,0.552,0.571,0.576
52431,1.884,1.000,0.976,0.548,0.655,0.575,0.585,0.576
53124,1.915,1.000,1.060,0.554,0.540,0.529,0.528,0.615
53142,1.924,1.000,1.041,0.553,0.551,0.661,0.614,0.628
53214,1.865,1.000,0.994,0.564,0.555,0.583,0.586,0.615
53241,1.876,1.000,1.010,0.539,0.570,0.584,0.555,0.559
53412,1.926,1.000,1.035,0.549,0.558,0.563,0.603,0.610
53421,1.793,1.000,0.993,0.512,0.549,0.541,0.551,0.579
54123,1.941,1.000,1.005,0.567,0.551,0.534,0.566,0.615
54132,1.912,1.000,1.058,0.524,0.539,0.621,0.600,0.638
54213,1.874,1.000,0.983,0.532,0.630,0.561,0.546,0.579
54231,1.872,1.000,0.994,0.551,0.606,0.610,0.587,0.580
54312,1.918,1.000,1.044,0.524,0.552,0.565,0.588,0.587
54321,1.864,1.000,1.028,0.532,0.576,0.604,0.570,0.561
}
\end{figure*}
\newpage

\begin{figure*}[h]
    \centering
    \caption{vast-2015-mc1-5d results normalized by splatt (parallel) (1)}

\pgfplotstabletypeset[
    /pgfplots/colormap={whiteblue}{rgb255(0cm)=(49,130,189); rgb255(.14cm)=(255,255,255); rgb255(1cm)=(222,45,38)},
    colorCell/.style={
        color cells={min=0,max=7}
    },
    itemCell/.style={
        string type,
        column name={},
    },
    col sep=comma,
    columns={permutation,qsort,splatt,1-sadilla,2-sadilla,3-sadilla,4-sadilla,k-sadilla,radix},
    columns/permutation/.style={itemCell},
   columns/qsort/.style={colorCell},
   columns/splatt/.style={colorCell},
   columns/1-sadilla/.style={colorCell},
   columns/2-sadilla/.style={colorCell},
   columns/3-sadilla/.style={colorCell},
   columns/4-sadilla/.style={colorCell},
   columns/k-sadilla/.style={colorCell},
   columns/radix/.style={colorCell},
]{
permutation,qsort,splatt,1-sadilla,2-sadilla,3-sadilla,4-sadilla,k-sadilla,radix
12345,19.992,1.000,0.290,0.267,0.273,0.276,0.000,6.291
12354,19.863,1.000,0.283,0.279,0.272,2.170,1.896,6.238
12435,19.583,1.000,0.279,0.271,2.142,2.169,1.899,6.141
12453,19.431,1.000,0.282,0.263,2.172,4.136,3.746,6.300
12534,19.503,1.000,0.285,0.266,2.162,2.202,1.855,6.245
12543,19.488,1.000,0.282,0.263,2.142,4.041,3.762,6.067
13245,19.226,1.000,0.332,1.414,1.521,1.519,1.224,5.915
13254,19.127,1.000,0.331,1.470,1.541,3.402,3.055,5.764
13425,21.281,1.000,0.438,1.314,2.359,2.433,2.177,4.407
13452,21.230,1.000,0.453,1.273,2.324,3.612,3.344,4.289
13524,22.133,1.000,0.444,1.300,2.418,2.431,2.210,4.376
13542,21.972,1.000,0.445,1.278,2.293,3.574,3.331,4.400
14235,21.988,1.000,0.407,1.421,1.465,1.508,1.233,4.647
14253,23.114,1.000,0.410,1.463,1.483,3.061,2.801,4.731
14325,22.743,1.000,0.417,1.438,2.488,2.567,2.296,4.677
14352,22.654,1.000,0.424,1.407,2.429,3.746,3.568,4.535
14523,22.303,1.000,0.420,1.398,2.710,2.653,2.475,4.617
14532,22.422,1.000,0.421,1.415,2.717,3.688,3.583,4.599
15234,23.328,1.000,0.413,1.453,1.493,1.510,1.244,4.736
15243,23.328,1.000,0.413,1.435,1.487,3.104,2.794,4.896
15324,22.049,1.000,0.416,1.428,2.473,2.490,2.330,4.682
15342,22.869,1.000,0.434,1.478,2.505,3.829,3.551,4.605
15423,23.209,1.000,0.433,1.484,2.844,2.842,2.483,4.680
15432,22.598,1.000,0.420,1.414,2.673,3.769,3.495,4.576
}
\end{figure*}
\newpage

\begin{figure*}[h]
    \centering
    \caption{vast-2015-mc1-5d results normalized by splatt (parallel) (2)}

\pgfplotstabletypeset[
    /pgfplots/colormap={whiteblue}{rgb255(0cm)=(49,130,189); rgb255(.14cm)=(255,255,255); rgb255(1cm)=(222,45,38)},
    colorCell/.style={
        color cells={min=0,max=7}
    },
    itemCell/.style={
        string type,
        column name={},
    },
    col sep=comma,
    columns={permutation,qsort,splatt,1-sadilla,2-sadilla,3-sadilla,4-sadilla,k-sadilla,radix},
    columns/permutation/.style={itemCell},
   columns/qsort/.style={colorCell},
   columns/splatt/.style={colorCell},
   columns/1-sadilla/.style={colorCell},
   columns/2-sadilla/.style={colorCell},
   columns/3-sadilla/.style={colorCell},
   columns/4-sadilla/.style={colorCell},
   columns/k-sadilla/.style={colorCell},
   columns/radix/.style={colorCell},
]{
permutation,qsort,splatt,1-sadilla,2-sadilla,3-sadilla,4-sadilla,k-sadilla,radix
21345,27.031,1.000,1.169,1.224,1.203,1.237,1.007,4.270
21354,27.350,1.000,1.171,1.240,1.211,2.894,2.664,4.201
21435,27.900,1.000,1.202,1.226,2.893,2.944,2.851,4.167
21453,26.407,1.000,1.156,1.220,2.895,4.491,4.299,4.082
21534,26.300,1.000,1.158,1.257,2.951,2.916,2.674,4.095
21543,28.645,1.000,1.208,1.300,2.955,4.714,4.527,4.321
23145,25.021,1.000,1.163,1.637,1.716,1.666,1.605,3.698
23154,24.948,1.000,1.158,1.655,1.665,3.114,2.965,3.726
23415,23.571,1.000,1.150,1.689,1.938,1.966,1.800,3.385
23451,23.795,1.000,1.119,1.693,1.971,2.432,2.239,2.741
23514,22.625,1.000,1.115,1.659,1.919,1.936,1.775,3.215
23541,23.323,1.000,1.135,1.661,1.933,2.454,2.278,2.714
24135,23.249,1.000,1.125,1.520,1.562,1.590,1.370,3.385
24153,24.956,1.000,1.150,1.564,1.581,2.948,2.798,3.649
24315,23.719,1.000,1.152,1.570,2.051,2.067,1.901,3.489
24351,24.660,1.000,1.156,1.657,2.107,2.639,2.397,2.945
24513,23.568,1.000,1.140,1.631,2.084,2.049,1.901,2.907
24531,23.503,1.000,1.140,1.585,2.094,2.515,2.423,2.893
25134,23.424,1.000,1.144,1.554,1.584,1.564,1.370,3.502
25143,24.984,1.000,1.140,1.544,1.566,3.006,2.802,3.492
25314,22.977,1.000,1.146,1.549,2.064,2.080,1.914,3.522
25341,23.517,1.000,1.136,1.648,2.075,2.599,2.368,2.834
25413,24.154,1.000,1.181,1.652,2.151,2.145,1.963,3.024
25431,24.384,1.000,1.134,1.639,2.113,2.623,2.446,2.923
}
\end{figure*}
\newpage

\begin{figure*}[h]
    \centering
    \caption{vast-2015-mc1-5d results normalized by splatt (parallel) (3)}

\pgfplotstabletypeset[
    /pgfplots/colormap={whiteblue}{rgb255(0cm)=(49,130,189); rgb255(.14cm)=(255,255,255); rgb255(1cm)=(222,45,38)},
    colorCell/.style={
        color cells={min=0,max=7}
    },
    itemCell/.style={
        string type,
        column name={},
    },
    col sep=comma,
    columns={permutation,qsort,splatt,1-sadilla,2-sadilla,3-sadilla,4-sadilla,k-sadilla,radix},
    columns/permutation/.style={itemCell},
   columns/qsort/.style={colorCell},
   columns/splatt/.style={colorCell},
   columns/1-sadilla/.style={colorCell},
   columns/2-sadilla/.style={colorCell},
   columns/3-sadilla/.style={colorCell},
   columns/4-sadilla/.style={colorCell},
   columns/k-sadilla/.style={colorCell},
   columns/radix/.style={colorCell},
]{
permutation,qsort,splatt,1-sadilla,2-sadilla,3-sadilla,4-sadilla,k-sadilla,radix
31245,2.972,1.000,1.065,0.146,0.157,0.158,0.117,0.898
31254,3.010,1.000,1.057,0.145,0.155,0.434,0.387,0.889
31425,1.837,1.000,0.993,0.074,0.170,0.171,0.147,0.375
31452,1.770,1.000,0.988,0.072,0.161,0.258,0.245,0.340
31524,1.877,1.000,1.002,0.075,0.174,0.173,0.153,0.373
31542,1.787,1.000,1.002,0.074,0.163,0.266,0.244,0.346
32145,1.351,1.000,0.998,0.093,0.093,0.092,0.082,0.207
32154,1.412,1.000,0.996,0.094,0.093,0.169,0.155,0.205
32415,1.305,1.000,0.992,0.097,0.111,0.115,0.105,0.201
32451,1.361,1.000,0.997,0.098,0.113,0.140,0.130,0.154
32514,1.323,1.000,0.996,0.099,0.115,0.115,0.107,0.193
32541,1.315,1.000,0.996,0.095,0.112,0.135,0.125,0.157
34125,1.415,1.000,0.992,0.067,0.064,0.064,0.052,0.198
34152,1.369,1.000,0.988,0.071,0.060,0.117,0.109,0.174
34215,1.426,1.000,0.995,0.092,0.101,0.102,0.093,0.184
34251,1.426,1.000,0.994,0.089,0.100,0.121,0.114,0.138
34512,1.278,1.000,0.994,0.092,0.077,0.080,0.070,0.172
34521,1.450,1.000,0.998,0.095,0.088,0.119,0.111,0.136
35124,1.406,1.000,0.990,0.063,0.062,0.063,0.052,0.193
35142,1.332,1.000,0.995,0.066,0.062,0.114,0.109,0.175
35214,1.438,1.000,0.997,0.086,0.103,0.104,0.094,0.177
35241,1.449,1.000,0.995,0.082,0.101,0.123,0.113,0.141
35412,1.308,1.000,0.992,0.082,0.079,0.078,0.071,0.171
35421,1.413,1.000,0.997,0.083,0.089,0.119,0.111,0.132
}
\end{figure*}
\newpage

\begin{figure*}[h]
    \centering
    \caption{vast-2015-mc1-5d results normalized by splatt (parallel) (4)}

\pgfplotstabletypeset[
    /pgfplots/colormap={whiteblue}{rgb255(0cm)=(49,130,189); rgb255(.14cm)=(255,255,255); rgb255(1cm)=(222,45,38)},
    colorCell/.style={
        color cells={min=0,max=7}
    },
    itemCell/.style={
        string type,
        column name={},
    },
    col sep=comma,
    columns={permutation,qsort,splatt,1-sadilla,2-sadilla,3-sadilla,4-sadilla,k-sadilla,radix},
    columns/permutation/.style={itemCell},
   columns/qsort/.style={colorCell},
   columns/splatt/.style={colorCell},
   columns/1-sadilla/.style={colorCell},
   columns/2-sadilla/.style={colorCell},
   columns/3-sadilla/.style={colorCell},
   columns/4-sadilla/.style={colorCell},
   columns/k-sadilla/.style={colorCell},
   columns/radix/.style={colorCell},
]{
permutation,qsort,splatt,1-sadilla,2-sadilla,3-sadilla,4-sadilla,k-sadilla,radix
41235,41.458,1.000,1.116,1.055,1.057,1.058,0.789,5.827
41253,41.012,1.000,1.127,1.083,1.078,3.043,2.710,5.975
41325,40.896,1.000,1.138,1.028,2.396,2.537,2.156,5.796
41352,36.312,1.000,1.117,0.947,2.194,3.784,3.464,4.758
41523,36.887,1.000,1.105,0.941,2.585,2.540,2.327,4.851
41532,36.840,1.000,1.119,0.957,2.553,3.760,3.467,4.826
42135,25.266,1.000,1.046,1.357,1.319,1.328,1.190,3.050
42153,24.665,1.000,1.060,1.342,1.391,2.603,2.415,3.125
42315,25.016,1.000,1.077,1.391,1.793,1.785,1.650,3.159
42351,24.312,1.000,1.077,1.481,1.845,2.180,2.037,2.504
42513,25.457,1.000,1.073,1.457,1.848,1.858,1.631,2.642
42531,24.094,1.000,1.052,1.420,1.720,2.211,2.044,2.510
43125,30.434,1.000,1.063,1.502,1.340,1.348,1.147,4.392
43152,28.597,1.000,1.055,1.458,1.289,2.566,2.337,3.736
43215,22.534,1.000,1.055,1.472,1.605,1.634,1.479,2.842
43251,23.441,1.000,1.050,1.467,1.670,2.025,1.839,2.247
43512,21.040,1.000,1.039,1.490,1.260,1.263,1.093,2.782
43521,23.264,1.000,1.068,1.527,1.458,1.910,1.804,2.219
45123,22.306,1.000,1.058,0.934,0.973,0.973,0.825,3.060
45132,22.907,1.000,1.049,0.964,0.986,1.722,1.537,3.042
45213,24.261,1.000,1.072,1.120,1.631,1.631,1.536,2.353
45231,23.404,1.000,1.053,1.115,1.630,2.070,1.922,2.288
45312,21.808,1.000,1.042,0.978,1.319,1.340,1.184,2.911
45321,22.455,1.000,1.057,1.152,1.442,1.936,1.795,2.217
}
\end{figure*}
\newpage

\begin{figure*}[h]
    \centering
    \caption{vast-2015-mc1-5d results normalized by splatt (parallel) (5)}

\pgfplotstabletypeset[
    /pgfplots/colormap={whiteblue}{rgb255(0cm)=(49,130,189); rgb255(.14cm)=(255,255,255); rgb255(1cm)=(222,45,38)},
    colorCell/.style={
        color cells={min=0,max=7}
    },
    itemCell/.style={
        string type,
        column name={},
    },
    col sep=comma,
    columns={permutation,qsort,splatt,1-sadilla,2-sadilla,3-sadilla,4-sadilla,k-sadilla,radix},
    columns/permutation/.style={itemCell},
   columns/qsort/.style={colorCell},
   columns/splatt/.style={colorCell},
   columns/1-sadilla/.style={colorCell},
   columns/2-sadilla/.style={colorCell},
   columns/3-sadilla/.style={colorCell},
   columns/4-sadilla/.style={colorCell},
   columns/k-sadilla/.style={colorCell},
   columns/radix/.style={colorCell},
]{
permutation,qsort,splatt,1-sadilla,2-sadilla,3-sadilla,4-sadilla,k-sadilla,radix
51234,42.158,1.000,1.169,1.068,1.074,1.074,0.782,5.946
51243,41.325,1.000,1.146,1.077,1.072,3.004,2.649,6.000
51324,41.203,1.000,1.131,1.050,2.411,2.413,2.119,5.872
51342,37.043,1.000,1.097,0.998,2.193,3.803,3.569,4.843
51423,37.649,1.000,1.089,0.993,2.587,2.633,2.308,4.843
51432,38.030,1.000,1.139,1.014,2.650,4.004,3.590,4.968
52134,25.568,1.000,1.074,1.326,1.352,1.351,1.197,3.063
52143,24.291,1.000,1.058,1.324,1.336,2.521,2.473,3.013
52314,24.479,1.000,1.051,1.315,1.751,1.723,1.637,3.005
52341,24.320,1.000,1.056,1.430,1.803,2.204,2.085,2.495
52413,25.349,1.000,1.080,1.420,1.801,1.805,1.652,2.577
52431,24.514,1.000,1.059,1.417,1.787,2.208,2.093,2.484
53124,30.369,1.000,1.046,1.463,1.337,1.383,1.125,4.284
53142,28.808,1.000,1.058,1.463,1.308,2.606,2.356,3.735
53214,23.357,1.000,1.053,1.488,1.622,1.645,1.507,2.831
53241,23.362,1.000,1.038,1.452,1.627,1.968,1.867,2.259
53412,21.750,1.000,1.063,1.550,1.327,1.305,1.147,2.783
53421,22.620,1.000,1.050,1.464,1.376,1.837,1.699,2.103
54123,22.996,1.000,1.057,0.954,0.992,0.977,0.831,3.065
54132,22.550,1.000,1.055,0.954,0.976,1.695,1.539,2.974
54213,23.990,1.000,1.062,1.094,1.650,1.607,1.474,2.289
54231,23.141,1.000,1.063,1.129,1.655,2.018,1.859,2.296
54312,22.171,1.000,1.047,0.962,1.336,1.355,1.165,2.943
54321,23.055,1.000,1.048,1.125,1.488,1.974,1.838,2.216
}
\end{figure*}
\clearpage
\newpage

\end{document}